\newcommand{\algofont}[1]{\textnormal{\selectfont\sffamily#1}}
\DeclareMathOperator{\eval}{val}
\DeclareMathOperator{\weight}{w}
\newcommand{\makeset}[2]{\ensuremath{ \{ #1 \: | \: #2 \} }}
\newcommand{\letters}{\ensuremath{ \Sigma }}
\newcommand{\mycount}{\textnormal{\textsl{count}}}
\newcommand{\mychoice}{\textnormal{\textsl{choice}}}
\newcommand{\mytext}{\ensuremath{t}}
\newcommand{\mytexti}{\ensuremath{t'}}
\newcommand{\pattern}{\ensuremath{p}}
\newcommand{\patterni}{\ensuremath{p'}}
\newcommand{\textc}{\ensuremath{X_{n+m}}}
\newcommand{\leftl}[1][(a)]{\ensuremath{\textnormal{\textsl{left}}#1}}
\newcommand{\rightl}[1][(a)]{\ensuremath{\textnormal{\textsl{right}}#1}}
\newcommand{\patternc}{\ensuremath{X_{m}}}
\newcommand{\allXc}{\ensuremath{X_1, \ldots, X_{n+m}}}
\newcommand{\algpop}{\algofont{Pop}}
\newcommand{\algFCPM}{\algofont{FCPM}}
\newcommand{\algfixdiff}{\algofont{FixEndsDifferent}}
\newcommand{\algfixsim}{\algofont{FixEndsSame}}
\newcommand{\algSPM}{\algofont{SimplePatternMatching}}
\newcommand{\algSET}{\algofont{SimpleEqualityTesting}}
\newcommand{\algET}{\algofont{EqualityTesting}}
\newcommand{\algradix}{\algofont{RadixSort}}
\newcommand{\algpairsncr}{\algofont{PairCompNcr}}
\newcommand{\algpairs}{\algofont{PairComp}}
\newcommand{\algpairscr}{\algofont{PairComp}}
\newcommand{\algremblocks}{\algofont{RemCrBlocks}}
\newcommand{\alggreedypairs}{\algofont{GreedyPairs}}
\newcommand{\PC}{{PC}}
\newcommand{\Fraref}[1]{(\hyperref[partition #1]{Fra~\ref{partition #1}})}
\newcommand{\Frarefall}{(\hyperref[partition 1]{Fra~1--Fra~4})}
\newtheorem{theorem}{Theorem}
\newtheorem{lemma}{Lemma}
\theoremstyle{remark}
\newtheorem{remark}{Remark}
\theoremstyle{definition}
\newtheorem{definition}{Definition}
\providecommand{\Ocomp}{\mathcal{O}}
\newcommand{\twodots}{\mathinner{\ldotp\ldotp}}
\newcommand{\SLPrefall}{\eqref{form}}
\newtheorem{clm}{Claim}
\definecolor{myYellow}{rgb}{0.9,0.9,0}
\begin{document}

\title{Faster fully compressed pattern matching by recompression}

\author{Artur Je\.z	
}

\address{
Max Planck Institute f\"ur Informatik, Campus E1 4,  DE-66123 Saarbr\"ucken, Germany
\and
Institute of Computer Science, University of Wroc{\l}aw, ul.\ Joliot-Curie~15, 50-383 Wroc{\l}aw, Poland\\
\texttt{aje@cs.uni.wroc.pl}
}

\thanks{Supported by NCN grant number 2011/01/D/ST6/07164, 2011--2014.}

\subjclass{F.2.2 Nonnumerical Algorithms and Problems}
\keywords{Pattern matching, Compressed pattern matching, Algorithms for compressed data, Straight-line programms, Lempel-Ziv compression}

\maketitle

\begin{abstract}
In this paper, a fully compressed pattern matching problem is studied.
The compression is represented by straight-line programs (SLPs), i.e.\ a
context-free grammars generating exactly one string;
the term fully means that \emph{both} the pattern \emph{and} the text are
given in the compressed form.
The problem is approached using a recently developed technique of local recompression:
the SLPs are refactored, so that substrings of the pattern and text
are encoded in both SLPs in the same way.
To this end, the SLPs are \emph{locally decompressed} 
and then \emph{recompressed} in a uniform way.

This technique yields an $\Ocomp((n+m)\log M)$ algorithm
for compressed pattern matching, assuming that $M$ fits in $\Ocomp(1)$ machine words,
where $n$ ($m$) is the size of the compressed
representation of the text (pattern, respectively),
while $M$ is the size of the decompressed pattern.
If only $m+n$ fits in $\Ocomp(1)$ machine words, the running time increases to $\Ocomp((n+m)\log M \log(n+m))$.
The previous best algorithm due to Lifshits had $\Ocomp(n^2m)$ running time.
\end{abstract}
\keywords{Pattern matching, Compressed pattern matching, Straight-line programms, Lempel-Ziv compression, Algorithms for compressed data}

\section{Introduction}
\label{sec:intro}
\subsubsection*{Compression and Straight-Line Programms}
Due to ever-increasing amount of data, compression methods
are widely applied in order to decrease the data's size.
Still, the stored data is accessed and processed.
Decompressing it on each such an occasion
basically wastes the gain of reduced storage size.
Thus there is a large demand for algorithms dealing directly with the compressed data,
without the explicit decompression.

Processing compressed data is not as hopeless, as it may seem:
it is a popular outlook, that compression basically extracts the hidden
structure of the text and if the compression rate is high,
the data has a lot of internal structure.
And it is natural to assume that such a structure
helps devising methods dealing directly
with the compressed representation.
Indeed, efficient algorithms for fundamental text operations
(pattern matching, equality testing, etc.) are known
for various practically used compression methods (LZ, LZW, their variants, etc.)%
~\cite{GawryLZW,GawryLZ,GawryLZWmany,GawryLZWfull,RytterSWAT,Rytter96,RytterfullLZW,Takeda00,PlandowskiSLPequivalence}.

The compression standards differ in the main idea as well
as in details. Thus when devising algorithms for compressed data,
quite early one needs to focus on the exact compression method,
to which the algorithm is applied. The most practical (and challenging)
choice is one of the widely used standards, like LZW or LZ.
However, a different approach is also pursued: for some applications
(and most of theory-oriented considerations)
it would be useful to \emph{model} one of the
practical compression standard by a more mathematically well-founded and `clean' method.
This idea lays at the foundations of the notion of
\emph{Straight-Line Programms} (SLP),
which are simply context-free grammars generating exactly one string.
Other reasons of popularity of SLPs is that usually they compress well the input text~\cite{RePair,Sequitur}
and that they are closely related to the LZ compression standard:
each LZ compressed text can be converted into an equivalent SLP of size $\Ocomp( n \log( N / n))$
and in $\Ocomp(n \log(N/n))$ time~\cite{SLPaprox,SLPaprox2,SLPaproxSakamoto,grammar} (where $N$ is the size of the decompressed text),
while each SLP can be converted to an equivalent LZ of $\Ocomp(n)$ size in linear time.
Lastly, a the greedy grammar compression can efficiently implemented and thus can be used
as a preprocessing to lsower compression methods, like those based on Burrows-Wheeler transform~\cite{grammarpreprocessing}.

\subsubsection*{Problem statement}
The problem considered in this paper is the \emph{fully compressed membership problem}
(FCPM),
i.e.\ we are given a text of length $N$ and pattern of length $M$,
represented by SLPs (i.e.\ context-free grammars in Chomsky normal form generating exactly one string)
of size $n$ and $m$, respectively.
We are to answer, whether the pattern appears in the text and
give a compact representation of all such appearances in the text.

\subsubsection*{Previous and related results}
The first algorithmic result dealing with the SLPs is 
for the compressed equality testing,
i.e.\ the question whether two SLPs represent the same text.
This was solved by Plandowski in 1994~\cite{PlandowskiSLPequivalence},
with $\Ocomp(n^4)$ running time.
The first solution for FCPM by Karpi\'nski et~al.\
followed a year later~\cite{SLPpierwszePM}.
Next, a polynomial algorithm for
computing various combinatorial properties of SLP-generated texts,
in particular pattern matching, was given by G\k{a}sieniec et~al.%
~\cite{RytterSWAT}, the same authors presented also a faster randomised algorithm
for FCPM~\cite{Rytter96}.
In 1997 Miyazaki et~al.~\cite{Takeda97} constructed $\Ocomp(n^2m^2)$ algorithm for FCPM.
A faster $\Ocomp(mn)$ algorithm for a special sub-case
(restricting the form of SLPs) was given in 2000 by Hirao et~al.~\cite{Takeda00}.
Finally, in 2007, a state of the art $\Ocomp(mn^2)$ algorithm was given by
Lifshits~\cite{LifshitsMatching}.

Concerning related problems, fully compressed pattern matching
was also considered for LZW compressed strings~\cite{RytterfullLZW} 
and a linear-time algorithm was recently developed~\cite{GawryLZWfull}.
Apart from that there is a large body of work dealing with the compressed pattern matching,
i.e.\ when the pattern is given explicitly,
for practically used compression standards.
We recall those for LZ and LZW as those compression standards are related to SLPs:
for LZW a linear-time algorithm was recently given~\cite{GawryLZW}
and the case of multiple pattern was also studied~\cite{GawryLZWmany}, with running time $\Ocomp(n \log M + M)$
(alternatively: $\Ocomp(n + M^{1+\epsilon})$).
For the LZ-compressed text, for which the problem becomes substantially harder than in LZW case,
in 2011 an $\Ocomp(n  \log (N/n) + m)$ algorithm, which is in some sense optimal,
was proposed~\cite{GawryLZ}.

\subsubsection*{Our results and techniques}
We give an $\Ocomp((n+m)\log M)$ algorithm for FCPM, i.e.\ pattern matching problem in which both the text and the pattern
are supplied as SLPs.
It assumes that numbers of size $M$ can be manipulated in constant time.
When this is not allowed and only numbers of $\Ocomp(n+m)$ time can,
the running time increases to $\Ocomp((n+m)\log M \log(n+m))$.
Since $M \leq 2^m$ this outperforms in any case the previously-best $\Ocomp(mn^2)$ algorithm~\cite{LifshitsMatching}.

\begin{theorem}
\label{thm:main result}
Assuming that numbers of size $M$ can be manipulated in constant time,
algorithm \algFCPM{} returns a $\Ocomp(n+m)$ representation
of all pattern appearances, where $n$ ($m$) is the size of the SLP-compressed text 
(pattern, respectively) and $M$ is the size of the decompressed pattern.
It runs in $\Ocomp((n+m)\log M)$ time.

If numbers of size $n+m$ can be manipulated in constant time,
the running time and the representation size increase by a multiplicative $\log(n+m)$ factor.

This representation allows calculation of the number of pattern appearances,
and if $N$ fits in $\Ocomp(1)$ codewords, also the position of the first, last etc.\ pattern.
\end{theorem}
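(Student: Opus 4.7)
The plan is to establish the theorem by exhibiting \algFCPM{} as a \emph{recompression} loop that alternates \emph{pair compression} (replacing selected length-two factors $ab$ by a fresh letter) and \emph{block compression} (replacing maximal runs $a^\ell$ by a fresh letter), applied simultaneously to the text SLP and the pattern SLP. In each round I would choose the partition $L \cup R$ of the current alphabet used for pair compression deterministically, by counting letter occurrences in both SLPs, so that a constant fraction of positions of the current pattern is compressed; together with block compression this guarantees that in $\Ocomp(1)$ consecutive rounds the length of the decompressed pattern shrinks by a constant factor. Consequently $\Ocomp(\log M)$ rounds suffice before the pattern collapses to a single letter, at which point listing its occurrences in the text SLP is straightforward.

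The central technical step is to implement a round on SLPs without decompression. For pair compression I would use subroutines like \algleftpop{} and \algrightpop{} that, for every nonterminal, expose the first and last letter of its expansion and push them into the parent rule; after this exposure every compressible pair occurs inside a single right-hand side and can be replaced in place. Block compression is analogous but pops whole maximal runs instead of single letters. After popping I would normalise the grammar with \algremblocks{}, contracting unit productions and discarding nonterminals whose expansions became trivial. The main obstacle will be proving that a round preserves the invariant that both SLPs have size $\Ocomp(n+m)$: I would charge each newly introduced symbol either to a letter that has just been compressed away or to a distinct nonterminal of the previous round, yielding an amortised account that keeps the grammar linear.

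Since pattern and text are recompressed with the same fresh letters in lock-step, every pattern occurrence in the original text corresponds bijectively to an occurrence of the current pattern in the current text. To store these occurrences compactly I would attach to every text production $X \to YZ$ of the current round the set of occurrences of the current pattern that cross the boundary between $Y$ and $Z$; by the Fine--Wilf periodicity lemma this set is a union of $\Ocomp(1)$ arithmetic progressions, so across the entire SLP the representation totals $\Ocomp(n+m)$. In the final round, when the pattern is one letter, these crossing data combined with direct occurrence counts at each nonterminal recover all matches, and a single top-down traversal of the text SLP then yields the total count and, provided $N$ fits in $\Ocomp(1)$ machine words, the positions of the first, last, etc.\ occurrence.

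For the running time, every round touches $\Ocomp(n+m)$ nonterminals and performs $\Ocomp(1)$ arithmetic operations per nonterminal on indices bounded by $M$, giving $\Ocomp(n+m)$ per round and $\Ocomp((n+m)\log M)$ overall under the assumption that numbers of size $M$ are manipulated in constant time. Under the weaker assumption that only numbers of size $n+m$ are, each $M$-sized arithmetic operation is simulated with $\Ocomp(\log(n+m))$ overhead via multi-word arithmetic, multiplying both the running time and the size of the occurrence representation by a $\log(n+m)$ factor and yielding the stated second bound.
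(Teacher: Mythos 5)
There is a genuine gap at the heart of your argument: the claim that ``every pattern occurrence in the original text corresponds bijectively to an occurrence of the current pattern in the current text'' is simply false if you run pair and block compression naively on both SLPs. Take $\mytext=ababa$ and $\pattern=baba$: compressing $ab$ into $c$ yields text $cca$ and pattern $bca$, and the occurrence is destroyed; similarly $\pattern=aab$, $\mytext=aaab$ is destroyed by block compression. The problem is that a compression in the text can act partly inside and partly outside an occurrence, at the occurrence's first or last position. Handling this is the main new difficulty of pattern matching (as opposed to equality testing) and occupies a large part of the paper: before each round one must \emph{fix} the pattern's beginning and end (compressing the leading pair first, or replacing the leading/trailing $a$-blocks by markers $a_L,a_R$ and using a modified block-compression scheme in which $a^m$ with $m\ge\ell$ is replaced by strings such as $a_Ra_ma_L$, followed by dedicated pair compressions $a_Lb$ and $ba_R$), and one must prove a one-to-one correspondence of occurrences for these modified operations and re-prove the constant-factor shrinking with the more generous replacement rules. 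Your proposal contains none of this, and without it both the correctness claim and the $\Ocomp(\log M)$ bound on the number of rounds are unsupported. Relatedly, your occurrence representation (arithmetic progressions of crossing occurrences per text rule, via Fine--Wilf) is not what the recompression method produces and would itself require computing crossing occurrences of a long compressed pattern, which is essentially the original problem; in the paper the representation is simply the final text SLP once the pattern has become a single letter, with positions recovered from letter weights.

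Two further points would need repair even granting the above. First, block compression is not ``analogous but with runs'': maximal runs popped from nonterminals can have length up to $2^{n+m}$, and equal-length runs of the same letter must receive the same fresh letter, so one must group run lengths; doing this within $\Ocomp(|G|)$ per phase requires the paper's machinery of common lengths plus offsets, the exclusion of blocks longer than $M$ and of letters absent from \pattern, and an amortised charge of $\Ocomp(\log M)$ per rule over the whole run. Your per-round accounting of ``$\Ocomp(1)$ arithmetic operations per nonterminal'' ignores this sorting cost. Second, your explanation of the weaker model is incorrect: simulating arithmetic on numbers of size $M$ with words of $\Theta(\log(n+m))$ bits costs $\Theta(\log M/\log(n+m))$ per operation, not $\Ocomp(\log(n+m))$, and it would not affect the representation size. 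In the paper the extra $\log(n+m)$ factor has a different source: without constant-time counts of pair occurrences in \pattern{} (numbers up to $M$) one cannot pick the single good partition, so crossing pairs are compressed via $\Ocomp(\log(n+m))$ alphabet partitions, which inflates the grammar to $\Ocomp((n+m)\log(n+m))$ and hence both the time and the output size by that factor.
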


Our approach to the problem is different
than all previously applied for compressed pattern matching
(though it does relate to dynamic string equality testing considered by Mehlhorn et~al.~\cite{MehlhornSU97}
and its generalisation to pattern matching by~Alstrup~et~al.~\cite{AlstrupBrodalRauhe}).
We do \emph{not} consider any combinatorial properties of the encoded strings.
Instead, we analyse and change the way strings are described by the SLPs in the instance.
That is, we focus on the SLPs alone, ignoring any properties of the encoded strings.
Roughly speaking, our algorithm aims at having all the strings in the instance
compressed `in the same way'.
To achieve this goal, we decompress the SLPs.
Since the compressed text can be exponentially long, we do this \emph{locally}:
we introduce explicit letters into the right-hand sides of the productions.
Then, we recompress these explicit strings uniformly:
roughly, a fixed pair of letters $ab$ is replaced by a new letter $c$
in both the string and the pattern;
such a procedure is applied for every possible pair of letters.
The compression is performed within the rules of the grammar and often it is needed to modify the grammar
so that this is possible.
Since such pieces of text are compressed in the same way,
we can `forget' about the original substrings of the input
and treat the introduced nonterminals as atomic letters.
Such recompression shortens the pattern (and the text) significantly:
roughly one `round' of recompression
in which every pair of letters that was present at the beginning of the `round'
is compressed shortens the encoded strings by a constant factor.
The compression ends when pattern is reduced to one letter,
in which case the text is a simple SLP-like representation of all pattern appearances.

\begin{remark}
Notice, that in some sense we build an SLP for both the pattern and string in a bottom-up fashion:
pair compression of $ab$ to $c$ is in fact introducing a new nonterminal with a production $c \to ab$.
This justifies the name `recompression' used for the whole process.
This is explained in details later on.
\end{remark}

\paragraph{Similar techniques}
While application the idea of recompression
to pattern matching is new,
related approaches were previously employed: most notably
the idea of replacing short strings by a fresh letter and iterating
this procedure was used by Mehlhorn et~al.~\cite{MehlhornSU97}
in their work on data structure for equality testing for dynamic strings
In particular their method can be straightforwardly applied
to equality testing for SLPs, yielding a nearly cubic algorithm
(as observed by Gawrychowski~\cite{GawrySLPEquality}).
However, the inside technical details of the construction
makes extension to FCPM problematic:
while this method can be used to build `canonical' SLPs
for the text and the pattern, there is no apparent way to
control how these SLPs actually look like
and how do they encode the strings.
An improved implementation of a similar data structure by Alstrup~et~al.~\cite{AlstrupBrodalRauhe}
solves those problems and allows a pattern matching.
Due to different setting the transitions to SLPs is not straightforward
(for instance, the running time is proportional to the number of appearances),
but it yields a nearly cubic algorithm~\cite{GawrySLPEquality}.

In the area of compressed membership problems~\cite{SLPmatchingNFA}, from
which the recompression method emerged, recent work of Mathissen and Lohrey~\cite{LohreySLP}
already implemented the idea of replacing strings with fresh letters
as well as modifications of the instance so that such replacement is possible.
However, the replacement was not iterated, and the newly introduced
letters were not be further compressed.

Lastly, a somehow similar algorithm, in which replaces pairs and blocks,
was proposed by Sakamoto in connection with the (approximate) construction of the smallest grammar for the input text~\cite{SLPaproxSakamoto}.
His algorithm was inspired by the \algofont{RePair} algorithm~\cite{RePair},
which is a practical grammar-based compressor.
However, as the text in this case is presented explicitly,
the analysis is much simpler and in particular it does not introduce the technique of modification of the grammar
according to the applied compressions

\paragraph{Other applications of the technique}
A variant of recompression technique has been used in order to establish
the computational complexity of the fully compressed membership problem for NFAs%
~\cite{fullyNFA}.
This method can also be applied in the area of word equations,
yielding simpler proofs and faster algorithms of many classical results in the area, like {\sf PSPACE}
algorithm for solving word equations, double exponential bound on the size of the solution,
exponential bound on the exponent of periodicity, \textit{etc}.~\cite{wordequations}.
Furthermore, a more tuned algorithm and detailed analysis yields a first linear-time algorithm
for word equations with one variable (and arbitrary many appearances of it)~\cite{onevarlinear}.
Lastly, the method can be straight-forwardly applied to obtain a simple algorithm for construction
of the (aproximation of) smallest gramamr generating a given word~\cite{grammar}.

\subsubsection*{Computational model}
Our algorithm uses \algradix{} and we assume
that the machine word is of size $\Omega(\log(n + m))$.
\algradix{} can sort $n+m$ numbers of size $\Ocomp((n+m)^c)$ in time $\Ocomp(c(n+m))$.

We assume that the alphabet of the input is $\{1, 2, \ldots, (n+m)^c\}$ for some constant $c$.
This is not restrictive, as we can sort the letters of the input and replace them with consecutive numbers, starting with $1$,
in total $\Ocomp((n+m)\log(n+m))$ time.

The position of the first appearance of the pattern in the text might be
exponential in $n$, so we need to make some assumptions in order to be able to output such a position.
Assuming that $N$ fits in a constant amount of codewords,
our algorithm can also output the position of the first, last etc.\ position of the pattern.

We assume that the rules of the grammar are stored as lists, so that insertion and deletion of characters
can be done in constant time (assuming that a pointer to an element is provided).

\subsubsection*{Organisation of the paper}
As a toy example, we begin with showing that the recompression can be used to check the equality of two explicit strings,
see Section~\ref{sec:toy example}.
This introduces the first half of main idea of recompression: iterative replacement of pairs and blocks, as well as some key
ideas of the analysis.
On the other hand, it completely ignores the (also crucial) way the SLP is refactored to match the applied recompression.
In the next section it is explained, how this approach can be extended to pattern matching,
we again consider only the case in which the text and pattern are given explicitly.
While the main is relatively easy, the method and the proof involve a large case inspection.

Next, in Section~\ref{sec:outline} we show how to perform the equality testing in case of SLPs.
This section introduces the second crucial part of the technique: modification of SLP in the instance according to the compressions.
This section is independent form Section~\ref{sec: first and last} and can be read beforehand.
In the following section it is showed how to merge the results of Section~\ref{sec: first and last} and Section~\ref{sec:outline},
yielding and algorithm for fully compressed pattern matching.

In the last Section~\ref{sec:improvements} we comment how to improve the running time from
$\Ocomp((n+m) \log M \log (n+m))$ to $\Ocomp((n+m) \log M)$ when $M$ fits in $\Ocomp(1)$ machine words.

\section{Toy example: equality testing}
\label{sec:toy example}
In this section we introduce the recompression technique and apply it in the trivial case of equality testing
of two explicit strings, i.e.\ their representation is not compressed.
This serves as an easy introduction.
In the next section we take this process a step further, by explaining, how to perform a pattern
matching for explicit strings using recompression.
To stress the future connection with the pattern matching, we shall use letters \pattern{} and \mytext{}
to denote the two strings for which we test the equality.

In case of equality testing, our approach is similar to the one of Mehlhorn et~al.~\cite{MehlhornSU97}
from their work on equality testing for the dynamic strings.
In that setting, we were given a set of strings, initially empty, and a set of operations
that added new strings to the set.
We are to create a data structure that could answer whether two strings in this collection are equal or not.

The method proposed by Mehlhorn et~al.~\cite{MehlhornSU97} was based on iterative replacement of strings:
they defined a schema, which replaced a string $s$ with a string $s'$
(where $|s'| \leq c |s|$ for some constant $c < 1$) and iterated the process until a length-1 string was obtained.
Most importantly, the replacement is injective, i.e.\ if $s_1 \neq s_2$ then they are replaced with different strings%
\footnote{This is not a information-theory problem, as we replace only strings that appear in the instance and moreover can reuse original letters.}.
In this way, for each string we calculate its unique signature and two strings are equal if and only if their signatures are.

The second important property of this schema is that the replacement is `local': $s$ is partitioned into blocks of a constant amount
of letters and each of them is replaced independently.

The recompression, as presented in this section, is a variant of this approach,
in which a different replacement schema is applied.
To be more specific, our algorithm is based on two types of `compressions'
performed on strings:
\begin{description}
	\item[pair compression of $ab$]
	For two different letters $ab$ appearing in \pattern{} or \mytext{}
	replace each of $ab$ in \pattern{} and \mytext{} by a \emph{fresh} letter $c$.
	\item[$a$'s block compression]
	For each maximal block $a^\ell$, with $\ell >1$,
	that appears in \pattern, replace all $a^\ell$s in \pattern{} and \mytext{}
	by a fresh letter $a_\ell$.
\end{description}
By a fresh letter we denote any letter that does not appear in \pattern{} or \mytext.
We adopt the following notational convention throughout rest of the paper:
whenever we refer to a letter $a_\ell$, it means that the block compression
was done for $a$ and $a_\ell$ is the letter that replaced $a^\ell$.
The $a$-block $a^\ell$ is \emph{maximal}, when it cannot be extended by a letter $a$ to the left, nor to the right.

Clearly, both compressions preserve the equality of strings
\begin{lemma}
\label{lem:SET properly tests}
Let \patterni, \mytexti{} be obtained from \pattern{} and \mytext{}
by a pair compression (or block compression).
Then $\pattern = \mytext$ if and only if $\patterni = \mytexti$.
\end{lemma}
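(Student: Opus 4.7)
The forward direction ($\Rightarrow$) is essentially by definition: both kinds of compression are deterministic transformations that depend only on the input string, so applying the same operation to two equal strings yields two equal outputs. The only thing to check is that the compression rule is unambiguous, which it is: for pair compression of $ab$, occurrences of $ab$ cannot overlap since $a \neq b$, so replacing each of them by the fresh letter $c$ yields a unique result; for $a$'s block compression, distinct maximal $a$-blocks are by definition separated by non-$a$ letters (or string endpoints), hence disjoint, so simultaneously replacing each maximal $a^\ell$ by $a_\ell$ is also unambiguous. If $\pattern = \mytext$ then both sides produce $\patterni = \mytexti$.

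For the reverse direction ($\Leftarrow$), the key observation is that each compression is \emph{injective}, because the letter used to encode a compressed block is fresh: $c$ (respectively $a_\ell$) does not occur in $\pattern$ or $\mytext$. I would make this precise by defining an inverse decompression operation: for pair compression, replace each occurrence of $c$ in the argument by $ab$; for block compression, replace each $a_\ell$ by $a^\ell$. Since $c$ and each $a_\ell$ are fresh, every occurrence of them in $\patterni$ or $\mytexti$ comes from the compression step itself, so decompression undoes compression letter-for-letter: $\textrm{decompress}(\compress(\pattern)) = \pattern$, and likewise for $\mytext$. Applying decompression to both sides of $\patterni = \mytexti$ yields $\pattern = \mytext$.

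The only subtlety, and the one place that would need a careful sentence, is that for block compression the decompression must restore the maximality of the replaced $a$-blocks. This is automatic: after replacement the fresh letter $a_\ell$ sits between non-$a$ symbols (by maximality of the original block), so when each $a_\ell$ is expanded back to $a^\ell$, no two such expanded blocks become adjacent and no expanded block merges with a surviving single-$a$ to produce a longer block. Hence compression and decompression are mutually inverse, and the lemma follows. There is no genuine obstacle here; the lemma is presented mainly to formalise the sanity of the two basic operations on which the rest of the recompression machinery is built.
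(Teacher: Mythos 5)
Your argument is correct: the forward direction follows from the compression being a deterministic, unambiguous replacement (occurrences of $ab$ with $a\neq b$ cannot overlap, and distinct maximal blocks are disjoint), and the reverse direction follows because the fresh letters make the operation injective, with decompression as an explicit inverse; your remark that maximality of blocks is automatically restored is the right point to check. The paper itself states this lemma without proof, treating it as immediate, so your write-up simply supplies the routine details the paper leaves implicit — no discrepancy in approach.
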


Using those two operations, we can define the algorithm for testing the equality of two strings
\begin{algorithm}[H]
	\caption{\algSET: outline}
	\label{alg:set}
	\begin{algorithmic}[1]
	\While{$|\pattern|>1$ and $|\mytext|>1$}
		\State $L \gets $ list of letters appearing in \mytext{} and \pattern
		\State $P \gets $ list pairs appearing in \mytext{} and \pattern
    \For{each $a \in L$} {} compress blocks of $a$
		\EndFor
    \For{each $ab\in P$} {} compress pair $ab$ 
		\EndFor
	\EndWhile
	\State Naively check the equality and output the answer.
	\end{algorithmic}
\end{algorithm}
We call one iteration of the main loop a \emph{phase}.

The crucial property of \algSET{} is that in each phase the lengths of \pattern{} and \mytext{}
shorten by a constant factor

\begin{lemma}
\label{lem:SET shortens}
When $|\pattern|, |\mytext| > 1$ then one phase shortens those lengths by a constant factor.
\end{lemma}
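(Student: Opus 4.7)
The plan is to analyze the block and pair compressions of the phase separately, using an LR-partition argument to lower-bound the reduction coming from the pair step. Let $n_1 = |\pattern|$ and $n_2 = |\mytext|$ at the start of the phase. After all block compressions have been performed, every maximal run $a^\ell$ with $\ell > 1$ has been replaced by a fresh letter, so the resulting strings $\pattern'$, $\mytext'$, of lengths $n_1' \leq n_1$ and $n_2' \leq n_2$, contain no two consecutive equal letters. In particular each of the $(n_1'-1)+(n_2'-1)$ adjacent positions carries a pair of two distinct letters, hence a pair that belongs to the list $P$.

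To count how many such positions get compressed, I would use a left/right partition. Partition the alphabet as $L \sqcup R$ by placing each letter independently in $L$ or $R$ with probability $1/2$. An adjacent pair $(s_i, s_{i+1})$ of distinct letters then becomes an LR-pair (i.e., $s_i \in L$ and $s_{i+1} \in R$) with probability exactly $1/4$, so the expected number of LR-pair occurrences across $\pattern'$ and $\mytext'$ is $((n_1'-1)+(n_2'-1))/4$; by the probabilistic method, some concrete partition attains this bound. Moreover, two LR-pair occurrences cannot share a position, since if $(s_i,s_{i+1})$ and $(s_{i+1},s_{i+2})$ were both LR-pairs, the shared letter $s_{i+1}$ would have to lie in $L \cap R$.

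Next I would order the pair-compression loop so that LR-pairs are processed first (the order among the pairs in $P$ is left to the algorithm). Since LR-pair occurrences are position-disjoint, and every $ab$ with $a\in L$, $b\in R$ that appears in $\pattern'$ or $\mytext'$ belongs to $P$, these compressions together reduce the total length by at least the number of LR-pair occurrences, and the subsequent compressions of the remaining pairs in $P$ can only shorten the strings further. Combining the two steps of the phase yields
\[
|\pattern|+|\mytext|\ \text{(after the phase)}\ \leq\ n_1'+n_2'-\tfrac{(n_1'-1)+(n_2'-1)}{4}\ =\ \tfrac{3(n_1'+n_2')+2}{4}\ \leq\ \tfrac{3(n_1+n_2)+2}{4},
\]
and since the hypothesis $n_1,n_2\geq 2$ gives $n_1+n_2\geq 4$, the right-hand side is at most $\tfrac{7}{8}(n_1+n_2)$, the desired constant-factor shortening.

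The main obstacle is the pair-compression analysis: successive pair compressions can destroy each other's occurrences, so a naive counting argument over $P$ does not immediately give a constant-factor reduction. The LR-partition trick overcomes this by exhibiting, via the probabilistic method, a large family of compressible pairs whose occurrences are automatically position-disjoint, so all of them can be carried out without mutual interference. A secondary minor point is that block compression may leave $n_1' = 1$ or $n_2' = 1$, in which case LR-pairs contribute nothing from that string; the additive constant $2$ in the numerator above is exactly what absorbs this corner case, provided $n_1+n_2\geq 4$.
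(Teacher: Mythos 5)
There is a genuine gap in the step ``each of the $(n_1'-1)+(n_2'-1)$ adjacent positions carries a pair of two distinct letters, hence a pair that belongs to the list $P$.'' In \algSET{} the list $P$ is computed at the \emph{start} of the phase, before the block compressions, and the block compressions introduce fresh letters $a_\ell$; any adjacency involving such a fresh letter is a pair that is \emph{not} in $P$ and is therefore never touched by this phase's pair-compression loop. Consequently your intermediate inequality ``length after the phase $\leq n_1'+n_2'-(\text{number of LR-pair occurrences})$'' is false in general: for $\pattern=(aab)^k$ the block compression yields $(a_2b)^k$ of length $2k$, all $2k-1$ adjacencies are pairs $a_2b$ or $ba_2$ outside $P$, no pair compression happens, and the final length is exactly $2k$, not $2k-(2k-1)/4$. (The lemma itself survives here only because block compression already shortened the string, which your accounting does not use.) The argument can be repaired, but it needs an extra idea you did not supply: restrict the LR-count to adjacencies between two letters already present at the start of the phase, and charge the adjacencies ruined by fresh letters against the shortening that block compression itself produced (each fresh letter replaces a block of length at least $2$).

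Two further, smaller points. First, you prove the bound only for a favourable resolution of the unspecified loop order (LR-pairs first, for a partition whose existence comes from the probabilistic method); the algorithm as implemented processes pairs in radix-sorted order and does not compute any partition, so your proof does not cover the algorithm as given. The paper avoids this entirely with a much more elementary observation that is order-independent: for any two consecutive letters at the start of the phase, either they are equal (handled by block compression) or their pair is in $P$, and the attempt to compress that occurrence can only fail because one of the two letters was already compressed -- hence at least one of any two consecutive letters is compressed, and a counting argument (compressed groups have length at least $2$) gives the factor $5/6$ per string. Your LR-partition idea is essentially the tool the paper reserves for Section~\ref{sec:improvements}, where the algorithm explicitly computes such a partition (Lemma~\ref{lem:finding partition}); for the toy setting it is both unnecessary and, as used here, not aligned with what the algorithm does. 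Second, the lemma asserts a constant-factor shortening of each of $|\pattern|$ and $|\mytext|$, whereas you bound their sum; this is easy to fix by running your estimate per string (as the paper does), but as written it is a weaker statement.
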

\begin{proof}
Consider two consecutive letters $a$ and $b$ of \pattern{} (the proof for \mytext{} is the same).
We claim that at least one of them is compressed in a phase.

\begin{clm}
\label{clm:two letters are compressed}
Consider any two consecutive letters in \pattern{} or \mytext{} at the beginning of the phase.
Then at least one of those letters is compressed till the end of the phase.
\end{clm}
\begin{proof}
If they are the same, then they are compressed during the blocks compression.
So suppose that they are different. Then $ab \in P$ and we try to compress this appearance during the pair compressions.
This fails if and only if one of letters from this appearance was already compressed when we considered $ab$ during the pair compression.
\qed
\end{proof}

So each uncompressed letter can be associated with a letter to the left and to the right, which were compressed
(the first and last letter can be only associated with a letter to the right/left, respectively).
Since when a substring is compressed, it is of length at least two,
this means that no compressed letter is associated with two uncompressed letters.
So, for a pattern \pattern{} there are at most $\frac{|\pattern|+2}{3}$ uncompressed
letters (the $+2$ comes from the first/last letter that can be uncompressed and do not have a compressed letter to the left/right)
and at least $\frac{2|\pattern|-2}{3}$ compressed ones.
Hence, the length of the pattern at the end of a phase is at most
\begin{equation*}
\frac{|\pattern|+2}{3} + \frac{1}{2} \cdot \frac{2|\pattern|- 2}{3} = \frac{2 |\pattern| + 1}{3} \leq \frac{5}{6} |\pattern| \enspace ,
\end{equation*}
where the last inequality holds in the interesting case of $|\pattern| > 1$.
\qed
\end{proof}

This shows that there are at most $\Ocomp(\log(\min(m,n)))$ phases, however, the running time can be in fact
bounded much better: one phase takes only linear time, assuming that the alphabet $\letters$ can be identified
with numbers from the set $\{1, 2, \ldots, (n+m)^c\}$ for some constant $c$.
Since the lengths of \pattern{} and \mytext{} shorten by a constant factor in each phase, this yields a total linear running time.

\begin{lemma}
\label{lem:SET linear}
Assuming that in the input $\letters$ can be identified with $\{1, 2, \ldots, (n+m)^c\}$,
one phase of \algSET{} can be implemented in $\Ocomp(|\pattern| + |\mytext|)$ time.
\end{lemma}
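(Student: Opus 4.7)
The plan is to implement each of the two compression stages of the phase in linear time using \algradix{} as the main subroutine. Letters in the input belong to a polynomial-size range, and each phase introduces only $\Ocomp(|\pattern|+|\mytext|)$ fresh letters. Assigning these fresh letters consecutive numerical identifiers keeps the working alphabet polynomial in $n+m$ throughout; in particular, sort keys of length two stay in a polynomial range, so \algradix{} sorts any list of $\Ocomp(|\pattern|+|\mytext|)$ such keys in $\Ocomp(|\pattern|+|\mytext|)$ time. Both \pattern{} and \mytext{} are stored as doubly linked lists, so individual cell rewrites and removals cost $\Ocomp(1)$ given a pointer.

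For the block compression, a single pass through \pattern{} and \mytext{} detects every maximal block $a^\ell$ with $\ell>1$ and records a triple $(a,\ell,\text{pointer to its first cell})$. Radix-sorting by $(a,\ell)$ groups identical blocks; each group is assigned one fresh letter $a_\ell$, and the grouped occurrences are rewritten in place by setting the first cell of each block to $a_\ell$ and unlinking the remaining $\ell-1$ cells. The total cost is proportional to the current string length plus the number of triples, both of which are $\Ocomp(|\pattern|+|\mytext|)$.

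For the pair compression, rescan the (now shorter) strings to collect a list of (pair, pointer) records. After the block compression no two consecutive cells carry the same letter, so every recorded pair consists of distinct letters. Radix-sort the list by the pair, assign a fresh letter $c$ to each distinct pair, and walk through the occurrences in the sorted order. Because two occurrences of the same pair $ab$ with $a\neq b$ cannot overlap, the only reason a recorded occurrence may no longer be valid is that one of its two cells has been rewritten by an earlier pair compression in the same phase; this is detected in $\Ocomp(1)$ from the stored pointers (mirroring the failure case in the proof of Claim~\ref{clm:two letters are compressed}) and the occurrence is then skipped. Otherwise, the first cell is rewritten to $c$ and the second unlinked.

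The main obstacle is the bookkeeping around fresh letters: one must ensure that the letters created in this and earlier phases continue to fit into a polynomial range so that \algradix{} keeps applying. Since each phase introduces at most $\Ocomp(|\pattern|+|\mytext|)$ fresh letters and by Lemma~\ref{lem:SET shortens} the string lengths shrink geometrically, the cumulative number of letters used stays within $\Ocomp((n+m)^c)$ throughout, which is sufficient. Summing the $\Ocomp(|\pattern|+|\mytext|)$ bounds of the two stages then gives the claimed time for one phase.
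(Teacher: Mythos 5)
There is a genuine gap, and it sits exactly at the step you treat as routine: the claim that \algradix{} sorts a list of $\Ocomp(|\pattern|+|\mytext|)$ keys drawn from a range polynomial in $n+m$ in $\Ocomp(|\pattern|+|\mytext|)$ time. Radix sort is linear only when the key range is polynomial in the \emph{number of items being sorted}, i.e.\ when a constant number of passes with base equal to the item count suffices. In later phases $|\pattern|+|\mytext|$ is much smaller than $n+m$ (this geometric shrinking, Lemma~\ref{lem:SET shortens}, is the whole point), so keys from a range of size $(n+m)^{\Theta(1)}$ force either $\Theta\bigl(\log(n+m)/\log(|\pattern|+|\mytext|)\bigr)$ passes or bucket arrays of size $\Omega(n+m)$; either way the per-phase bound $\Ocomp(|\pattern|+|\mytext|)$ asserted by the lemma is lost (when the strings have shrunk to constant length, a phase would still cost $\Omega(\log(n+m))$, or $\Omega(n+m)$ with a single wide pass). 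Note also that giving \emph{fresh} letters consecutive identifiers does not help with the \emph{surviving old} letters, whose identifiers remain scattered over the whole range used so far, which only grows.

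The paper closes this hole with Claim~\ref{clm:consecutive values}: at the beginning of every phase the letters actually occurring in \pattern{} and \mytext{} are renumbered (by a counting pass over the strings) so that they form an interval of consecutive integers of size at most $|\pattern|+|\mytext|$, at a cost of $\Ocomp(|\pattern|+|\mytext|)$ per phase. With that invariant, pairs are $2$-digit keys over a base bounded by the current string length and block records are (letter, length) keys with length bounded by the current string length, so both radix sorts are genuinely linear in the current lengths. Your proof needs this renumbering invariant (or an equivalent mechanism); maintaining "polynomial in $n+m$" is not enough. A minor further remark: you recollect the pair list \emph{after} the block compression, which also compresses pairs containing the fresh letters $a_\ell$; this is harmless for equality testing and for the shrinking argument, but it deviates from \algSET{} as specified, where $P$ is fixed at the start of the phase, and that discipline matters once the method is reused for pattern matching.
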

\begin{proof}
In order to show the running time bounds, we need to ensure that  letters in \pattern{} and \mytext{} form an interval of consecutive letters
(in this way \algradix{} can be applied).
We prove the technical claim after the main part of the proof.
\begin{clm}
\label{clm:consecutive values}
Without loss of generality, at the beginning of each phase the letters present in \pattern{}
and \mytext{} form an interval $\{k+1,k+2,\ldots, k+k'\}$ for some $k$ and $k' \leq |\pattern|+|\mytext|$.
Ensuring this takes at most $\Ocomp(|\pattern|+|\mytext|)$ time in a phase.
\end{clm}

Using Claim~\ref{clm:consecutive values} the proof of the lemma is nearly trivial:
We go through the \pattern{} and \mytext.
Whenever we spot a pair $ab$ (of different letters),
we create a record $(a,b,p)$, where $p$ is the pointer to this appearance of $ab$.
Similarly, when we spot a maximal block $a^\ell$ (where $\ell > 1$),
we put a record $(a,\ell,p)$, where $p$ is again a link to this maximal block (say, to the first letter of the block).
Clearly, this takes linear time.

We sort the triples for blocks, using \algradix{} (we ignore the third coordinate).
Since the letters form an interval of size at most $|\pattern| + |\mytext|$
and blocks have length at most $|\pattern| + |\mytext|$,
this can be done in $\Ocomp(|\pattern| + |\mytext|)$ time.
Then we go through the sorted list and replace $a^\ell$ with $a_\ell$, for $\ell > 1$.
Since all appearances of $a^\ell$ are consecutive on the sorted list,
this can be done in time $\Ocomp(1)$ per processed letter.
Hence, the total running time is linear.

Similarly, we sort the triples of pairs.
For each $ab$ on the list we replace all of its appearances by a fresh letter.
Note, that as the list is sorted, before considering a pair $a'b'$ we either replaced all,
or none appearances of a different pair $ab$ (depending on whether $ab$ is earlier or later in the list).
Hence this effectively implements iterated pair compression.

Note that it might be that a link to pair $ab$ is invalid, as one of letters $ab$ was already replaced.
In such a case we do nothing.

It is left to show the technical Claim~\ref{clm:consecutive values}:
\begin{proof}[of Claim~\ref{clm:consecutive values}]
We show the claim by an induction on the number of phase.

Consider the first phase.
We assumed that the input alphabet consists of letters that can be identified with subset of $\{1, \ldots, (n+m)^c\}$.
Treating them as vectors of length $c$ over $\{0, \ldots, (n+m) - 1\}$ we can sort them using \algradix{}
in $\Ocomp(c(n+m))$ time, i.e.\ linear one. Then we can re-number those letters to $1$, $2$, \ldots, $k$
for some $k \leq n+m$.
This takes $\Ocomp(n+m) = \Ocomp(|\pattern|+|\mytext|)$ time.

Suppose that at the beginning of the phase the letters formed an interval $[k+1\twodots k+k']$.
Each new letter, introduced in place of a compressed pair or block, is assigned a consecutive value,
starting from $k+k'+1$
and so after the phase the letters appearing in \pattern{} and \mytext{} are either within $[k+1 \twodots k+k']$ 9the old letters)
or within an interval $[k+k'+1\twodots k + k'']$ (the new letters),
for some $k' \leq k'' \leq k' + |\pattern| + |\mytext|$
(the second inequality follows from the fact that introduction of a new letter shortens \pattern{} or \mytext{}
by at least one letter).
It is now left to re-number the letters $[k+1\twodots k + k']$, so that only those appearing
in \pattern{} and \mytext{} have valid numbers: we go through \pattern{} \mytext{} and for each letter $a$ with number
in $[k\twodots k' + k]$ we increase the counter $\mycount[a]$ by $1$.
Then we go through $\mycount$ and assign consecutive numbers, starting from $k+k''+1$ to letters with non-zero count.
Lastly, we replace the values of those letters in \pattern{} and \mytext{} by the new values.
\qed 
\end{proof}
\qed
\end{proof}
By iterative application of Lemma~\ref{lem:SET properly tests} each compression performed by \algSET{}
preserves the equality of strings, so \algSET{} returns a proper answer.
Concerning the running time of one phase takes $\Ocomp(|\pattern| + |\mytext|)$, by Lemma~\ref{lem:SET linear},
and as $|\pattern|$ and $|\mytext|$ shorten by a constant factor in each phase, see Lemma~\ref{lem:SET shortens},
this takes in total $\Ocomp(n + m)$ time.

\begin{theorem}
\algSET{} runs in $\Ocomp(n + m)$ and tests the equality of \pattern{} and \mytext.
\end{theorem}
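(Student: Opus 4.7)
The plan is to combine the three preceding lemmas into a direct proof: correctness follows from iterated application of Lemma~\ref{lem:SET properly tests}, and the running time follows by summing the per-phase bound from Lemma~\ref{lem:SET linear} along the geometric decrease guaranteed by Lemma~\ref{lem:SET shortens}.

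For correctness, I would argue by induction on the number of completed phases that after each phase \pattern{} and \mytext{} still satisfy $\pattern=\mytext$ iff the original strings were equal. The inductive step is exactly Lemma~\ref{lem:SET properly tests} applied to each individual pair compression or block compression performed in the phase (with one application per letter/pair processed). The main loop terminates once $|\pattern|\le 1$ or $|\mytext|\le 1$, at which point the final naive comparison returns the correct answer for the (possibly compressed) strings, which by the inductive invariant coincides with the answer for the originals.

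For the running time, let $n_i$ and $m_i$ denote the lengths of \pattern{} and \mytext{} at the start of phase $i$, so $n_1=n$ and $m_1=m$. By Lemma~\ref{lem:SET linear} phase $i$ runs in $\Ocomp(n_i+m_i)$ time (this already subsumes the bookkeeping cost from Claim~\ref{clm:consecutive values} of keeping the alphabet within an interval of size $\Ocomp(n_i+m_i)$, so that \algradix{} remains applicable in each phase). By Lemma~\ref{lem:SET shortens}, whenever $n_i,m_i>1$ we have $n_{i+1}+m_{i+1}\le \tfrac{5}{6}(n_i+m_i)$, so the sequence $n_i+m_i$ decreases geometrically. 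Summing the geometric series gives
\begin{equation*}
\sum_i \Ocomp(n_i+m_i) \;=\; \Ocomp(n+m) \cdot \sum_{i\ge 0} \bigl(\tfrac{5}{6}\bigr)^i \;=\; \Ocomp(n+m) \enspace .
\end{equation*}
Once one of the strings has length at most $1$ the loop stops, and the final naive equality check clearly costs $\Ocomp(n+m)$ as well, so the total running time is $\Ocomp(n+m)$.

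There is no real obstacle here since the hard work has been factored into the three lemmas; the only point requiring a sentence of care is making sure Lemma~\ref{lem:SET linear} is invoked with the updated lengths $n_i,m_i$ (not the original $n,m$) so that the geometric sum is legitimate, and that the alphabet-renumbering cost of Claim~\ref{clm:consecutive values} is already charged inside that per-phase bound.
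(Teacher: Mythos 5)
Your proposal is correct and matches the paper's own argument: correctness by iterated application of Lemma~\ref{lem:SET properly tests}, and the $\Ocomp(n+m)$ bound by combining the per-phase linear time of Lemma~\ref{lem:SET linear} with the geometric length decrease of Lemma~\ref{lem:SET shortens}. The extra care you take about invoking the per-phase bound with the current (shrunken) lengths and charging the renumbering cost of Claim~\ref{clm:consecutive values} inside it is exactly what the paper implicitly does.
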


\subsubsection*{Building of a grammar}
Observe, that as already noted in the Introduction,
\algSET{} basically generates a context free grammar, whose some nonterminals generate \pattern{} and \mytext{}
(additionally, this context free grammar is an SLP, which are formally defined in the later section).
To be more precise: observe that each replacement of $ab$ by $c$ corresponds to an introduction of a new nonterminal
$c$ with a production $c \to ab$ and replacement of each $ab$ with $c$, that generates the same string as $ab$ does.
Similarly, the replacement of $a^k$ with $a_k$ corresponds to an introduction of a new nonterminal $a_k$
with a rule $a_k \to a^k$.

\section{Toy example: pattern matching}
\label{sec: first and last}
The approach used in the previous section basically applies also to the pattern matching,
with one exception: we have to treat the `ends' of the pattern in a careful way.
Consider $\mytext = ababa$ and $\pattern = baba$. Then compression of $ab$ into $c$
results in $\mytexti = cca$ and pattern $\patterni = bca$, which no longer appears in \mytexti.
The other problem appears during the block compression: consider $\pattern = aab$
and $\mytext = aaab$. Then after the block compression the pattern is replaced with $\pattern' = a_2b$
and text with $\mytexti = a_3b$.

In general, the problems arise because the compression in \mytext{} is done partially
on the \pattern{} appearance and partially outside it, so it cannot be reflected in the compression of \pattern{} itself.
We say that the compression \emph{spoils pattern's beginning} (\emph{end})
when such partial compression appears on pattern appearance beginning (end, respectively).
In other words, when $a$, $b$ are the first and last letters of the \pattern,
then we cannot perform a pair compression for $ca$ or $bc$ (for any letter $c$),
nor the $a$ or $b$ block compression.

\begin{lemma}
\label{lem:when appearances are preserved}
If the pair compression (block compression) does not spoil the end, not the beginning,
then there is a one-to-one correspondence
\end{lemma}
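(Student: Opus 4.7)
The plan is to exhibit an explicit bijection $\Phi$ between appearances of \pattern{} in \mytext{} and appearances of \patterni{} in \mytexti. The natural candidate maps an appearance starting at position $i$ in \mytext{} to the position in \mytexti{} that the letter $\mytext[i]$ occupies after compression; the intended inverse $\Psi$ sends an appearance of \patterni{} starting at position $j$ in \mytexti{} to the position in \mytext{} obtained by summing the decompressed lengths of the prefix of \mytexti{} ending just before $j$.

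First I would verify that $\Phi$ lands on an appearance of \patterni. Since the compressions are local, deterministic rewritings, the substring $\mytext[i\twodots i+|\pattern|-1]$ compresses to the same word as \pattern{} does in isolation, provided no rule application straddles the boundary of the appearance. The no-spoiling hypothesis is precisely what rules out such boundary-straddling rewrites: for pair compression $xy \to z$, a left-straddling occurrence forces $y$ to equal the first letter of \pattern{} and a right-straddling one forces $x$ to equal the last letter, both explicitly forbidden; for block compression of $a$, a straddling maximal $a$-block forces \pattern{} to begin or end with $a$, again forbidden. Surjectivity then comes from the symmetric argument applied to $\Psi$: decompressing $\mytexti[j\twodots j+|\patterni|-1]$ letter by letter yields a contiguous substring of \mytext{} equal to the decompression of \patterni, namely \pattern. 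Injectivity of both maps is immediate because compression preserves the left-to-right ordering of letters, so $\Phi$ and $\Psi$ are mutually inverse by a direct chase of definitions.

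The step I expect to be most delicate is the block-compression case, which needs more than a boundary-straddling check: one must also argue that every maximal $a$-block internal to \pattern{} is still maximal when the appearance is viewed inside \mytext, so that the two compressions act identically on the appearance and on \pattern. The no-spoiling assumption that \pattern{} neither begins nor ends with $a$ guarantees that the flanking letters of any internal $a$-block of \pattern{} are the same inside \mytext{} as inside \pattern{} itself, which is exactly what preserves maximality and thus makes $\Phi$ and $\Psi$ well-defined at the letter level. Once this observation is in place, the rest of the argument reduces to routine bookkeeping on positions before and after compression.
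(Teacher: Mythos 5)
Your proposal is correct, and it is essentially the same argument the paper relies on: the paper states Lemma~\ref{lem:when appearances are preserved} without proof, but the correspondence arguments it does spell out for the analogous statements (Lemmas~\ref{lem:compression of ends different} and~\ref{lem:compression of ends same}) have exactly your two-directional structure — forward, the compression acts on an occurrence exactly as on \pattern{} in isolation because the no-spoiling hypothesis excludes any pair or maximal block straddling the occurrence boundary; backward, the fresh letters decompress uniquely, so an occurrence of \patterni{} in \mytexti{} decompresses to an occurrence of \pattern{} in \mytext. Your remark that no-spoiling (\pattern{} neither beginning nor ending with $a$) is what keeps the internal maximal $a$-blocks of an occurrence maximal inside \mytext{} is precisely the delicate point of the block-compression case, so nothing is missing.
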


In the first example, i.e.\ $\mytext = ababa$ and $\pattern = baba$,
spoiling of the pattern's beginning can be circumvented by
enforcing a compression of the pair $ba$ in the first place:
when two first letters of the pattern are replaced by a fresh letter $c$,
then the beginning of the pattern no longer can be spoiled in this phase
(as $c$ will not be compressed in this phase).
We say, that pattern's beginning (end) is \emph{fixed} by a pair or block compression,
if after this compression a first (last, respectively) letter of the pattern
is a fresh letter, so it is not in $L$ and no pair containing it is in $P$.

Our goal is to fix both the beginning and end, without spoiling any of them.
Notice, that the same compression can at the same time fix the beginning and spoil the
end: for instance, for $\mytext = ababa$ and $\pattern = bab$,
compressing $ba$ into $c$ fixes the beginning and spoils the end while compression of $ab$ into $c$
spoils the beginning and fixes the end.
This example demonstrates that the case in which the first and last letter of the pattern are the same
is more problematic than the case in which they are different.

\begin{algorithm}[H]
	\caption{\algSPM: outline}
	\label{alg:example}
	\begin{algorithmic}[1]
	\While{$|\pattern|>1$}
		\State $L \gets $ list of letters in $\pattern$, $\mytext$ and $P \gets $ list of pairs in $\pattern$, $\mytext$
		\If{$\pattern[1] \neq \pattern[|\pattern|]$}
			\State $\algfixdiff(\pattern[1],\pattern[|\pattern|])$
		\Else \Comment{$\pattern[1] = \pattern[m]$}
			\State $\algfixsim(\pattern[1])$
		\EndIf
		\For{$a \in L$} {} compress blocks of $a$ in $\pattern$ and $\mytext$ \label{spm block compression}
		\EndFor
		\For{$ab \in P$} {} compress pair $ab$ in $\pattern$ and \mytext{} \label{spm pair compression}
		\EndFor
	\EndWhile
	\State check, if \pattern[1] appears in \mytext{}
	\end{algorithmic}
\end{algorithm}

There are four main subcases, when trying to fix the beginning, they depend on whether:
\begin{itemize}
	\item the first and last letter of the pattern are the same are not
	\item the first and second letter of the pattern are the same or not
	(i.e.\ whether \pattern{} begins with a pair or a block).
\end{itemize}
We consider them in the order of increasing difficulty.

Suppose that the first and last letter of the pattern are different.
If moreover, the first two letters of the pattern are $ab$ for $a \neq b$,
then we can fix the beginning by compressing the pair $ab$,
before any other pairs (or blocks) are compressed.
This will fix the beginning and not spoil the end (since the last letter is not $a$).
This cannot be applied, when $a = b$, or in other words,
\pattern{} has a leading $\ell$-block of letters $a$ for some $\ell > 1$.
The problem is that each $m$-block for $m \geq \ell$ can begin an appearance
of the pattern in the text.
The idea of the solution is to replace the leading $\ell$-block
of \pattern{} with $a_\ell$, but then treat $a_\ell$ as a `marker'
of a (potential) beginning of the pattern, meaning that each block
$a^m$ for $m\geq \ell$ should be replaced with a string ending with $a_\ell$.
To be more specific:
\begin{itemize}
	\item
	for $m \leq \ell$ each $m$-block is replaced by a fresh letter $a_m$;
	\item
	for $m > \ell$ each $m$-block is replaced by a \emph{pair} of letters $a_ma_\ell$, where $a_m$ is a fresh letter.
\end{itemize}
This modifies the block compression, however,
there is no reason, why we needed to replace $a^m$ by exactly one letter in the block compression,
two letters are fine, as long as:
\begin{itemize}
	\item the replacement function is injective;
	\item they are shorter than the replaced text;
	\item the introduced substring consists does not appear in \pattern{} and \mytext.
\end{itemize}
We shall not formalise this intuition, instead the proofs will simply show that there is a one-to-one
correspondence between appearances of the pattern before and after such modified block compression.

For instance, in the considered example $\mytext = aaab$ and $\pattern = aab$ we obtain
$\mytext = a_3a_2b$ and $\pattern = a_2b$;
clearly \pattern{} has an appearance in \mytext.
In this way we fixed the pattern beginning.

Now it is left to fix the pattern's end, which is done in the same way.
Note that we may need to compress a pair including a letter introduced
during the fixing of the beginning, but there is no additional difficulty in this
(though this somehow contradicts the earlier approach that we do not replace letters introduced in the current phase).

\begin{algorithm}[H]
	\caption{$\algfixdiff(a,a')$}
	\label{alg:fixdiff}
	\begin{algorithmic}[1]
			\State $b \gets \pattern[2]$
			\If{$a \neq b$}	\Comment{Compress the leading pair $ab$}
				\State comrpess $ab$ in $t$ and $p$
			\Else \Comment{$a = b$: compress the $a$ blocks}
				\State let $\ell \gets $ length of the $p$'s $a$-prefix
				\For{$m\leq \ell$}
					\State replace each maximal block $a^m$ in $p$, $t$ by $a_m$		
				\EndFor
				\For{$m>\ell$}
					\State replace each maximal block $a^m$ in $p$, $t$ by $a_ma_\ell$		
				\EndFor
				\If{\mytext{} ends with $a_\ell$} {} remove this $a_\ell$ \Comment{Cannot be used by pattern appearance anyway}
				\EndIf
			\EndIf
			\Comment{Symmetric for the ending letter}
		\end{algorithmic}
\end{algorithm}

The described approach does not work when the first and last letter of the pattern are the same.
As a workaround, we alter the pattern so that the first and last letter are in fact
different and then apply the previous approach.
The idea is to introduce the `markers' $a_L$ and $a_R$ which denote
the potential beginning and ending of the pattern; we assume that $a_L \neq a_R$, even if $a^\ell = a^r$.
They work as the marker $a_\ell$ in the block compression in the previous case:
Let $a^\ell$ and $a^r$ be the $a$-prefix and $a$-suffix of \pattern.
We replace the $a$-prefix ($a$-suffix) of the pattern with $a_L$ ($a_R$, respectively)
and then make a block compression for $a$, in which $a^m$, for $m \geq \ell,r$,
is replaced by $a_R a_m a_L$.
This reflects the fact that $a^m$ can \emph{both} begin and end the pattern appearance,
the former consumes ending $a_L$ and the latter the leading $a_R$.
The exact replacement of $a^m$ for $m \leq \max(\ell,r)$ depends on whether $\ell < r$, $\ell = r$ or $\ell > r$,
for instance, when $\ell = r$:
\begin{itemize}
	\item for $m < \ell$ we replace $m$-blocks with $a_m$;
	\item for $m = \ell$ we replace $\ell$-blocks with $a_Ra_L$;
	\item for $m > \ell$ we replace $\ell$-blocks with $a_Ra_ma_L$.
\end{itemize}
the other replacement schemes are similar.

Note that in this way it is possible that the \pattern{} begins with $a_R$ or ends with $a_L$,
none of which can be used by a pattern appearance.
For simplicity, we remove such $a_R$ and $a_L$.

For $\ell = r = 1$ this actually enlarges the LZ-representations
(and for $\ell = r = 2$ not always decreases the length).
To fix this make additional round of pair replacement, immediately after the blocks replacement:
we make the compression of pairs of the form $\makeset{a_Lb}{b \in \Sigma \setminus \{a_L\}}$
(note that those pairs cannot overlap, so all of them can be replaced in parallel),
followed by compression of pairs $\makeset{ba_R}{b \in \Sigma \setminus \{a_R\}}$.
The latter compression allows compression of the letters introduced in this phase,
i.e.\ $a_Lba_R$ is first compressed into $ca_R$ and then into $c'$.
It can be routinely checked, that this schema  shortens both \pattern{} and \mytext:
to see this observe that $bab'$ is first replaced with $ba_Ra_Lb'$
and then by $ba_Rc$ and finally with $c'c$, which is shorter than $bab'$;
other cases are analysed similarly.
When afterwards a block compression and pair compression is applied, Lemma~\ref{lem:SET shortens} still holds,
though with a larger constant.

\begin{lemma}
\label{lem:compression of ends different}
When the first and last letter of the pattern are different, in $\Ocomp(|\pattern| + |\mytext|)$
time we can fix both the beginning and end without prior spoiling them.

There is a one-to-one correspondence between the pattern appearances in the new text and old pattern appearances in the old text.
\end{lemma}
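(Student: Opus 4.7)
The plan is to analyse \algfixdiff{} directly, showing that each of its two halves (for the leading letter $a=\pattern[1]$ and the trailing letter $a'=\pattern[|\pattern|]$) runs in linear time, fixes its end of the pattern without spoiling either end, and preserves appearances bijectively. The hypothesis $a\neq a'$ will be exactly what guarantees that the two halves do not interfere.

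For the leading half, I would split on whether $\pattern[2]\neq a$ or $\pattern[2]=a$. In the first subcase the algorithm globally compresses the pair $ab$ with $b=\pattern[2]$: a pair straddling the left boundary of an appearance would force that appearance to start with $b\neq a$, and a pair straddling the right boundary would force it to end with $a\neq a'$, so no straddling is possible, and the beginning is neither spoiled nor left unfixed. Appearances therefore correspond one-to-one, and after the compression the pattern starts with the fresh letter replacing $ab$. In the second subcase $\pattern$ begins with a maximal $a$-block of length $\ell$, and the key combinatorial observation is that since $\pattern[\ell+1]\neq a$ every appearance of $\pattern$ in $\mytext$ must consume exactly the last $\ell$ letters of some maximal $a$-block of $\mytext$ of length $m\geq\ell$. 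Each such block therefore contributes a single candidate starting position $i+m-\ell$, where $i$ is the block's start. The marker scheme, replacing $a^m$ by $a_m$ for $m\leq\ell$ and by $a_ma_\ell$ for $m>\ell$, places an $a_\ell$ at precisely this candidate position when $m\geq\ell$ and never for $m<\ell$, so the new pattern, beginning with $a_\ell$, has appearances in bijection with the old ones. Dropping a trailing $a_\ell$ in $\mytext$ is harmless because a pattern match ending at that $a_\ell$ would force $|\pattern|=\ell$, contradicting the main loop's invariant $|\pattern|>1$ together with $a\neq a'$.

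The trailing half is symmetric once one verifies that the leading half has not destroyed its preconditions. Since $a'\neq a$, the identity of $\pattern[|\pattern|]$ survives the leading half: neither the pair compression of $ab$ nor the marker $a$-block scheme ever replaces a symbol whose name is $a'$. The set and lengths of maximal $a'$-blocks of $\mytext$ are also preserved; the only subtlety is the case $b=a'$, where compressing $ab$ may rename the leftmost symbol of certain $a'$-blocks into a fresh letter, but this merely refactors those blocks without altering the combinatorial content of the bijection argument applied at the right end. A repeat of the case split from the previous paragraph, now for $a'$, then yields the one-to-one correspondence for the full \algfixdiff{}.

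Finally, each pass scans $\pattern$ and $\mytext$ a constant number of times, and the pair and (marker-)block replacements fall under the bookkeeping of Lemma~\ref{lem:SET linear}, so the total running time is $\Ocomp(|\pattern|+|\mytext|)$. The main obstacle is the compositional argument of the third paragraph: one must rule out that the leading fix silently destroys the structure on which the trailing fix relies, and the cleanest way I see is the explicit case split on whether the new pair-compression letter, or the marker $a_\ell$, coincides with or neighbours $a'$.
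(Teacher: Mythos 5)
Your leading half is essentially the paper's argument (clean pair compression when $\pattern[2]\neq\pattern[1]$, the marker scheme $a^m\mapsto a_ma_\ell$ with the candidate-position bijection otherwise, harmless removal of a trailing $a_\ell$, linear time via the bookkeeping of Lemma~\ref{lem:SET linear}). The genuine gap is in your third paragraph, the compositional step. Writing $a=\pattern[1]$, $b=\pattern[2]$ and $a'=\pattern[|\pattern|]$, your claim that ``the identity of $\pattern[|\pattern|]$ survives the leading half'' is false: the pair compression of $ab$ replaces every symbol named $b$, so if $b=a'$ and \pattern{} happens to end with $ab$, the pattern's \emph{own last letter} is merged into the fresh letter. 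This is not a mere renaming of leftmost symbols of $a'$-blocks in \mytext; in this situation the correct action is to do nothing (the end is already fixed by a fresh letter), whereas ``a repeat of the case split, now for $a'$'' is not even well defined, since the current pattern no longer ends with $a'$.

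You also miss a second interaction that arises even when $b\neq a'$: the \emph{second-to-last} letter of \pattern{} may be compressed by the leading half (e.g.\ \pattern{} ends with $a\,b\,a'$, so that trailing $ab$ becomes a fresh letter $c$; or, in the block subcase, the maximal $a$-block immediately preceding the final $a'$ has length at least two and becomes some fresh $a_m$). Repeating your case split with the original letters would then compress the pair $(b,a')$ (respectively $(a,a')$), which no longer occurs at the pattern's end; the pattern's last letter remains an old letter, the end is \emph{not} fixed, and the block and pair compressions later in the same phase of \algSPM{} may merge it with a letter to the right of an appearance, losing that appearance --- exactly what the lemma must exclude. The paper's proof handles precisely these two perturbations: if the last letter was already compressed, skip the trailing fix; if the second-to-last letter was compressed into a fresh $c$, compress the pair formed by $c$ and the last letter, noting this cannot spoil the (already fixed, fresh) beginning because the old last letter cannot equal a fresh symbol. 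Your closing remark about a ``case split on whether the new letter coincides with or neighbours $a'$'' points in this direction, but the proof as written asserts the wrong invariant instead of carrying that case split out.
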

\begin{proof}
It was already described, how to perform the appropriate operations,
it is left to analyse their properties and implementations.
 
\paragraph{Fixing the beginning}
Let $a'$ be the second letter of \pattern.
Suppose first that $a' \neq a$. We (naively) perform the modified compression of the pair $aa'$,
by reading both \pattern{} and \mytext{} from the left to the right.
Note that the beginning and end were not spoiled in the process, and so there is a one-to-one correspondence
of new and old pattern appearances.

So suppose that $a' = a$, let $a^\ell$ be the $a$-prefix of \pattern.
Then we (naively) perform the compression of blocks for the letter $a$.
We show that no pattern appearance was lost, nor that any new pattern `appearance' was introduced.
So let $\mytext = w_1a^mw_2w_3$ and $\pattern = a^\ell w_2$, for $m > \ell$.
Observe that as the first and last letter of the pattern are different, we know that $w_2 \neq \epsilon$.
Let $w_i$ be replaced by $w_i'$.
Then the new text is $\mytexti = w_1'a_ma_\ell w_2'w_3'$ and the new pattern is $\patterni = a_\ell w_2'$,
thus there is a pattern appearance in the new text.
The case in which $m = \ell$ is shown in the same way.

Conversely, let $w_1'a_\ell w_2'w_3' $ be the new text and $a_\ell w_2'$ the new pattern.
Then the pattern was obtained from $a^\ell w_2$ for some $w_2$.
Furthermore, $w_1' a^\ell$ was obtained from some $w_1 a^m$ for $m \geq \ell$,
(this is the only way to obtain $a_\ell$),
also, the only way to obtain $w_2'$ is from the same $w_2$.
Hence, no new pattern appearance was introduced.

This fixes the pattern beginning and as the last letter of \pattern{} is not $a$, it did not spoil the pattern end.

\paragraph{Fixing the end}
We want to apply exactly the same procedure at the end of the \pattern.
However, there can be some perturbation, as fixing the beginning
might have influenced the end:
\begin{itemize}
	\item the last letter could have been already compressed,
	which can happen only when $b=a'$.
	In this case we got lucky and we make no additional compression,
	as the end of the pattern has been already fixed.
	\item the second last letter (say $b'$) of \pattern{} equals $a'$
	and it was compressed, into the letter $c$
	(either due to pair compression or block compression).
	In this case we make the compression of the pair $ca'$,
	even though $c$ is a fresh letter.
	Note, that as $c$ is the first letter of this pair,
	this will not spoil the beginning of the pattern.
\end{itemize}
The rest of cases, as well as the analysis of the above exceptions,
is the same as in the case of fixing the beginning.
\qed
\end{proof}

Now we consider the more involved case in which the first and last letter
of the pattern are the same.

\begin{algorithm}[H]
	\caption{\algfixsim}
	\label{alg:fixsim}
	\begin{algorithmic}[1]
			\State let $\ell \gets$ the length of $p$'s $a$-prefix, $r \gets$ the length of $a$-suffix
			\State replace the leading $a^\ell$ and ending $a^r$ in \pattern{} by $a_L$ and $a_R$
			\If{$\ell = r$}
				\For{$m < \ell$}
					\State replace each maximal $a^m$ in $p$, $t$ by $a_m$
				\EndFor
				\State replace each maximal $a^\ell$ in $p$, $t$ by $a_Ra_L$
				\For{$m > \ell$}
					\State replace each maximal $a^m$ in $p$, $t$ by $a_Ra_ma_L$
				\EndFor
			\EndIf
			\If{$\ell < r$}
				\For{$m < \ell$}
					\State replace each maximal $a^m$ in $p$, $t$ by $a_m$
				\EndFor
				\State replace each maximal $a^\ell$ in $p$, $t$ by $a_L$
				\For{$r > m > \ell$}
					\State replace each maximal $a^m$ in $p$, $t$ by $a_ma_L$
				\EndFor
				\For{$m \geq r$}
					\State replace each maximal $a^m$ in $p$, $t$ by $a_Ra_ma_L$
				\EndFor
			\EndIf
			\If{$\ell > r$}
				\For{$m < r$}
					\State replace each maximal $a^m$ in $p$, $t$ by $a_m$
				\EndFor
				\For{$r \leq m < \ell$}
					\State replace each maximal $a^m$ in $p$, $t$ by $a_Ra_m$
				\EndFor
				\State replace each maximal $a^r$ in $p$, $t$ by $a_Ra_L$
				\For{$m > \ell$}
					\State replace each maximal $a^m$ in $p$, $t$ by $a_Ra_ma_L$
				\EndFor
			\EndIf
			\If{\mytext{} ends with $a_L$} remove this $a_L$ \EndIf
			\If{\mytext{} begins with $a_R$} remove this $a_R$ \EndIf
			\State compress all pairs of the form $a_Lb$ with $b \in \Sigma \setminus \{ a_L \}$
			\State compress all pairs of the form $ba_R$ with $b \in \Sigma \setminus \{ a_R \}$
			\If{$1 = r < \ell$}
				\State compress all pairs of the form $a_1b$ with $b \in \Sigma \setminus \{ a_1 \}$
			\EndIf
	\end{algorithmic}
\end{algorithm}

\begin{lemma}
\label{lem:compression of ends same}
When the first and last letter of the pattern are equal,
in $\Ocomp(|\pattern| + |\mytext|)$ time we can fix both the beginning and end without prior spoiling them.

There is a one-to-one correspondence between the pattern appearances in the new text and old pattern appearances in the old text.
\end{lemma}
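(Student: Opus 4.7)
My plan is to follow the steps of \algfixsim{} and argue case-by-case that the described substitutions (i) can be carried out in linear time, (ii) do not spoil the beginning or the end of the pattern, and (iii) induce a bijection between old pattern appearances in \mytext{} and new pattern appearances in the rewritten text.

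First, I would set up the notation: let $a^\ell$ be the $a$-prefix of \pattern{}, $a^r$ its $a$-suffix, and observe that since the whole of \pattern{} is not one $a$-block (it has length at least $2$ and the non-first non-last positions could be anything, including $a$), we can treat $w := a^{-\ell}\pattern a^{-r}$ as a substring that begins and ends with letters other than $a$ (or is empty when $|\pattern|=\ell+r$; this corner case I would handle separately). After the replacements every maximal $a$-block $a^m$ in the text is encoded as a string whose left end is $a_R$ iff $m \geq r$ and whose right end is $a_L$ iff $m \geq \ell$, with an optional $a_m$ in the middle when $m$ is strictly larger than the relevant threshold; the new \pattern{} becomes $a_L w a_R$ (possibly with additional internal letters from compressing interior $a$-blocks of $w$, handled identically on both sides). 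This is the common structural fact for all three branches $\ell = r$, $\ell < r$, $\ell > r$, and I would state and prove it once as a substitution rule.

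The bijection argument then runs as in the proof of Lemma~\ref{lem:compression of ends different}. For the \emph{forward} direction, take an appearance $\pattern = a^\ell w a^r$ at position in $\mytext = u_1 a^{m_1} w a^{m_2} u_2$ with $m_1 \geq \ell$ and $m_2 \geq r$; by the substitution rule, $a^{m_1}$ is rewritten to a string ending in $a_L$ and $a^{m_2}$ to one starting with $a_R$, so $a_L w' a_R$ appears in the rewritten text exactly where \pattern's image sits. For the \emph{reverse} direction, note that the letters $a_L$ and $a_R$ are fresh and occur in the rewritten text \emph{only} as the right end of the image of some maximal $a$-block of length $\geq \ell$, respectively the left end of one of length $\geq r$; hence any appearance of $\patterni = a_L w' a_R$ forces the two surrounding $a$-blocks of \mytext{} to have lengths $\geq \ell$ and $\geq r$ and $w'$ to be the image of $w$, yielding a unique preimage appearance. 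The removals of a trailing $a_L$ or leading $a_R$ of \mytext{} discard positions that could not start/end any pattern appearance and so preserve the bijection.

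For the subsequent round of pair compressions ($a_Lb$, $ba_R$, and, in the $1=r<\ell$ case, $a_1b$), the only subtlety is that these letters can be introduced in the current phase; however $a_L$ is never the right end of a pair to be compressed and $a_R$ is never the left end, so the three pair-compression rounds can be executed one after another without overlap, none of them touches the newly produced first letter $a_L$ on its left side nor the last letter $a_R$ on its right side, and consequently the beginning and end stay fixed. A short calculation on the possible shapes $bab'$, $ba\!\cdots\!ab'$ etc.\ of the original text, analogous to the one sketched in the excerpt, shows that the composite replacement is length-decreasing. The running time is linear because every step is a single left-to-right scan of \pattern{} and \mytext{} (after an \algradix{} of $a$-block lengths, for which Claim~\ref{clm:consecutive values} again applies). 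I expect the main obstacle to be the bookkeeping in the case split $\ell<r$ vs.\ $\ell=r$ vs.\ $\ell>r$: one must check that in each subcase the rewriting of $a^m$ for every $m$ in the relevant range really yields the unique-preimage property above, in particular that the choice to replace $a^\ell$ by $a_L$ alone (rather than $a_Ra_L$) in the $\ell<r$ branch does not allow a spurious pattern appearance to arise from a text block $a^m$ with $m<r$.
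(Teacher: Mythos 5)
Your route is the same as the paper's: the identical three block-replacement schemes (your structural fact --- left end $a_R$ iff $m\geq r$, right end $a_L$ iff $m\geq\ell$ --- is a correct summary of all three branches), the same forward/backward correspondence based on the provenance of the fresh letters $a_L$ and $a_R$, and the same follow-up rounds of pair compression. The worry you flag about replacing $a^\ell$ by $a_L$ alone in the $\ell<r$ branch is indeed settled by that structural fact, exactly as in the paper's reverse-direction argument.

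The genuine gap is in your claim that the pair-compression rounds cannot spoil the already-fixed ends. After the first round the pattern no longer begins with $a_L$ but with the fresh letter $c$ that replaced $a_Lb$ (possibly replaced again by some $c'$ in the second round), so ``none of them touches the newly produced first letter $a_L$ on its left side'' is not the statement you need; the argument must be about $c$ and $c'$. For the first two rounds this is easy (spoiling would force $a_L=a_R$ or $c=a_R$, impossible for fresh letters), but in the special case $1=r<\ell$, where pairs of the form $a_1b$ are compressed, the dangerous pair $a_1c$ (or $a_1c'$) is of exactly the form being compressed, so freshness and the shape of the pair lists do not rule it out. The paper excludes it by a structural property of the $r<\ell$ replacement scheme: a surviving single $a_1$ occurs only as part of $a_Ra_1$, and $a_L$ is never immediately preceded by $a_1$, so a substring $a_1a_Lb$ --- and hence $a_1c$ after the first round --- cannot exist. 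Your plan is missing this observation, which is precisely the delicate step of the proof. A smaller omission: you defer the corner case $\pattern\in a^*$ without saying how to treat it; there the correspondence is obtained differently (ordinary $a$-block compression, marking each $a_m$ with $m\geq\ell$ as standing for $m-\ell+1$ appearances), and some such device is needed since a single block then carries many appearances.
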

\begin{proof}
Let the first (and last) letter of the pattern be $a$.
There is a simple special case, when $\pattern \in a^*$.
Then it is enough to perform a usual compression of $a$ blocks and mark the letters $a_m$ for $m \geq \ell$,
as each such letter corresponds to $m-\ell+1$ appearances of the pattern.
To this end we perform the $a$-blocks compression (for blocks of $a$ only),
which includes the sorting of blocks according to their length.
Hence blocks of length at least $\ell$ can be identified and marked.

So consider now the case, in which the pattern has some letter other than $a$, i.e.\
$\pattern = a^\ell u a^r$ where $u \neq \epsilon$ and it does not begin, nor end with $a$.
The main principle of the replacement was already discussed,
i.e.\ first a tuned version of the $a$ blocks compression is performed,
which introduces markers $a_L$ and $a_R$ denoting the pattern beginning and end,
respectively; then a compression of the pairs of the form $\makeset{a_Lb}{b \in\Sigma \setminus \{a_L\}}$
and finally $\makeset{ba_R}{b \in\Sigma \setminus \{a_R\}}$.

While the block compression scheme was already given for $r = \ell$,
the ones for $\ell > r$ and $r < \ell$ were not, we start with their precise description,
see also Algorithm~\ref{alg:fixdiff}.

The replacement of blocks for $\ell < r$ is as follows:
\begin{itemize}
	\item for $m<\ell$ maximal blocks $a^m$ are replaced by $a_m$;
	\item for $m = \ell$ maximal blocks $a^\ell$ are replaced with $a_L$;
	\item for $\ell < m <r$ maximal blocks $a^m$ are replaced with $a_ma_L$;
	\item for $m \geq r$ maximal blocks $a^m$ are replaced with $a_Ra_ma_L$.
\end{itemize}
As in the case of the normal block compression,
for $m = 1$ we identify $a_1$ with $a$ (and do not make any replacement) and allow
further in the phase the compression of pairs including $a$.

The compression of blocks for $r<\ell$ is similar:
\begin{itemize}
	\item for $m<r$ blocks $a^m$ are replaced by $a_m$;
	\item for $r \leq m <\ell$ blocks $a^m$ are replaced with $a_Ra_m$;
	\item for $m = \ell$ blocks $a^\ell$ are replaced with $a_Ra_L$;
	\item for $m > \ell$ blocks $a^m$ are replaced with $a_Ra_ma_L$.
\end{itemize}
Again, we identify $a_1$ with $a$ (and do not make any replacement) and allow
further compression of pairs including $a$.

After the block compression, regardless of the actual scheme,
we compress pairs of the form $\makeset{a_Lb}{b \in \Sigma \setminus \{a_L\}}$ and then $\makeset{ba_R}{b \in \Sigma \setminus \{a_R\}}$.
Since in one such group pairs do not overlap, this can be easily done in linear time using \algradix,
as in the case of pair compression in \algSET.
(When compressing the second group of pairs we allow compression of letters introduced in the compression in the first group.)

Lastly, there is a special case: when $1 = r < \ell$ the compression of the pairs $\makeset{a_1b}{b \in \Sigma \setminus a_1}$
is also performed.
The running time is again linear.
When the \mytext{} after the block compression begins (ends) with $a_R$ ($a_L$, respectively), we remove it from \mytext,
as this letter cannot be used by any pattern appearance anyway.

Clearly both the beginning and end were fixed during the block compression,
we still need to guarantee that pattern appearances were not lost nor gained in the process.
The argument is similar as in Lemma~\ref{lem:compression of ends different}.
So let $\pattern = a^\ell w_2 a^r$, where $w \neq \epsilon$,
observe that we can make this assumption as we do not consider the case in which $\pattern \in a^*$.
Let $\mytext = w_1a^mw_2a^nw_3$, where $m \geq \ell$ and $n \geq r$.
There are several cases, we focus on one, the other are shown in the same way.
Suppose that $m > \ell >r$ and $\ell > n > r$.
Let $w_i$ be replaced by $w_i'$.
Then $\patterni = a_Lw_2'a_R$, while $\mytexti = w_1'a_Ra_ma_Lw_2'a_Ra_nw_3'$,
so there is an appearance of the pattern.
The other cases are shown similarly.

In the other direction, suppose that $\patterni = a_L w_2'a_R$ appears in $\mytexti = w_1'a_L w_2'a_Rw_3'$.
Observe that $w_2'$ in both was obtained from the same $w_2$, furthermore the only way to obtain $a_L$ ($a_R$) in $\mytexti$
is from $a^m$ ($a^n$, respectively) for some $m \geq \ell$ (some $n \geq a^r$, respectively).
Thus $\pattern = a^\ell w_2 a^r$ appeared in $\mytext = w_1 a^m w_2 a^n w_3$.

So not it is left to show that the following pair compressions do not spoil the beginning or end of the (new) pattern.
Consider the first compression of the pairs of the form $\makeset{a_Lb}{b \in \Sigma \setminus \{a_L\}}$:
is it possible that it spoils the end?
This can happen, when the last letter of the pattern, i.e.\ $a_R$ is compressed with a letter to its right.
Hence, $a_L = a_R$, which is not possible, as $a_L$ and $a_R$ are different symbols.
So consider the second compression phase, in which pairs of the form
$\makeset{ba_R}{b \in \Sigma \setminus a_R}$ are compressed.
Suppose that the beginning was spoiled in the process.
Let $b$ be the letter compressed with the leading $a_L$ in the pattern
(by the assumption that $\pattern \notin a^*$, such $b$ exists)
and let $c$ be the fresh letter that replaced $a_Lb$.
Then the beginning is spoiled, when pair of the form $xc$ is compressed,
but this implies $a_R = c$, which is not possible.
Lastly, consider the special case, i.e.\ $r=1 < \ell$,
in which additionally pairs of the form $a_1x$ were compressed.
This cannot spoil the end, as the last letter of \pattern{} is not $a_1$.
Suppose that this spoils the beginning.
We already know that $a_Lb$ was replaced with $c$.
As already shown, it could not be compressed with the letter to the left,
however, it is possible that it was compressed with the letter to the right, and replaced with $c'$.
Still the only possibility to spoil the beginning is to compress $a_1c$ or $a_1c'$, depending on the case.
In both cases this implies that before first compression phase there was a substring $a_Ra_1a_Lb$,
which contradicts our replacement scheme.
\qed
\end{proof}

Now, when the whole replacement scheme is defined, it is time to show that fixing 
preserves the main property of original \algSET:
that in each round the lengths of \pattern{} and \mytext{} are reduced by a constant factor.
Roughly, our replacement schema took care of that: for instance, even though we replaced a single $a$ with $a_Ra_L$,
we made sure that $a_R$ is merged with a previous letter and $a_L$ is merged with a following letter.
Effectively we replaced $3$ letters with $2$.
This is slightly weaker than replacing $2$ letters with $1$, but still shortens by a constant factor.
The other cases are analysed similarly.
The following lemma takes care of the details.
 
\begin{lemma}
\label{lem:fixing shortens}
When $|\pattern|, |\mytext| > 1$ then one phase of \algSPM{} shortens those lengths by a constant factor.
\end{lemma}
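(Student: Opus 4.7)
The phase splits into three stages: (i) the fixing step (either \algfixdiff{} or \algfixsim), which may insert marker letters and includes a few internal pair compressions; (ii) the block compression for every letter in $L$; and (iii) the pair compression for every pair in $P$. The plan is to control the length change stage by stage and then combine the estimates by an amortized argument over consecutive letters at the start of the phase.

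\textbf{Stage (i): bounded blow-up with immediate absorption.} A case inspection of \algfixdiff{} and \algfixsim{} shows that every old letter is replaced by at most $3$ new letters, and this worst case occurs only inside \algfixsim{} when a short maximal $a$-block is replaced by $a_Ra_ma_L$ (or, for $m=1,2$, by $a_Ra_L$). Crucially, each introduced $a_R$ is immediately consumed from the left by the mandatory pair compression of $\{ba_R : b\ne a_R\}$, and each introduced $a_L$ is consumed from the right by $\{a_Lb : b\ne a_L\}$; strays of $a_L$ at the end and $a_R$ at the start of \mytext{} are deleted explicitly. Consequently, if an old letter $a$ is expanded into $a_Ra_L$, then together with its left neighbour $x$ and right neighbour $y$ the three letters $xay$ occupy at most two letters ($xa_R$ and $a_Ly$, possibly merged further) after the fixing step. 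The $1=r<\ell$ sub-branch, and the symmetric variants, are handled by the analogous extra pair compression of $\{a_1b\}$. In every branch the same accounting applies, so the text obtained after stage (i) has length at most a constant times $|\pattern|+|\mytext|$, with the original letters partitioned into groups of bounded size whose images are strictly shorter than the groups themselves by at least one letter per triple.

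\textbf{Stages (ii)--(iii): merging at least half of the remaining pairs.} After fixing, the main loop runs the standard block and pair compressions from \algSET{}. The proof of Claim~\ref{clm:two letters are compressed} (which inspects two consecutive letters and concludes that either they are equal and hence absorbed by a block compression, or they form a pair in $P$ that is compressed unless one of the letters has already been compressed) carries over verbatim, since stages (ii) and (iii) are exactly the compressions from \algSET. Thus at the end of the phase at least a $\tfrac{2}{3}$-fraction of the letters present after stage (i) have been compressed, so stages (ii)--(iii) shorten the fixed text by a factor of at most $\tfrac{5}{6}$, as in Lemma~\ref{lem:SET shortens}.

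\textbf{Combining.} Amortize over triples of consecutive letters at the start of the phase. By stage (i) each such triple is mapped to at most $3$ letters after fixing (and usually fewer once the $a_R$- and $a_L$-merges are counted), and by stages (ii)--(iii) these are further shrunk by a factor bounded away from $1$. Multiplying the two bounds gives a constant $c<1$ such that both $|\pattern|$ and $|\mytext|$ shrink by a factor of at most $c$ in the phase, as long as both initially exceed $1$. The main obstacle, and the bulk of the work, is the case inspection in stage (i): one must verify that in every branch of \algfixsim{} (including the awkward $\ell=r=1$, $\ell=r=2$ and $1=r<\ell$ sub-cases) the markers $a_L,a_R$ introduced by the modified block compression are actually consumed by the follow-up pair compressions, so that the local blow-up really is paid back within the same phase rather than merely deferred.
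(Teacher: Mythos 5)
Your stage (i) insight is the right one and is exactly what the paper's proof formalises: the markers $a_R$ and $a_L$ produced by the modified block compression are absorbed by the mandatory follow-up compressions of the pairs $ba_R$ and $a_Lb$, so an $a$-block grouped with its non-$a$ neighbours is already shorter after fixing. The flaw is in how you combine the stages. Claim~\ref{clm:two letters are compressed} does \emph{not} carry over to the string obtained after fixing: it concerns two consecutive letters present at the \emph{beginning} of the phase, and its proof relies on the pair belonging to $P$ (or the letter to $L$), lists that are computed before fixing. The letters introduced by fixing are fresh and hence deliberately absent from $L$ and $P$, so two consecutive fresh letters (for instance the products of the $a_Lb$- and $ba_R$-compressions, which can fill a whole neighbourhood of an $a$-block) need not be touched in stages (ii)--(iii) at all. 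Consequently the assertion that stages (ii)--(iii) shrink the post-fixing text by a factor $5/6$ is false, and ``multiplying the two bounds'' does not give a constant $c<1$: for a triple converted entirely into fresh letters the second factor is $1$, while for a triple untouched by fixing the first factor is $1$ as well (you only bound it by $3\mapsto 3$). The shrinking must come from stage (i) for the former triples and from stages (ii)--(iii) for the latter, so the two estimates have to be combined additively over a \emph{disjoint} partition, not multiplicatively over the whole string.

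That is precisely what the paper does: it partitions the letters present at the start of the phase into disjoint fragments and proves \Frarefall{} --- every fragment is compressed to a strictly shorter string \emph{by the end of the phase}, regardless of which stage does it, and letters outside fragments are never adjacent --- after which a single averaging as in Lemma~\ref{lem:SET shortens} gives the constant factor. Two further points of your sketch would need tightening to reach that form. First, your triples $xay$ overlap in the shared neighbour when two $a$-blocks are separated by a single letter, so they double-count; the paper's fragments $x^{(1)}a^{m_1}x^{(2)}$ and $a^{m_i}x^{(i+1)}$ (for $i>1$) are chosen disjoint exactly to avoid this, with the leading/ending blocks of \pattern{} and \mytext{} (and the deleted stray $a_R$, $a_L$) handled separately. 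Second, in the \algfixdiff{} branch the replacement $a^m\mapsto a_ma_\ell$ for $m>\ell>1$ is \emph{not} absorbed by any follow-up pair compression --- there the gain comes solely from $m\geq 3$ letters being replaced by $2$ --- so the ``markers are consumed'' narrative does not cover that case and it must be argued as a separate (easy) instance of the same fragment bookkeeping.
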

\begin{proof}
We group the compressed substrings into \emph{fragments},
one fragment shall intuitively correspond to small substring that was compressed into some letters.
Letters, that were not altered, are not assigned to fragments.
We show that there is a grouping of letters in \pattern{} and \mytext{} into fragments (in the beginning of the phase) 
such that 
\begin{enumerate}[({Fra} 1)]
	\item there are no consecutive letters not assigned to fragments; \label{partition 1}
	\item fragments of length $2$ are compressed into one letter till the end of the phase; \label{partition 2}
	\item fragments of length $3$ are compressed into at most two letters till the end of the phase; \label{partition 3}
	\item fragments of length $4$ or more are compressed into at most three letters till the end of the phase. \label{partition 4}
\end{enumerate}
This shows that the compression ratio is a little weaker then in case of \algSET,
see Lemma~\ref{lem:SET shortens}, but still by a constant factor.
So it is left to show that fixing, followed by block compression and pair compression,
allows grouping into fragments satisfying \Frarefall.

\begin{clm}
\label{lem:compression works different}
When the first and last letter of the pattern are different,
\Frarefall{} hold.
\end{clm}
\begin{proof}

Suppose that there are two consecutive letters not assigned to fragments, let them be $ab$.
They were not replaced in the phase, i.e.\ they are not fresh letters.
The analysis splits, depending on whether $a=b$ or not.
\begin{description}
	\item[$a=b$]
	Then this pair of consecutive letters is either compressed in the fixing of the beginning and end or it is going to be compressed
	in line~\ref{spm block compression}, contradiction.
	\item[$a \neq b$]
	Then the pair $ab$ is either compressed in the fixing of the beginning and end or it is going to be compressed
	in line~\ref{spm pair compression} contradiction.
\end{description}
It is left to define the fragments satisfying \Fraref{2}--\Fraref{4}.
In most we replace pairs or blocks with one letter only,
so this clearly satisfies \Fraref{2}--\Fraref{4}.
There is an exception: when \pattern{} begins (ends) with $a^\ell$ for $\ell > 1$ ($b^r$ for $r>1$, respectively),
then $a^m$ for $m>\ell$ is replaced with $a_ma_\ell$ ($b_rb_m$, respectively).
However, as $\ell > 1$ ($r>1$, respectively), this shows that $m>2$ and thus fragments replaced with $2$ letters are of length at least $3$,
which shows \Fraref{2}--\Fraref{4}
\qed
\end{proof}

When the first and last letter of the pattern are the same, the proof follows a similar idea.
We need to accommodate the special actions that were performed during the fixing of the beginning and end.

\begin{clm}
\label{lem:compression works same}
When the first and last letter of the pattern are the same \Frarefall{} hold.
\end{clm}
\begin{proof}

Let the first and last letter of \pattern{} be $a$.
Except for blocks of $a$ (and perhaps letters neighbouring them), all fragments are defined in the same way
as in Claim~\ref{lem:compression works different}, so we focus on the blocks of $a$.

The $a^m$ blocks for $m < \min(\ell,r)$ are replaced in the same way as in Claim~\ref{lem:compression works different},
so we deal mainly with $a^m$ for $m \geq \min(\ell,r)$. 
Let us first consider a simpler case, in which $\ell, r >1$.
Since the fragments depend also on the letters neighbouring the $a$-blocks,
take the longest possible substring of \pattern{} (or \mytext) of the form
$$
x^{(1)}a^{m_1}x^{(2)}a^{m_2}x^{(3)} \cdots x^{(k)}a^{m_k} x^{(k+1)},
$$
where $x^{(i)} \in \Sigma$ and $m_i > 1$.
Such substrings cover all blocks of $a$ except the leading and ending $a$-blocks in pattern and text.
To streamline the analysis, we deal with them separately at the end.

During the replacement, each block $a^{m_i}$ may introduce a letter $a_L$ to the right,
but it is compressed with $x^{(i+1)}$ and letter $a_R$ to the left, which is compressed with $x^{(i)}$.
Then the block is replaced with a single letter $a_{m_i}$ (or no letter at all, when $\ell = r = m_i$).
Hence the resulting string is
$$
y^{(1)}a_{m_1}y^{(2)}a_{m_2}y^{(3)} \cdots y^{(k)}a_{m_k} y^{(k+1)},
$$
where each $y^{(i)} \in \Sigma$ and each $a_{m_i}$ is either a letter or $\epsilon$.
Then define the first fragment as $x^{(1)}a^{m_1}x^{(2)}$, which is replaced with $y^{(1)}a_{m_1}y^{(2)}$,
and each consecutive fragment as $a^{m_i}x^{(i+1)}$, for $i>1$, which is replaced with $a_{m_i}y^{(i+1)}$.
Since $m_i>1$, such fragments satisfy \Frarefall.
Each other fragment is defined as in Claim~\ref{lem:compression works different},
i.e.\ letters compressed into a single symbol form a fragment.
The same argument as in Claim~\ref{lem:compression works different} shows
that \Frarefall{} holds for such defined grouping.

Now we consider the special cases omitted in the previous analysis,
i.e.: $1 = \ell = r$, $1 = \ell < r$ and $1 = r < \ell$.
In these case we consider similar maximal substrings
$$
x^{(1)}a^{m_1}x^{(2)}a^{m_2}x^{(3)} \cdots x^{(k)}a^{m_k} x^{(k+1)},
$$
of \pattern{} (or \mytext), but we allow $m_i = 1$.
Observe that as in the previous case, each $a$ block is covered by such susbtrings,
except for the leading and ending $a$ blocks of \pattern{} and \mytext.
To streamline the argument, we consider them at the end.

The fragments are defined in the similar way: the first one as
$x^{(1)}a^{m_1}x^{(2)}$ and $a^{m_i}x^{(i+1)}$ for $i>1$.
It remains to show that \Frarefall{} hold in this case as well.
Note that when $m_i>1$ the analysis is the same as previously, so we skip it and focus on the case of $m_i = 1$.
There are three cases, depending on the relation between $\ell$ and $r$:
\begin{description}
	\item[$\ell = r = 1$]
	Then $a$ is replaced with $a_Ra_L$ and $a_R$ is merged with $x^{(i)}$ while $a_L$ with $x^{(i+1)}$.
	Hence, for $i>2$ the fragment  $a^{m_i}x^{(i+1)}$ is replaced with $y^{(i+1)}$ alone,
	and for $i=1$ the $x^{(1)}a^{m_1}x^{(2)}$ is replaced with $y^{(1)}y^{(2)}$.
	So \Frarefall{} hold in this case.
	\item[$1 = \ell < r$]
	Then $a$ is replaced with $a_L$ which is then merged with $x^{(i+1)}$ and the rest of the analysis follows as in the first case.
	\item[$1 = r < \ell$]
	In this case $a$ is replaced with $a_Ra_1$, then $a_R$ is merged with $x^{(i)}$.
	Furthermore, in this special case, $a_1$ is also compressed, to $x^{(i+1)}$.
	Now, the rest of the analysis follows as in the first case.
\end{description}
The rest of the argument follows as in the proof of Claim~\ref{lem:compression works different},
and so it is omitted.

Concerning the leading and ending $a$-blocks observe that in case of the \pattern,
the $a^\ell$ ($a^r$) is replaced with $a_L$ ($a_R$, respectively),
which is later compressed with the letter to the right (left, respectively).
So the leading $a^\ell$ (ending $a^r$) can be added to the fragment to its right (left, respectively)
and \Frarefall{} still holds.

For the leading $a$-block of \mytext, we extend the definition and consider a substring
$$
x^{(0)}a^{m_0}x^{(1)}a^{m_1}x^{(2)} \cdots x^{(k)}a^{m_k} x^{(k+1)},
$$
where $a^{m_0}$ is the leading $a$ block of \mytext{} and $x^{(0)} = \epsilon$ is an imaginary beginning marker.
Then the whole analysis works in the same way: the only difference is that $a_R$ that may be produced by $a^{m_0}$
to the left is removed from \mytext, which is simulated by 'merging' it into the imaginary beginning marker $x^{(0)}$.
Otherwise, the fragments are defined in the same way.
The analysis for the ending block of $a$s is similar.
\qed
\end{proof}
\qed
\end{proof}

Concerning other operations, they are implemented in the same way as in case of \algSET,
so in particular the pair compression and block compression run in $\Ocomp(|\pattern| + |\mytext|)$,
see Lemma~\ref{lem:SET linear}.
Furthermore, since the beginning and end are fixed,
those operations do not spoil pattern appearances, see Lemma~\ref{lem:when appearances are preserved}.

As a corollary we are now able to show that \algSPM{} runs in linear time and preserves the appearances of the pattern,
which follws from Lemma~\ref{lem:compression of ends different} and \ref{lem:compression of ends same}.
\begin{lemma}
\algSPM{} runs in $\Ocomp(n+m)$ time and correctly reports all appearances of a pattern in text.
\end{lemma}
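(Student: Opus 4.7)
The plan is to prove both claims by induction on the number of phases executed by \algSPM, relying on the lemmas already established in this section, and to treat correctness and running time as two independent inductive arguments with a shared per-phase analysis.

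For correctness I would maintain the invariant that at the beginning and end of each phase there is a bijection between the current appearances of \pattern{} in \mytext{} and the original ones. Within a single phase the algorithm first calls \algfixdiff{} or \algfixsim{} depending on whether $\pattern[1] \neq \pattern[|\pattern|]$ or $\pattern[1] = \pattern[|\pattern|]$. By Lemmas~\ref{lem:compression of ends different} and~\ref{lem:compression of ends same}, this step both establishes a bijection between old and new appearances \emph{and} fixes both ends, meaning that after the call the first and last letters of \pattern{} are fresh letters that do not occur in the lists $L$ and $P$ built earlier. Consequently the subsequent block compression and pair compression loops cannot spoil the beginning or the end, so Lemma~\ref{lem:when appearances are preserved} applies and gives another bijection. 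Composing the per-phase bijections and observing that the loop exits with $|\pattern|=1$, at which point the appearances of $\pattern[1]$ in the current \mytext{} can be read off by a linear scan, yields the final one-to-one correspondence with the original pattern appearances.

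For the running time I would bound the cost of one phase by $\Ocomp(|\pattern|+|\mytext|)$: the fixing call is linear by Lemmas~\ref{lem:compression of ends different} and~\ref{lem:compression of ends same}, and the block and pair compression loops are linear by the same \algradix{}-based argument used in Lemma~\ref{lem:SET linear}, with the alphabet kept inside a polynomially bounded interval exactly as in Claim~\ref{clm:consecutive values}. By Lemma~\ref{lem:fixing shortens}, as long as $|\pattern|,|\mytext|>1$ each phase shrinks both lengths by a constant factor, so the per-phase costs form a geometric series summing to $\Ocomp(n+m)$. The main subtlety I anticipate is the bookkeeping for the renumbering scheme: \algfixsim{} may introduce several distinct markers $a_L$, $a_R$, $a_m$ per block rather than a single fresh letter, so I would need to check that the total number of new letter values created in one phase is still $\Ocomp(|\pattern|+|\mytext|)$; this is immediate since each new marker arises from a replacement that strictly shortens the strings, and therefore the inductive argument of Claim~\ref{clm:consecutive values} carries over unchanged.
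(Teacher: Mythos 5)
Your proposal is correct and follows essentially the same route as the paper: correctness via the one-to-one correspondences of Lemmas~\ref{lem:compression of ends different} and~\ref{lem:compression of ends same} plus the observation that fixed ends cannot be spoiled by the subsequent compressions (Lemma~\ref{lem:when appearances are preserved}), and running time via the linear per-phase cost (Lemma~\ref{lem:SET linear} and Claim~\ref{clm:consecutive values}) combined with the constant-factor shortening of Lemma~\ref{lem:fixing shortens}, giving a geometric series summing to $\Ocomp(n+m)$. Your added remark on renumbering the extra markers $a_L$, $a_R$, $a_m$ is a fair point of care but resolves exactly as you say, so no divergence from the paper's argument.
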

The running time is clear: each phase takes linear time and the length of text and pattern are shortened by a constant factor in a phase.

\subsubsection*{Building of a grammar revisited}
Note that the more sophisticated replacement rules in the fixing of beginning and end endangers our
view of compression as creation of a context free grammar for \pattern{} and \mytext.
Still, this can be easily fixed.

For the fixing of the beginning when the first and last letter are different
there are symmetric actions performed at the beginning and at the end, so we focus only on the former.
The problematic part is the replacement of $a^m$ for $m > \ell$ with $a_ma_\ell$.
Then we simply declare that $a_l$ replaced $a^\ell$ (note that this is consistent with the fact that $a^\ell$ is replaced with $a_\ell$)
and $a_m$ replaced $a^{m-\ell}$. Since $m > \ell$, this is well defined.

When the first and last letter are the same, the situation is a bit more complicated.
For the block replacement, similarly we declare that $a_L$ `replaces' $a^\ell$,
$a_m$ the $a^m$ for $m < \ell$ and $a^{m-\ell}$ for $m >\ell$.
Lastly, to be consistent, we need to define that $a_R \to \epsilon$.
It can be verified by case inspection that in this way all blocks are replaced properly, except the ending block for \pattern{}
(for which the $a^r$ is replaced with $a_R$).
While this somehow falsifies our informal claim that we create an SLP for the \pattern,
this is not a problem, as the occurrences of the pattern are preserved.
(We can think that we shortened the pattern by those ending $a^r$ letters, but the appearances were preserved.)

The $a_R$ generating $\epsilon$ is a bit disturbing,
but note that we enforce the compression of pairs of the  form $\makeset{ba_R}{b \in \Sigma \setminus a_R}$
(and if $a_R$ is the first letter of \mytext{} then we remove it).
In this way all $a_R$ are removed from the instance.
Furthermore, when $ba_R$ is replaced with $b'$ we can declare that the rule for $b'$ is $b' \to \alpha$,
where $b$ has a rule $b \to \alpha$.
In this way no productions have $\epsilon$ at their right-hand sides.

\section{Equality testing for SLPs}
\label{sec:outline}
In this section we extend the \algSET{} to the setting in which both the \pattern{} and \mytext{}
are given using SLPs.
In particular, we introduce and describe the second important property of the recompression:
local modifications of the instance so that pair and block compressions
can be performed on the compressed representation directly.

\subsection{Straight line programmes}
Formally, a \emph{Straight-Line Programme} (SLP) is a context free grammar $G$
over the alphabet $\letters$ with a set of nonterminals $\{X_1, \ldots, X_k\}$,
generating a one-word language.
For normalisation reasons, it is assumed that $G$ is in a \emph{Chomsky normal form},
i.e.\ each production is either of the form $X \to YZ$ or $X \to a$. 
We denote the string defined by nonterminal $X$ by $\eval(X)$, like \emph{value}.

During our algorithm, the alphabet $\letters$ is increased many times
and whenever this happens, the new letter is assigned number $|\letters| +1$.
The $|\letters|$ does not become large in this way: it remains of size
$\Ocomp((n+m) \log (n+m) \log M)$, see Lemma~\ref{lem:running time}.
Observe furthermore that Claim~\ref{clm:consecutive values} generalises easily to SLPs
and so without loss of generality we may assume that $\letters$ consists of consecutive
natural numbers (starting from $1$).

For our purposes it is more convenient to treat
the two SLPs as a single context free grammar $G$
with a set of nonterminals $\{ \allXc \}$,
the text being given by \textc{} and the pattern by \patternc.
We assume, however, that \patternc{} is not referenced by any other nonterminal,
this simplifies the analysis.
Furthermore, in our constructions, it is essential to \emph{relax} the usual assumption
that $G$ is in a Chomsky normal form,
instead we only require that $G$ satisfies the conditions:
\begin{subequations}
\label{form}
\begin{align}
	\label{one production}
	&\text{each $X_i$ has exactly one production, which has at most $2$ noterminals},\\
	&\text{if $X_j$ appears in the rule for $X_i$ then $j<i$, \label{eq:rule form}}\\
\label{trivial epsilon rules}
&\text{if } \eval(X_i) = \epsilon \text{ then } X_i \text{ is not 
	on the right-hand side of any production,}.
\end{align}
\end{subequations}
We refer to these conditions collectively as~\eqref{form} and 
assume that the input of the subroutines always satisfies \SLPrefall.
However, we expect more from the input instance: we want it to obey the Chomsky normal form,
instead of the relaxed conditions~\eqref{form}
(in this way we bound the initial size of $G$ by $2(n+m)$ and also claim that $M \leq 2 ^ m$).
Note that~\eqref{form} does not exclude the case,
when $X_i \to \epsilon$ and allowing such a possibility streamlines the analysis.

Let $X_i \to \alpha_i$, then a substring $u \in \letters^+$ of $\alpha_i$
appears \emph{explicitly} in the rule;
this notion is introduced to distinguish them from the substrings of $\eval(X_i)$.
The size $|G|$ is the sum of length of the right-hand sides of $G$'s rules.
The size of $G$ kept by the algorithm will be small: $\Ocomp((n+m)\log(n+m))$,
see Lemma~\ref{lem:running time}.
Furthermore the set of nonterminals is always a subset of $\{\allXc\}$.

\subsubsection*{(Non) crossing appearances}
The outline of the algorithm is the same as \algSET, the crucial difference is the way we want
perform the compression of pairs and blocks, when \pattern{} and \mytext{} are given as SLPs.
Before we investigate this, we need to understand, when the compression (of pairs and blocks) is easy to perform,
and when it is hard.

Suppose that we are to compress a pair $ab$.
If $b$ is a first letter of some $\eval(X_i)$ and $aX_i$ appears explicitly
in the grammar, then the compression seems hard, as it requires modification of $G$.
On the other hand, if none such, nor symmetrical, situation appears
then replacing all explicit $ab$s in $G$ should do the job.
This is formalised in the following definition:

\begin{definition}[(Non) crossing pairs]
Consider a pair $ab$ and its fixed appearance in $\eval(X_i)$, where the rule for $X_i$ is $X_i \to u X_j v X_k w$.
We say that this appearance is
\begin{description}
	\item[explicit (for $X_i$)] if this $ab$ comes from $u$, $v$ or $w$;
	\item[implicit (for $X_i$)] if this appearance comes from $\eval(X_j)$ or $\eval(X_k)$;
	\item[crossing (for $X_i$)] otherwise.
\end{description}

A pair $ab$ is \emph{crossing} if it has a crossing appearance for any $X_i$,
it is \emph{non-crossing} otherwise.
\end{definition}
Unless explicitly written, we use this notion only to pairs of \emph{different} letters.
Note that if $ab$ appears implicitly in some $X_i$ then it has an explicit or crossing appearance in some $X_j$
for $j < i$.

The notions of (non-) crossing pairs is usually not applied to pairs of the form $aa$,
instead, for a letter $a \in \letters$ we consider its maximal blocks,
as defined in earlier sections.
\begin{definition}
Consider a letter $a$.
We say that $a^\ell$ has an explicit appearance in $X_i$ with a rule $X_i \to u X_j v X_k w$
if $a^\ell$ appears in $u$, $v$ or $w$; implicit appearance if it appears in $\eval(X_j)$ or $\eval(X_k)$
and a crossing appearance if it appears in $\eval(X_i)$ and this is not an implicit, nor explicit appearance.

A letter $a$ has a crossing block, if some $a^\ell$ has a crossing appearance in some $X_i$.
Equivalently, the pair $aa$ is crossing.
\end{definition}

Note that when $a$ has crossing blocks it might be that some blocks of $a$ are part of explicit and crossing appearances at the same time.
However, when $a$ has no crossing blocks, then a maximal explicit block of $a$ is not part of a larger crossing block.

Intuitively, a pair $ab$ is crossing, if we can find a rule $X_i \to u X_j v X_k w$
such that $a$ is the last letter of $u$ and $b$ is the first letter or $\eval(X_j)$,
or $a$ is the last letter of $\eval(X_j)$ and $b$ is the first letter of $v$, \textit{etc}.
So in some sense it `crosses' between this nonterminal and a neighbouring letter (nonterminal).
Note that this justifies the somehow unexpected notion of crossing blocks:
if $aa$ is crossing pair (say, $\eval(X_i)$ ends with $a$ and $v$ begins with $a$ as well)
then the maximal block of $a$s containing this pair $aa$ also `crosses' the nonterminal.

The crossing pairs and letters with crossing blocks are intuitively hard to compress,
while non-crossing pairs and letters without crossing blocks are easy to compress.
The good news is that the number of crossing pairs and blocks is bounded in terms of $n,m$, and
not size of the grammar, as shown in the following lemma.
Note that the lemma allows a slightly more general form of the grammar, in which
blocks $a^\ell$ are represented using a single symbol.
Such a form appears as an intermediate product of our algorithm, and so we need to deal with it as well.

\begin{lemma}
\label{lem:different crossing}
Consider a grammar, in which blocks of a letter can be represented as a single symbol.
There are at most $2(n+m)$ different letters with crossing blocks
and at most $4(n+m)$ different crossing-pairs and at most $|G|$ noncrossing pairs.
For a letter $a$ there are at most $|G| + 4(n + m)$ different lengths of
$a$'s maximal blocks in \pattern{} and \mytext.
\end{lemma}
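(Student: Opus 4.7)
The plan is to charge every crossing pair, every letter with a crossing block, and every crossing-block length to a \emph{nonterminal boundary} in the grammar. For each occurrence of a nonterminal $Y$ on the right-hand side of some rule I record two boundaries: the \emph{left boundary} (between the explicit character just before $Y$ and the first letter of $\eval(Y)$) and the \emph{right boundary} (between the last letter of $\eval(Y)$ and the explicit character just after). Condition~\eqref{one production} gives at most two nonterminal occurrences per rule, and there are $n+m$ rules, so the total number of boundaries is at most $4(n+m)$. Condition~\eqref{trivial epsilon rules} ensures that every nonterminal appearing on some RHS has nonempty $\eval$, so the relevant first and last letters are always well-defined.

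By definition, any crossing appearance of a pair $ab$ in the rule of some $X_i$ has $a$ and $b$ straddling some such boundary, with the inner letter equal to the corresponding first/last letter of the bordering nonterminal and the outer letter uniquely determined by the rule (walking outward through possibly empty explicit pieces to an adjacent nonterminal's first/last letter when needed). Hence each boundary produces at most one crossing pair, establishing the $4(n+m)$ bound. A letter $a$ has a crossing block iff $aa$ is a crossing pair, which forces $a$ to equal $\algofont{first}(Y)$ or $\algofont{last}(Y)$ for some $Y$; there are at most $2(n+m)$ such candidate letters, giving the first bound.

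For non-crossing pairs, if $ab$ is non-crossing and appears anywhere in $\eval(X_{n+m})$ or $\eval(X_m)$, then tracing the occurrence down to the smallest $X_i$ whose $\eval$ contains it forces the appearance to be either explicit or crossing in $X_i$'s rule; being non-crossing excludes the latter. Distinct non-crossing pairs thus inject into the positions of pairs of adjacent cells on the right-hand sides of $G$, bounded by $|G|$. The argument for maximal $a$-block lengths is analogous: each such block $B$ in $\eval(X_{n+m})$ or $\eval(X_m)$ is traced to the smallest $X_i$ whose $\eval$ contains $B$, where either $B$ lies entirely inside the explicit part of that rule (so $B$ is a maximal explicit $a$-block in the rule, contributing a length accounted for by the at most $|G|$ explicit $a$-blocks across $G$) or $B$ crosses at least one boundary, in which case we charge $B$ to, say, the leftmost boundary it crosses, giving at most $4(n+m)$ lengths. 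Summing yields the $|G|+4(n+m)$ bound.

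The main obstacle is handling the extended grammar form in which a maximal $a$-block may be stored as a single symbol, together with edge cases where the explicit infix between two adjacent nonterminals is empty so that the right boundary of one occurrence coincides with the left boundary of the next, or where a crossing block is so long that its underlying $\eval(X_j)$ consists entirely of $a$'s and spans several boundaries. I would address both uniformly: define the outer neighbour of a boundary by walking outward through the rule, dipping into an adjacent nonterminal's first/last letter whenever the explicit piece is empty, and allow a single RHS position to appear in the accounting as both the right boundary of one occurrence and the left boundary of another. These conventions only inflate an already-upper bound, and leave intact the clean ``charge-to-boundary versus charge-to-explicit-position'' dichotomy driving all four bounds.
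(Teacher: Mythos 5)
Your proposal is correct and follows essentially the same approach as the paper: crossing pairs and crossing blocks are charged to the at most $2(n+m)$ nonterminal occurrences (equivalently, their $4(n+m)$ left/right boundaries, which is exactly the paper's four-case enumeration per rule), while non-crossing pairs and explicit maximal blocks are charged to explicit positions in $G$, giving the $|G|$ terms. Your explicit handling of empty explicit infixes between adjacent nonterminals is a minor refinement of the same counting, not a different argument.
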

\begin{proof}
Observe that if $a$ has a crossing block then for some $X_i$ the first or last letter of $\eval(X_i)$ is $a$.
Since there are $n + m$ nonterminals, there are at most $2(m+n)$ letters with crossing blocks.

Similarly, if $ab$ is a crossing pair then it can be associated with an appearance of some $X_i$ in the grammar,
where additionally $a$ is the last letter of $\eval(X_i)$ and $X_ib$ appears in the rule
or $b$ is the first letter of $\eval(X_i)$ and $aX_i$ appears in the rule.
Since there are at most $2(n+m)$ appearances of nonterminals in the grammar,
it follows that there are at most $4(n+m)$ appearances of a crossing pair,
so in particular at most $4(n+m)$ different crossing pairs.

If $ab$ is a noncrossing pair then $ab$ appears explicitly is some of the rules of the grammar,
and there are at most $|G|$ such substrings
(note that when $a^\ell$ is represented by one symbol, it still contributes to pairs in the same way as a single $a$).

The argument for maximal blocks of $a$ is a little more involved.
Consider first maximal blocks that have an explicit appearance in the rules of $G$,
for simplicity let now the nonterminals also count for ending maximal blocks,
similarly the ends of rules.
Then each letter (or block of letters that are represented as one symbol) is
assigned to at most one maximal block and so there are not more than $|G|$ such blocks,
so not more than $|G|$ different lengths.
Assign other blocks to nonterminals: a block $a^\ell$
is assigned to $X_i$ with a rule $X_i \to uX_jvX_kw$,
if a maximal block $a^\ell$ has an appearance in $\eval(X_i)$,
but it does not in $\eval(X_j)$ nor in $\eval(X_k)$ (so it has a crossing appearance for $X_i$).
Thus, there are four possibilities for a block to be assigned to the rule:
\begin{itemize}
	\item a letter $a$ from this maximal block is the last letter of $u$
	and the first letter of $\eval(X_j)$,
	\item $a$ is the last letter of $\eval(X_j)$ and a letter $a$ from this maximal block is the first letter of $v$,
	\item a letter $a$ from this maximal block is the last letter of $v$ and $a$ is the 
	first letter of $\eval(X_k)$,
	\item $a$ is the last letter of $\eval(X_k)$ and a letter $a$ from this maximal block is the first letter of $w$.
\end{itemize}
Hence, there are at most $4$ maximal blocks assigned to $X_i$ in this way, which yields the desired bound of $4(n+m)$
on the number of such blocks.
\qed
\end{proof}

\subsection{The algorithm}
When the notions of crossing and non-crossing pairs (blocks) are known,
we can give some more detail of \algET.
Similarly to \algSET, it performs the compression in phases,
until one of \pattern, \mytext{} has only one letter,
but for running time reasons it is important to distinguish between
compression of non-crossing pairs and crossing ones
(this is not so essential for blocks, as shown later).
\begin{algorithm}[H]
	\caption{\algET: outline}
	\begin{algorithmic}[1]
	\While{$|\pattern|,|\mytext|>1$} \label{alg:mainloop}
		\State $P \gets $ list of pairs 
		\State \label{listing non-crossing} $L \gets $ list of letters
    \For{each $a \in L$} {} compress blocks of $a$ \label{outer appearances}
		\EndFor \label{end of block compression}
		\State $P' \gets $ crossing pairs out of $P$, $P \gets $ non-crossing pairs out of $P$
    \For{each $ab\in P$} {} compress pair $ab$ \label{noncrossing compression}
  	\EndFor \label{end of non-crossing compression}
		\For{$ab \in P'$} \label{ac list} compress pair $ab$ \label{crossing compression} \label{compress a}
		\EndFor \label{end crossing compression}
		\EndWhile
	\State Output the answer.
	 \end{algorithmic}
\end{algorithm}

As in the case of \algSET, the length of \pattern{} and \mytext{}
shorten by a constant factor in a phase and so there are $\Ocomp(\log (\min(M,N)))$ many phases.
\begin{lemma}
\label{lem:logM iterations SE}
There are $\Ocomp(\log M )$ executions of the main loop of \algFCPM.
\end{lemma}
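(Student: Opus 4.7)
The plan is to reduce the counting of phases to the same shortening analysis already carried out for the explicit-string algorithms in Sections~\ref{sec:toy example} and \ref{sec: first and last}. The key observation is that, although \algFCPM{} operates on the grammar $G$ rather than on the decompressed strings, the effect of one iteration of the main loop on the generated strings $\eval(\patternc)$ and $\eval(\textc)$ is \emph{exactly} the same as one phase of \algSPM{} applied to those strings explicitly: the fixing of the ends, followed by a block compression for every letter that occurs, followed by a compression of every pair (crossing or non-crossing) that occurs. Once this equivalence is established, the bound follows immediately from the shortening lemma for \algSPM.

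More precisely, the plan is as follows. First, I would verify (as an invariant, essentially by induction on the phase number) that at the beginning of each iteration of the main loop the grammar $G$ satisfies~\eqref{form} and that $\eval(\patternc)$ and $\eval(\textc)$ are the current pattern and text; this is what the constructions of the section are designed to guarantee. Second, I would argue that after one full iteration the new values of $\eval(\patternc)$ and $\eval(\textc)$ coincide with the pattern and text obtained from the old ones by applying \algfixdiff{} or \algfixsim{} together with the block and pair compressions of the enumerated letters and pairs; the partition into non-crossing and crossing pairs is only for running-time purposes and does not change the final string.

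Third, I would apply Lemma~\ref{lem:fixing shortens} to conclude that as long as $|\eval(\patternc)|>1$ (and hence, for the purposes of the main loop, $|\eval(\textc)|>1$ as well, since otherwise the loop has already terminated), one iteration shortens $|\eval(\patternc)|$ by a constant factor $\alpha<1$ depending only on the constants appearing in that lemma. Therefore, writing $M_i$ for the length of the pattern after $i$ phases, we have $M_0 = M$ and $M_{i+1}\leq \alpha M_i$ as long as $M_i>1$, so the loop exits after at most $\log_{1/\alpha} M = \Ocomp(\log M)$ iterations.

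The main obstacle in carrying out this plan is verifying the second step, namely that the grammar-level operations on $G$ (modifications induced by crossing pairs and crossing blocks, handled in Section~\ref{sec:outline}) implement exactly the same string-level transformation that is assumed in Lemma~\ref{lem:fixing shortens}. For the current lemma, however, this correctness is established in the surrounding text: once one accepts that each subroutine in the main loop correctly realises the intended pair or block compression on the generated strings, the $\Ocomp(\log M)$ bound is a direct consequence of the constant-factor shortening already proved.
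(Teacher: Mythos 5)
Your proposal is correct and follows essentially the same route as the paper: the paper disposes of this lemma by noting that the per-phase constant-factor shortening argument for the explicit-string algorithm carries over verbatim to the compressed setting (it cites Lemma~\ref{lem:SET shortens}; for the end-fixing steps of \algFCPM{} the corresponding statement is Lemma~\ref{lem:fixing shortens}, which is exactly what the paper later invokes in Lemma~\ref{lem:FCPM shortening}). Your reduction --- one loop iteration acts on $\eval(\patternc)$ and $\eval(\textc)$ exactly as one phase of the explicit algorithm, hence $|\pattern|$ drops by a constant factor and the loop runs $\Ocomp(\log M)$ times --- is precisely this argument.
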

The proof is the same as in the case of Lemma~\ref{lem:SET shortens}.

\subsubsection{Compression of non-crossing pairs}
\label{subsec: nc pair compression}
We start by describing the compression of a non-crossing pair $ab$,
as it is the easiest to explain.
Intuitively, 
whenever $ab$ appears in string encoded by $G$,
the letters $a$ and $b$ cannot be split between nonterminals.
Thus, it should be enough to replace their explicit appearances.

\begin{algorithm}[H]
	\caption{$\algpairsncr(ab,c)$: compression of a non-crossing pair $ab$}
	\label{alg:noncrossing pairs}
	\begin{algorithmic}[1]
		\For{$i \gets 1 \twodots m+n$}
			\State replace every explicit $ab$ in the rule for $X_i$ by $c$
		\EndFor
	 \end{algorithmic}
\end{algorithm}

Luckily, as in case of \algSET{} the compression of all noncrossing pairs can be performed in parallel in linear time,
using \algradix{} to group the appearances.

To simplify the notation, we use $\PC_{ab \to c}(w)$ to denote $w$ with each $ab$ replaced by $c$.
Moreover, we say that a procedure \emph{implements the pair compression} for $ab$,
if after its application the obtained $\patterni$ and $\mytexti$ satisfy
$\patterni = \PC_{ab \to c}(\pattern)$ and $\mytexti = \PC_{ab \to c}(\mytext)$.

\begin{lemma}
When $ab$ is non-crossing, \algpairsncr{} properly implements the pair compression.
\end{lemma}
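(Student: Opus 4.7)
The plan is to prove by induction on $i$ that after running \algpairsncr$(ab,c)$, the value generated by $X_i$ in the modified grammar satisfies $\eval_{\text{new}}(X_i) = \PC_{ab\to c}(\eval_{\text{old}}(X_i))$. Specialising this to $i = n+m$ (the text) and $i = m$ (the pattern) gives exactly the claim that \algpairsncr{} implements the pair compression for $ab$.

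The base cases are when $X_i \to d$ for a single letter or $X_i \to \epsilon$: the rule contains no pair at all, so it is untouched, and $\PC_{ab\to c}(d) = d$, $\PC_{ab\to c}(\epsilon) = \epsilon$. For the inductive step, write the rule in the form $X_i \to u_0 X_{j_1} u_1 X_{j_2} u_2$ (taking $u_t = \epsilon$ or dropping a nonterminal when appropriate, so the shape covers all cases allowed by~\eqref{one production}). The algorithm leaves nonterminals untouched and only rewrites the explicit strings $u_t$ into $\PC_{ab\to c}(u_t)$. Using the inductive hypothesis on $X_{j_1}$ and $X_{j_2}$, one therefore obtains
\[
\eval_{\text{new}}(X_i) \;=\; \PC_{ab\to c}(u_0) \cdot \PC_{ab\to c}(\eval_{\text{old}}(X_{j_1})) \cdot \PC_{ab\to c}(u_1) \cdot \PC_{ab\to c}(\eval_{\text{old}}(X_{j_2})) \cdot \PC_{ab\to c}(u_2).
\]
We want this to equal $\PC_{ab\to c}$ applied to the concatenation of the same five strings.

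The heart of the argument — and the only place where non-crossingness is used — is the general fact that $\PC_{ab\to c}(xy) = \PC_{ab\to c}(x)\cdot \PC_{ab\to c}(y)$ whenever there is no occurrence of $ab$ straddling the split point, i.e.\ whenever it is not the case that $x$ ends with $a$ and $y$ starts with $b$. Applied to our four concatenation points, a failure would force, for some $t \in \{1,2\}$, either $u_{t-1}$ to end with $a$ while $\eval_{\text{old}}(X_{j_t})$ starts with $b$, or $\eval_{\text{old}}(X_{j_t})$ to end with $a$ while $u_t$ starts with $b$. Either scenario produces an appearance of $ab$ in $\eval_{\text{old}}(X_i)$ which is neither fully inside some $u_t$ (hence not explicit in $X_i$) nor contained in $\eval_{\text{old}}(X_{j_t})$ (hence not implicit in $X_i$); it is, by definition, a crossing appearance for $X_i$. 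This contradicts the assumption that $ab$ is non-crossing, so the distributivity holds at every boundary and the induction closes.

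I expect the only real subtlety to be handling degenerate shapes of the rule — zero-length $u_t$, a rule with a single nonterminal, or nonterminals with $\eval(X_j) = \epsilon$ (allowed by~\eqref{form}\,\eqref{trivial epsilon rules} only off the right-hand side, so this case is in fact precluded) — but these are all absorbed by the same boundary argument, since an empty string contributes no first or last letter that could participate in a straddling $ab$. The rest is bookkeeping.
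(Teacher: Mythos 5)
Your proposal is correct and follows essentially the same route as the paper: induction on the nonterminal index showing $\eval(X_i') = \PC_{ab\to c}(\eval(X_i))$, with the non-crossing assumption used exactly to rule out an occurrence of $ab$ straddling a boundary between an explicit string and a nonterminal's value, so that $\PC_{ab\to c}$ distributes over the concatenation. Your explicit statement of the distributivity criterion and the handling of degenerate rule shapes only make explicit what the paper leaves implicit.
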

\begin{proof}
In order to distinguish between the nonterminals before and after
the compression of $ab$
we use `primed' nonterminals, i.e.\  $X_i'$,
for the nonterminals after this compression and `unprimed',
i.e.\ $X_i$, for the ones before.
We show by induction on $i$ that
\begin{equation*}
\eval(X_i') = \PC_{ab \to c}(\eval(X_i)) \enspace .
\end{equation*}
Indeed, this is true when the production for $X_i$ has no nonterminal on the right-hand side
(recall the assumption that $a \neq b$),
as in this case each pair $ab$ on right hand side of the production for $X_i$ was replaced by $c$ and so
$\eval(X_i') = \PC_{ab \to C}(\eval(X_i))$.

When $X_ i \to u X_j v X_kw$, then
\begin{align*}
\eval(X_i)
	&=
u\eval(X_j)v\eval(X_k)w \quad \text{and}\\
\eval(X_i')
	&=
\PC_{ab \to c}(u) \eval(X_j') \PC_{ab \to c}(v) \eval(X_k') \PC_{ab \to c}(w)\\
	&=
\PC_{ab \to c}(u) \PC_{ab \to c}(\eval(X_j)) \PC_{ab \to c}(v) \PC_{ab \to c}(\eval(X_k)),
\end{align*}
with the last equality following by the induction assumption.
Notice, that since $ab$ is a non-crossing pair, all occurrences of $ab$ in $\eval(X_i)$
are contained in $u$, $v$, $w$, $\eval(X_j)$ or $\eval(X_k)$,
as otherwise $ab$ is a crossing pair, which contradicts the assumption.
Thus,
\begin{align*}
\PC_{ab \to c}(\eval(X_i)) = 
\PC_{ab \to c}(u) \PC_{ab \to c}(\eval(X_j')) \PC_{ab \to c}(v) \PC_{ab \to c}(\eval(X_k'))\PC_{ab \to c}(w),
\end{align*}
which shows that $\PC_{ab \to c}(\eval(X_i)) = \eval(X_i')$.
\qed
\end{proof}

As in the case of \algSET, the pair compression of all noncrossing pairs can be effectively implemented, with a help of \algradix{}
for grouping of the appearances.

\begin{lemma}
\label{lem:noncrossing and inner}
The non-crossing pairs compression can be performed in $\Ocomp(|G|)$ time.
\end{lemma}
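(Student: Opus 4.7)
The plan is to mimic the parallel implementation of pair compression from \algSET{} (Lemma~\ref{lem:SET linear}), but with the list of explicit pairs harvested from the rules of $G$ instead of from a flat string. First I would make one left-to-right sweep through the right-hand sides of all rules of $G$, producing for each explicit pair $ab$ (with $a\neq b$) a record $(a,b,p)$, where $p$ is a pointer to that particular occurrence on the list representing the rule. Since $G$ is stored as linked lists, this takes $\Ocomp(|G|)$ time and produces at most $|G|$ such records; by the generalisation of Claim~\ref{clm:consecutive values} to SLPs, the letters occurring in $G$ form an interval of length $\Ocomp(|G|)$, so \algradix{} sorts the records lexicographically by $(a,b)$ in $\Ocomp(|G|)$ time.

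Next I would in parallel prepare the set $P$ of pairs and in particular the subset $P$ of non-crossing ones; the membership information is available from the earlier step of the main loop (line~\ref{listing non-crossing}), so a single pass through the sorted list extracts a contiguous sublist of records whose pair belongs to $P$. I then walk through this sublist: each maximal run of records sharing the same pair $ab \in P$ is assigned a fresh letter $c$, and every pointer $p$ in the run is followed to replace the corresponding explicit $ab$ in the rule by $c$ in $\Ocomp(1)$ time, using the linked-list representation. This faithfully implements \algpairsncr{} for every pair of $P$ simultaneously and, by the previous lemma, correctly effects $\PC_{ab\to c}$ for each non-crossing $ab$.

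The only subtlety is that when several non-crossing pairs share a letter (e.g.\ both $ab$ and $bd$ occur in $P$), a pointer recorded for a pair may become invalid after an earlier replacement consumed one of its two letters. This is handled exactly as in \algSET: on visiting the pointer $p$ we first check whether the two underlying list cells still carry the letters $a$ and $b$; if not, we skip the record. Because we never remove a letter without leaving its replacement in place, this check is $\Ocomp(1)$. The analogue of Lemma~\ref{lem:SET properly tests} (iterated non-crossing pair compressions commute with the choice of the fresh letter and with each other, since each replaced pair appears only explicitly) guarantees that skipping these invalid records is correct: the only occurrences lost are those already subsumed by an earlier compression that overlapped them, and the semantics of iterated pair compression is unchanged.

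Summing over the whole procedure: the sweep costs $\Ocomp(|G|)$, the radix sort costs $\Ocomp(|G|)$, and the rewriting step costs $\Ocomp(1)$ per record, hence $\Ocomp(|G|)$ overall, which proves the stated bound. The main technical point to get right is the bookkeeping of invalid pointers together with the proof that processing pairs in the sorted order yields the same result as performing the compressions one pair at a time; both are routine extensions of the corresponding arguments from Lemma~\ref{lem:SET linear}.
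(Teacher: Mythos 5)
Your proposal is correct and follows essentially the same route as the paper: harvest pointer-tagged records of explicit pairs from the rules, radix-sort them using the consecutive-alphabet claim, replace each maximal run of a non-crossing pair by a fresh letter via the stored pointers, and skip stale pointers exactly as in \algSET. The only difference is that the paper also implements, inside this very proof, the decision of which pairs are crossing (computing the first and last letters of each $\eval(X_i)$ bottom-up and sorting the tuples together with a $0/1$ flag), whereas you assume that classification is handed to you by the main loop; since its $\Ocomp(|G|)$ cost is accounted for here and its sorted output is reused later for the crossing-pair compression, you should fold that step into your argument as well.
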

\begin{proof}
We go through the list productions of $G$.
Whenever we spot an explicit pair $ab$, we put
$(a,b,1,p)$ in the list of pairs' appearances, where $1$ indicates, that
this appearance is non-crossing and $p$ is the pointer to the appearance in $G$.

It is easy to list the crossing pairs:
we begin with calculating for each nonterminal $X_i$ the first and last letter of $\eval(X_i)$,
which can be easily done in a bottom-up fashion.
Then for $aX_ib$ appearing in the right-hand side of a rule we list the tuples for pairs $af$ and $\ell b$
with flag $0$ indicating, that they are crossing, where
$f$ ($\ell$) is the first (last, respectively) letter in $\eval(X_i)$
(the pointer $p$ is not important, as it is not going to be used for anything)

Then, we sort all these tuples lexicographically, using \algradix{} in $\Ocomp(|G|)$ time:
by Lemma~\ref{lem:running time} the size of $\letters$ is polynomial in $n+m$,
and \algradix{} sorts the tuples in $\Ocomp(|G| + n + m) = \Ocomp(|G|)$ time.
Thus, for each pair we obtain a list of its appearances.
Moreover, when sorted, we can establish in $\Ocomp(|G|)$ time,
which pairs are crossing and which non-crossing:
since $0 < 1$ the first appearance of $ab$ on the list will have $0$ on the third coordinate of the tuple if and only if the pair $ab$ is crossing.

For a fixed non-crossing pair $ab$, the compression is performed as in the case of \algSET,
see Lemma~\ref{lem:SET linear}:
We go through the associated list and use pointers to localise and replace
all appearances of $ab$.
If this pair is no longer there (as one of letters $ab$ was already replaced),
we do not nothing. For a crossing pair, we do nothing.

The correctness follows in the same way as in Lemma~\ref{lem:SET linear}, it only remains to estimate the running time.
Since rules of $G$ are organised as lists, the pointers can be manipulated in constant time,
and so the whole procedure takes $\Ocomp(|G|)$ time.
\qed
\end{proof}

\subsubsection*{Compression of crossing pairs}
We intend to reduce the case of crossing pairs to the case of non-crossing one, i.e.\ given a crossing pair we want to `uncross' it
and then compress using the procedure for compression of noncrossing pairs, i.e.\ \algpairsncr.

Let $ab$ be a crossing pair. Suppose that this is because $a$ is to the left of nonterminal $X_i$
such that $\eval(X_i) = bw$.
To remedy this we `left-pop' the leading $b$ from $X_i$: we modify $G$ so that $\eval(X_i) = w$ and replace each  $X_i$ with $bX_i$ in the rules.
We apply this procedure to each nonterminal, in an increasing order.
It turns out that the condition that $X_i$ is to the right of $a$ is not needed, we left-pop $b$ whenever $X_i$ starts with it.
Symmetric procedure is applied for a letter $a$ and nonterminals $X_i$ such that $\eval(X_i)=w'a$.
It can be easily shown that after left-popping $b$ and right-popping $a$ the pair $ab$ is no longer crossing,
and so it can be compressed.

Uncrossing a pair $ab$ works for a fixed pair $ab$ and so it has to be applied to each crossing pair separately.
It would be good to uncross several pairs at the same time.
In general it seems impossible to uncross an arbitrary set
of pairs at the same time.
Still, parallel uncrossing can be done for group of pairs of a specific form: when we partition the alphabet \letters{}
to $\letters_\ell$ and $\letters_r$ then pairs from $\letters_\ell\letters_r$ can be uncrossed in parallel.
Intuitively, this is because pairs from $\letters_\ell\letters_r$ cannot overlap
as the same letter cannot be at the same time the first in some crossing pair in this group and a second one.
Furthermore, using a general construction (based on binary expansion of numbers),
we can find $\Ocomp(\log(n+m))$ partitions
such that each of $4(n+m)$ crossing pairs is covered by at least one of those partitions.

Note that letters should not be popped from \patternc{} and \textc:
on one hand those nonterminals are not used in the rules and so they cannot be used to create a crossing pair,
on the other hand, since they define \pattern{} and \mytext{} we should not apply popping to them,
as this would change text or pattern.

\begin{algorithm}[H]
  \caption{$\algpop(\letters_\ell,\letters_r)$: Popping letters from $\Sigma_\ell$ and $\Sigma_r$
  \label{pop code full}}
  \begin{algorithmic}[1]
  \For{$i \gets 1 \twodots n+m$, except $m$ and $m+n$}
		\State let $X_i \to \alpha_i$ and $b$ the first letter of $\alpha_i$
		\If{the first letter $b \in \Sigma_r$} \label{still to the right} \Comment{Left-popping}
			\State remove leading $b$ from $\alpha_i$
			\State replace $X_i$ in $G$'s rules by $bX_i$
			\If{$\alpha_i = \epsilon$}	{} remove $X_i$ from rules of $G$ \Comment{$X_i$ is empty} 
			\EndIf
		\EndIf
		\State let $a$ be the last letter of $\alpha_i$
		\If{$a \in \Sigma_\ell$} \label{still to the left} \Comment{Right-popping}
			\State remove ending $a$ from $\alpha_i$
			\State replace $X_i$ in $G$'s rules by $X_ia$
			\If{$\alpha_i = \epsilon$}	{} remove $X_i$ from rules of $G$ \Comment{$X_i$ is empty} 
			\EndIf
		\EndIf
	\EndFor
  \end{algorithmic}
\end{algorithm}

\begin{lemma}
\label{lem:uncrossing pairs}
After $\algpop(\letters_\ell,\letters_r)$ no pair in $\letters_\ell\letters_r$ is crossing.
Furthermore, $\eval(\patternc)$ and $\eval(\textc)$ have not changed.

\algpop{} runs in time $\Ocomp(n+m)$ and introduces at most $4(n+m)$ letters to $G$.
\end{lemma}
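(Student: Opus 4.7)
The statement bundles four claims: after $\algpop(\Sigma_\ell,\Sigma_r)$ (i) no pair in $\Sigma_\ell\Sigma_r$ is crossing, (ii) $\eval(\patternc)$ and $\eval(\textc)$ are unchanged, (iii) the procedure runs in $\Ocomp(n+m)$, and (iv) at most $4(n+m)$ new explicit letters are added to $G$. I would organise the proof by handling (ii)--(iv) quickly and devoting the bulk of the argument to (i), which is the only subtle part.

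For (ii) I would proceed by induction on the loop iteration: after processing $X_1,\dots,X_i$ the grammar (viewed modulo its current expansions) still generates the original $\eval(X_k)$ at every nonterminal. The single step is the observation that popping the first letter $b$ of $\alpha_j$ and simultaneously replacing every explicit $X_j$ by $bX_j$ is value-preserving at every higher rule (symmetrically for right-popping). Since the loop skips $i=m$ and $i=m+n$, the rule for $\patternc$ is never touched; moreover $\patternc$ is not referenced elsewhere, and $\textc$ cannot be referenced because of~\eqref{eq:rule form}, so the string $\eval$ of the two distinguished nonterminals is preserved.

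For (i) I would argue by contradiction. Suppose after the algorithm a pair $xy\in\Sigma_\ell\Sigma_r$ is crossing at some nonterminal boundary in a rule $X_i\to\gamma$; by symmetry assume $\gamma$ contains $\ldots z\,X_j\ldots$ with $z$ an explicit letter equal to $x$ and the first letter of $\eval(X_j)$ equal to $y$. Because $X_j$ is referenced in $X_i$'s rule we have $j\neq n+m$, and by the no-reference assumption on $\patternc$ we have $j\neq m$, so $X_j$ was processed. At that iteration its effective first letter was in $\Sigma_r$ (it was $y$, or some earlier letter in $\Sigma_r$ popped during the processing of a descendant of $X_j$), so exactly one leading letter $b\in\Sigma_r$ was popped and the rule containing $X_j$ was rewritten to $\ldots x\,b\,X_j\ldots$. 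The boundary pair at $X_j$ is now $b\,y'$, where $y'$ is the current first letter of $\eval(X_j)$; both $b$ and (if in $\Sigma_r$) $y'$ lie in $\Sigma_r$, so $by'\notin\Sigma_\ell\Sigma_r$. The remaining pair $xb$ is now \emph{explicit} in $\gamma$ and therefore not crossing. An analogous analysis covers the right boundary and the case in which $z$ is itself the last letter of some nonterminal $X_{j'}$: by the inductive picture at $X_{j'}$, its current last letter is either not in $\Sigma_\ell$, or a letter in $\Sigma_\ell$ was already right-popped, making $z$ explicit on $X_{j'}$'s side. This case analysis is the main obstacle, because popping removes at most one letter per side while the leading (or trailing) explicit part of $\alpha_j$ can contain several letters of $\Sigma_r$ (resp. $\Sigma_\ell$); the point is that the pop still produces a $\Sigma_r\Sigma_r$ (resp. $\Sigma_\ell\Sigma_\ell$) pair at the only boundary that could have been crossing.

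For (iii) and (iv) I would note that the loop runs $\Ocomp(n+m)$ iterations; the work per iteration is constant (accessing the first/last letter, popping, and removing an emptied $X_i$) plus one constant-time insertion per occurrence of $X_i$ in the rules. Because rules obey~\eqref{one production}, the total number of nonterminal occurrences in $G$ is at most $2(n+m)$, so the total replacement work is $\Ocomp(n+m)$, proving (iii); since each nonterminal is popped at most twice (once on each side) and each pop adds one explicit letter at each occurrence of $X_i$, the total number of letters introduced is at most $2\cdot 2(n+m)=4(n+m)$, proving (iv). Access in constant time to occurrence lists and to first/last letters is maintained in the standard way, exactly as in the bookkeeping used in Lemma~\ref{lem:noncrossing and inner}.
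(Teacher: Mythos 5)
Your proposal is correct and follows essentially the same route as the paper's own proof: an induction along the processing order showing that whenever a nonterminal's value currently begins with a $\Sigma_r$ letter (ends with a $\Sigma_\ell$ letter) that letter is explicit and gets popped, so after \algpop{} every potentially crossing boundary is shielded by a popped letter, plus the same occurrence-pointer implementation and the same $2\cdot 2(n+m)$ counting for the introduced letters. The only place you are slightly looser than the paper is in stating (rather than arguing) that the popped letter $b$ remains the symbol adjacent to $X_j$ until the end (the paper notes explicitly that this letter is never changed by later pops), but this is a minor presentational point rather than a gap.
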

\begin{proof}
Suppose that $ab \in \letters_\ell,\letters_r$ is crossing after $\algpop(\letters_\ell,\letters_r)$.
Without loss of generality consider the case, in which after $\algpop(\letters_\ell,\letters_r)$
there is $aX_j$ in the rule for $X_i$ and $\eval(X_j)$ starts with $b$.
We first show by induction, that if $\eval(X_j)$ started with a letter from $\letters_r$ then this letter
was left-poppped from $X_j$ by \algpop.
This is of course true for $X_1$, for general $X_j$ with a rule $X_j \to \alpha_j$
consider that if $\eval(X_j)$ begins with $b \in \letters_r$, in which case it is left-popped,
or with $X_{k}$, where $k < j$.
In the latter case \algpop{} did not pop a letter from $X_k$.
As $\eval(X_k)$ begins with $b \in \letters_r$ it should have, contradiction.

Returning to the main claim, we want to show that it is impossible that after \algpop{}
the $aX_j$ appears in the rule for $X_i$, where $\eval(X_j)$ begins with $b \in \letters_r$.
Consider, whether $\algpop(\letters_\ell,\letters_r)$ left-popped a letter from $X_j$.
If so, then it replaced $X_j$ with $cX_j$ and letter $c \in \letters_r$
cannot be changed to any other letter during the whole $\algpop(\letters_\ell,\letters_r)$.
Hence $a = c \in \letters_r$, which is a contradiction.
If no letter was popped from $X_j$,
then its first letter is not changed afterwards, and so it is $b \in \letters_r$.
However, $b$ should have been popped from $X_j$, contradiction.

The other cases are shown in the same way.

Concerning the running time note that we do not need to read the whole $G$: it is enough to read the first and last letter in each rule.
To perform the replacement, for each nonterminal $X_i$ we keep a list of pointers to its appearances, so that $X_i$ can be replaced with
$aX_ib$ in $\Ocomp(1)$ time.

Note that at most $2$ letters are popped from each nonterminal and so at most $4(n+m)$ are introduced to $G$.
\qed
\end{proof}

Now, when the pairs $ab \in \letters_\ell\letters_r$ are no longer crossing, we can compress them.
Since such pairs do not overlap, this can be done in parallel in linear time,
similarly as in Lemma~\ref{lem:noncrossing and inner}.

The obvious way to compress all crossing pairs, is to make a series of partition
$(\letters_\ell^{(1)},\letters_r^{(1)})$, $(\letters_\ell^{(2)},\letters_r^{(2)})$, \ldots 
such that each crossing pair is in at least one of those partitions.
Since there are $4(n+m)$ crossing pairs, see Lemma~\ref{lem:different crossing},
in the naive solution we would have $4(n+m)$ partitions.
However, using a more clever approach, we can reduce this number to $\Ocomp(\log(n+m))$.
Roughly, we make the partitions according to the binary expansion of notations of letters:
For $i = 1, \ldots \lceil \log|\Sigma| \rceil$, define
\begin{itemize}
	\item $\Sigma_\ell^{(2i-1)} = \Sigma_r^{(2i)}$ consist of elements of $\Sigma$ that have $0$ at the $i$-th
	position in the binary notation (counting from the least significant digit)
	\item $\Sigma_r^{(2i-1)} = \Sigma_\ell^{(2i)}$ consist of elements of $\Sigma$ that have $1$ at the $i$-th
	position in the binary notation
\end{itemize}
For $a \neq b$, their binary notation differ at some position and so
the pair $ab$ is in some group $\Sigma_\ell^{(j)}\Sigma_r^{(j)}$.
Note, that $ab$ may be in many $\Sigma_\ell^{(j)}\Sigma_r^{(j)}$  but it will be compressed only once,
for the smallest possible $j$. Thus, it makes sense to define the lists $P_j$,
where we include $ab \in P'$ in the group $P_j$, when $j$ is the smallest number such that
$a \in \Sigma_\ell^{(j)}$ and $b \in \Sigma_r^{(j)}$.
Observe, that using standard bit operations we can calculate the first position on which $a$ and $b$
differ and so also $j$ for $ab$ in constant time.
Lastly, since $|\Sigma| = \Ocomp((n+m)\log(n+m)\log M) = \Ocomp((n+m)^3)$ by Lemma~\ref{lem:running time},
we partition $P'$ into at most $\Ocomp(\log(n+m))$ subgroups.

\begin{algorithm}[H]
  \caption{$\algpairscr(\letters_\ell,\letters_r)$ compressing crossing pairs from $\letters_\ell\letters_r$.
  \label{alg:crossing compression}}
  \begin{algorithmic}[1]
		\State find partitions of $\Sigma$ into $\{\Sigma_\ell^{(i)},\Sigma_r^{(i)}\}$, $i \in \Ocomp(\log(n + m))$ \Comment{see discussion above}
		\State partition the crossing pairs into groups $P_1$, $P_2$, \ldots, $P_{2i}$ according to partitions of $\Sigma$
		\For{$j\gets 1 \twodots 2i$} \label{compressing crossing}
			\State run $\algpop(\Sigma_\ell^{(j)},\Sigma_r^{(j)})$ \label{uncrossing a group}
			\State compress each of the pairs $ab \in P_j$ \label{compressing no longer crossing}
					\Comment{$P_j$ is more or less $P' \cap \Sigma_\ell^{(j)}\Sigma_r^{(j)}$}
		\EndFor
	\end{algorithmic}
\end{algorithm}

Concerning the running time of an efficient implementation,
we first compute the list of explicit appearances of each crossing pair, 
which is done in linear time using the same methods as in the case of noncrossing pairs
and divide those pairs into groups, also in linear time.
However, \algpop{} creates new explicit appearances of pairs, which should be also compressed.
Still, we can easily identify those new appearances and assign them to appropriate groups.
Re-sorting each group before the compression makes sure that we can replace the pairs.

\begin{lemma}
\label{lem:compression of all pairs}
The \algpairs{} properly compresses all crossing pairs.
It runs in $\Ocomp(|G| + (n + m) \log (n + m))$ time.
It introduces $\Ocomp(\log (n + m))$ letters to each rule.
\end{lemma}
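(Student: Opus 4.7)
The plan is to split the proof into three parts matching the three assertions: correctness, per-rule size blow-up, and running time. The first two are largely bookkeeping, while the running time hides the real subtlety about how the appearance lists evolve as the grammar is modified.

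For correctness I would argue that every crossing pair lands in exactly one group $P_j$ and is compressed during iteration $j$. If $ab$ is crossing then $a\neq b$, so their binary expansions differ at some position $i$; the paired construction of $\Sigma_\ell^{(2i-1)}=\Sigma_r^{(2i)}$ and $\Sigma_r^{(2i-1)}=\Sigma_\ell^{(2i)}$ guarantees that for at least one index $j\in\{2i-1,2i\}$ one has $a\in\Sigma_\ell^{(j)}$ and $b\in\Sigma_r^{(j)}$, and $j$ is chosen to be the minimal such index. During iteration $j$, Lemma~\ref{lem:uncrossing pairs} makes every pair in $\Sigma_\ell^{(j)}\Sigma_r^{(j)}$ non-crossing, so $ab$ can then be compressed by the non-crossing procedure of Lemma~\ref{lem:noncrossing and inner}. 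A small but essential point is that after $ab$ is compressed it vanishes from the grammar, so later iterations cannot revive it as a crossing pair.

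For the per-rule size bound, a single call to \algpop{} pops at most one leading and one trailing letter from each nonterminal, so the right-hand side of a rule $X_i\to \alpha_i$ grows by at most $2$ letters for each nonterminal occurrence on the right-hand side; by~\eqref{one production} there are at most two such occurrences, giving at most $4$ new letters per rule per iteration. Multiplying by the $\Ocomp(\log(n+m))$ iterations yields the claimed $\Ocomp(\log(n+m))$ letters per rule.

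For the running time, the partitions themselves and the assignment of each of the at most $4(n+m)$ crossing pairs to its group can be computed in $\Ocomp(n+m)$ total time, since the relevant bit of $a$ and $b$ can be extracted in $\Ocomp(1)$ using the machine word assumption; each individual \algpop{} runs in $\Ocomp(n+m)$ by Lemma~\ref{lem:uncrossing pairs}, summing to $\Ocomp((n+m)\log(n+m))$. The main obstacle, and the step that requires some care, is the compression step itself, because \algpop{} keeps creating \emph{new} explicit pair occurrences that must be incorporated into the right group and then replaced. I would handle this exactly as in Lemma~\ref{lem:noncrossing and inner}: build the list of explicit appearances of each pair once, in $\Ocomp(|G|)$ time using \algradix, and then, after every call to \algpop, collect only the at most $\Ocomp(n+m)$ newly introduced appearances, sort them with \algradix{} in $\Ocomp(n+m)$ time, and append them to the appropriate group list. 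Since each actual replacement of an appearance is $\Ocomp(1)$ thanks to the list representation of the rules, the total work across all iterations is $\Ocomp(|G| + (n+m)\log(n+m))$, matching the stated bound.
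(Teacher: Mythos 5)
Your decomposition is essentially the paper's: the bit-position partitions $\Sigma_\ell^{(j)},\Sigma_r^{(j)}$, uncrossing one group with \algpop{} and then compressing it as a non-crossing group, the count of at most $4$ popped letters per rule per iteration, and the $\Ocomp(\log(n+m))$ number of iterations all match, and your correctness and per-rule parts are fine. The genuine gap is in how you handle the explicit appearances that \algpop{} keeps creating. You propose to sort, after each call of \algpop, only the $\Ocomp(n+m)$ newly created appearances and to append them to the appropriate group list. But a group $P_j$ receives such appended blocks from the calls of \algpop{} made in iterations $1,\dots,j$, so by the time $P_j$ is processed its list is a concatenation of up to $\Ocomp(\log(n+m))$ individually sorted blocks, and the appearances of a fixed pair $ab$ are no longer contiguous. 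The entire point of the sorting is to be able to replace \emph{all} appearances of $ab$ by the \emph{same} fresh letter without dictionary look-ups; with your layout you would either assign different fresh letters to appearances of the same pair lying in different blocks (which breaks the compression: equal substrings of $\eval(\textc)$ and $\eval(\patternc)$ would be encoded differently), or you would need a constant-time map from a pair $(a,b)$ to its fresh letter or its list, which you neither supply nor get for free over an alphabet of size $\Ocomp((n+m)^3)$. The paper sidesteps this by keeping the group lists unsorted and re-sorting each $P_j$ once, with \algradix, immediately before it is processed, at cost $\Ocomp(|P_j|+n+m)$; the stated bound then follows from $\sum_j |P_j| \le |G| + 4(n+m) + \Ocomp((n+m)\log(n+m))$ (Lemmas~\ref{lem:different crossing} and~\ref{lem:uncrossing pairs}), which is the same counting you rely on implicitly.

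A second, related omission: the new explicit pairs produced by \algpop{} need not be among the crossing pairs collected in $P'$ at the start of the loop; for instance one of the two letters may be a letter introduced earlier in the current phase, and such pairs must not be compressed here (in \algFCPM{} compressing a pair containing a fresh letter could undo the fixing of the pattern's beginning or end). So each appended record must carry a flag recording whether its pair is an original crossing pair, and this flag has to be consulted before a pair's appearances are replaced, exactly as the paper does in Claim~\ref{clm:compression group}. With the per-group re-sorting and this flag added, your argument coincides with the paper's proof.
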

\begin{proof}
The sorted list of all appearances of each crossing pair is obtained as a by-product of
creation a similar list for noncrossing pairs, see Lemma~\ref{lem:noncrossing and inner}.
Each pair $ab$ is assigned to the appropriate group $P_j$ (according to the partition for $\Sigma_\ell^{(j)},\Sigma_r^{(j)}$) in constant time.

We analyse the processing of a single group $P_j$.
By induction on the number of the group ($j$) we show the following claim:

\begin{clm}
\label{clm:compression group}
Compression of pairs from one group $P_j$, i.e.\ lines~\ref{uncrossing a group}--\ref{compressing no longer crossing},
can be done in time $\Ocomp(|P_j| + n + m)$.
\end{clm}
\begin{proof}
Firstly, by Lemma~\ref{lem:uncrossing pairs}, the application of $\algpop(\Sigma_\ell^{(j)},\Sigma_r^{(j)})$
takes time $\Ocomp(n + m)$ and afterwards the pairs from $P_j$ are non-crossing.
Note, that $\algpop(\Sigma_\ell^{(j)},\Sigma_r^{(j)})$ introduces new explicit pairs
to $G$: when we replace $X_i$ by $bX_i$ and $a$ is a letter to the left of $X_j$,
a new explicit pair $ab$ appears. In constant time we can decide, to which $P_{j'}$ this pair
should belong, we simply add it an appropriate tuple $(a,b,0,p)$ to the list $P_{j'}$ (which makes the list $P_{j'}$ unsorted).
There two remarks:
firstly, by inductive assumption all appearances of pairs from $P_{j''}$ for $j'' < j$
were already replaced and so $j' \geq j$, so the newly introduced pairs will be handled later;
secondly, we do not know in advance, whether the pair $ab$ is one of the crossing pairs and so whether it should be compressed.
To remedy this, each element $P_j$ stores also an information,
whether it was a crossing pair or perhaps it was added later on; those are used to decide whether $ab$ should be compressed
at all, as described later on.

Now, since we cannot assume that the records in the list $P_j$ are sorted or even that they shold be compressed at all
(as we might have added some pairs to $P_j$ when considering $P_{j'}$ for $j' < j$),
we sort them again, using \algradix, ignoring the coordinate for the pointers;
furthermore we add another coordinate, which is $1$ for original crossing pairs and $0$ for those introduced due to recompression.
The compression can be done in time $\Ocomp(|P_j| + n + m)$.
Now, as the list of appearances of pairs are sorted, we can cut it into several lists,
each consisting of appearances of a fixed pair.
Going through one list, say for a pair $ab$,
we first whether the first appearance is an original crossing pair,
if not, then we do not compress appearances of this pair at all.
If it is an original crossing pair, we replace appearances of $ab$ (if they are still there)
in $\Ocomp(|P_j|)$ time: since we replace appearances of one fixed pair, we replace always by the same (fresh) letter
and so do not need to use any dictionary operations to look-up the appropriate letter.
Clearly, this procedure properly implements the pair compression for a single pair $ab$
and thus also for all pairs in $P_j$ (note that pairs in $P_j$ cannot overlap).
\qed
\end{proof}

The running time of the whole loop~\ref{compressing crossing} is at most (for some constant $c$):
\begin{align*}
\sum_{j=1}^{2i} c(|P_j| + n + m)
	&=
2c(n + m)i + c\sum_{j=1}^{2i} |P_j|\\
	&= 
\Ocomp((n + m)\log(n + m)) + c\sum_{j=1}^{2i} |P_j| \enspace .
\end{align*}
It is tempting to say that $\sum_{j=1}^{2i} |P_j| \leq |G| + 4(n+m)$:
observe that before the loop~\ref{compressing crossing} each element of $\sum_{j=1}^{2i} |P_j|$ 
corresponds to some appearance of a (crossing) pair in $G$, and there are only $|G| + 4(n+m)$ such appearances
by Lemma~\ref{lem:different crossing}.
However, \algpop{} introduce new pairs to the lists. Still, there are only $2(n + m)$ pairs added
by one run of \algpop, see Lemma~\ref{lem:uncrossing pairs},
hence in total there are only $2i(n + m)$ pairs introduced in this way.
Hence
\begin{align*}
\sum_{j=1}^{2i} |P_j|
	&\leq
|G| + 4(n+m) + 2i(n+m)\\
	&=
\Ocomp(|G| + (n+m) \log (n+m)) \enspace .
\end{align*}
Thus, the total running time is $\Ocomp((n + m)\log(n + m) + |G|)$,
and at most $\Ocomp(\log(n + m))$ pairs are introduced into a rule.
\qed
\end{proof}

\subsection{Blocks compression}
Now, we turn our attention to the block compression.
Suppose first that $G$ has no letters with a crossing block.
Then a procedure similar to the one compressing non-crossing pairs can be performed:
when reading $G$, we establish all maximal blocks of letters.
We group these appearances according to the letter, i.e.
for each letter $a$ we create a list of $a$'s maximal blocks in $G$
and we sort this list according to the lengths of the blocks.
We go through such list and we replace each appearance
of $a^\ell$ by a fresh letter $a_\ell$.

However, usually there are letters with crossing blocks.
We deal with this similarly as in the case of crossing pairs:
a letter $a$ has a crossing block if and only if $aa$ is a crossing pair.
So suppose that $a$ is to the left of $X_i$ and the first letter of $\eval(X_i)$ is $a$,
in such a case we left-pop a letter from $X_i$.
In general, this does not solve the problem as it may happen that still $a$ is the first letter of $\eval(X_i)$.
So we keep on left-popping until it is not.
In other words, we remove the $a$-prefix of $\eval(X_i)$.
Symmetric procedure is applied to $X_j$ such that $a$ is the last letter of $\eval(X_j)$ and $X_j$ is to the left of $a$.

It turns out that even a simplified approach works:
for each nonterminal $X_i$, where the fist letter of $\eval(X_i)$ is $a$
and the last letter of $\eval(X_i)$ is $b$,
it is enough to `pop' its $a$-prefix and $b$-suffix,
see~\algremblocks.

Observe that during the procedure, long blocks of $a$ (up to $2^{n+m})$
may be explicitly written in the rules.
This is conveniently represented: $a^\ell$ is simply
denoted as $(a,\ell)$, with $\ell$ encoded in binary.
When $\ell$ fits in one code word, $a^\ell$ representation is still of constant size and
everything works smoothly.
For simplicity, for now we consider only this case, the general case is treated in Section~\ref{sec:improvements}.

\begin{algorithm}[H]
  \caption{\algremblocks{}: removing crossing blocks.
  \label{removing outer letters}}
  \begin{algorithmic}[1]
  \For{$i \gets 1 \twodots m+n$, except $n$ and $n+m$} \label{replace pop code begins}
  	\State let $X_i \to \alpha_i$ be the production for $X_i$ and $a$ its first letter
  	\State calculate and remove the $a$-prefix $a^{\ell_i}$ of $\alpha_i$
  	\label{replace pop code prefix}
	\State let $b$ be the last letter of $\alpha_i$ 
  	\State calculate and remove the $b$-suffix $b^{r_i}$ of $\alpha_i$
  	\label{replace pop code suffix}
  	\State replace each $X_i$ in rule's bodies by $a^{\ell_i}X_i b^{r_i}$
  	\If{$\eval(X_i) = \epsilon$} {} remove $X_i$ from the rules' bodies
  	\EndIf
	\EndFor \label{replace pop code ends}
	\end{algorithmic}
\end{algorithm}

After \algremblocks,
every letter $a$ has no crossing blocks and we may compress maximal blocks
using the already described method.
\begin{lemma}
\label{lem:removing crossing blocks}
After \algremblocks{} there are no crossing blocks.
This algorithm and following block compression can be performed in time $\Ocomp(|G| + (m+n)\log(m+n))$
and introduce at most $4$ new letters to each rule.
\end{lemma}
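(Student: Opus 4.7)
The plan is to verify three facts: (i) after \algremblocks{} no letter has a crossing block; (ii) \algremblocks{} plus the subsequent non-crossing block compression runs in $\Ocomp(|G|+(m+n)\log(m+n))$ time; and (iii) the number of new letters introduced into each rule is at most $4$.

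For (i) I would proceed by induction on the index $i$ of the nonterminal being processed, establishing the invariant that immediately after \algremblocks{} handles $X_i$: (a) the (new) $\eval(X_i)$ neither begins with $a$ nor ends with $b$ (where $a,b$ are the first/last letters stripped off), or $\eval(X_i)=\epsilon$ in which case $X_i$ is removed from all bodies by the last line of the loop; and (b) every future occurrence of $X_i$ in a body is written in the form $a^{\ell_i}X_ib^{r_i}$. Given these invariants, consider any boundary in the final grammar where an $a$-block could conceivably cross, namely one adjacent to some nonterminal $X_j$. The boundary is guarded by the explicit block $a^{\ell_j}$ (resp.\ $b^{r_j}$) inserted around $X_j$, and by invariant (a) the first letter of $\eval(X_j)$ is not $a$, so no $a$-block extends from the explicit side into $\eval(X_j)$; symmetrically on the right. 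Hence no crossing block can arise. The reasoning is parallel to Lemma~\ref{lem:uncrossing pairs}, and the main obstacle is exactly this case analysis: one must also track the situations where $\ell_i=0$, where $\eval(X_i)$ becomes $\epsilon$ mid-algorithm (so $X_i$ gets deleted from bodies), and where the inserted $a^{\ell_i}$ merges with an adjacent explicit $a$ already present in the enclosing rule --- a merged, longer block is still entirely explicit, hence not crossing.

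For (ii) I would first precompute, bottom-up in $\Ocomp(|G|)$ time, the first and last letters of each $\eval(X_j)$, so that the terminal letters $a$ and $b$ are identifiable in $\Ocomp(1)$ inside the loop. Because rules are stored as doubly linked lists and an explicit block $a^\ell$ is represented by the pair $(a,\ell)$ in constant space (we are assuming $\ell$ fits in one machine word), peeling the $a$-prefix and $b$-suffix costs amortized $\Ocomp(1)$ per iteration. For each $X_i$ we maintain the list of pointers to its appearances; substituting $a^{\ell_i}X_ib^{r_i}$ is then constant time per appearance, giving a total of $\Ocomp(n+m)$ across all iterations. Once the grammar is crossing-block-free, block compression reduces to the same scheme as non-crossing pair compression (Lemma~\ref{lem:noncrossing and inner}): list every explicit maximal block as a tuple (letter, length, pointer), sort lexicographically with \algradix{} in $\Ocomp(|G|+(n+m)\log(n+m))$ time using that $|\letters|$ is polynomially bounded by Lemma~\ref{lem:running time}, and then walk the sorted list replacing each distinct $(a,\ell)$ by a fresh letter $a_\ell$. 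The total cost is the claimed $\Ocomp(|G|+(m+n)\log(m+n))$.

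For (iii) observe that by condition~\eqref{one production} each rule body contains at most two nonterminals, and \algremblocks{} enlarges a given rule only when it substitutes some enclosed $X_j$ by $a^{\ell_j}X_jb^{r_j}$. Each such substitution adds at most two symbols (the compact encodings $a^{\ell_j}$ and $b^{r_j}$), so across all substitutions at most $2\cdot 2 = 4$ new letter-slots appear in any single rule, as claimed. The ensuing block-compression step only replaces existing blocks by single letters and therefore contributes no further growth.
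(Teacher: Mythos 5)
Your parts (i) and (iii) essentially retrace the paper's argument: the induction over the processing order with the invariants that $\eval(X_i)$ no longer starts/ends with the popped letters and that the symbol adjacent to $X_i$ in the rules is the popped letter is exactly the paper's chain of observations, and the ``two nonterminals per rule, two symbols per substitution'' count gives the bound of $4$. One caution on (i): your invariant (b), that every later occurrence of $X_i$ stays in the form $a^{\ell_i}X_ib^{r_i}$, is literally false --- when the enclosing rule is processed, the inserted block can itself be popped away (or merged with neighbouring explicit letters), so $X_i$ may end up with no letter, or a longer block, on that side. The paper's weaker statement (if a letter is adjacent to $X_i$, it is the popped letter, and it is never changed except by being popped further) is what the contradiction argument actually needs; your argument survives with that correction, so this is a fixable imprecision rather than a wrong approach.

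The genuine gap is in (ii), in the cost of sorting the block tuples. You claim a single \algradix{} pass over all tuples $(a,\ell,p)$ runs in $\Ocomp(|G|+(n+m)\log(n+m))$, justified only by the polynomial bound on $|\letters|$ --- but that bound controls the \emph{letter} coordinate, not the \emph{length} coordinate. The prefixes and suffixes popped by \algremblocks{} can have length exponential in $n+m$ (this is exactly why they are stored as pairs $(a,\ell)$ with $\ell$ in binary), and \algradix{} sorts $k$ keys in linear time only when the keys are polynomial in $k$; on lengths up to $2^{n+m}$ it would need $\Theta((n+m)/\log(n+m))$ rounds, destroying the claimed bound. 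The paper's proof resolves this by splitting the list: the at most $4(n+m)$ blocks that contain a popped prefix or suffix are sorted by an ordinary comparison sort in $\Ocomp((n+m)\log(n+m))$ time (constant-time comparisons since each length fits in one machine word), while all remaining maximal blocks consist solely of explicit letters, hence have length at most $|G|$ and can be radix-sorted in $\Ocomp(|G|)$ time; the two sorted lists are then merged in $\Ocomp(|G|)$. This split is precisely where the $(n+m)\log(n+m)$ term in the statement comes from, and without it (or an equivalent device) your time analysis does not go through. A further small omission in the same part: after the substitutions $a^{\ell_i}X_ib^{r_i}$ one must coalesce adjacent blocks of the same letter in a linear pass before listing maximal blocks; you mention merging only in the correctness discussion, not in the accounting.
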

\begin{proof}
We first show the first claim of the lemma, i.e.\ that after
\algremblocks{} there are no letters with crossing blocks.
This follows from three observations:
\begin{enumerate}
	\item \label{obs 0}
	When \algremblocks{} considers $X_i$ with a rule $X_i \to \alpha_i$
	such that $\eval(X_i) = a^r w b^\ell$, where $w$ does not start with $a$ and
	does not end with $b$, then $\alpha_i$ has an explicit $a^\ell$ prefix and explicit $b^r$ suffix.
	\item \label{obs 1}
	When \algremblocks{} replaces $X_i$ with $a^{\ell_i}X_ib^{r_i}$ then afterwards the only letter to the left (right) of $X_i$
	in the rules is $a$ ($b$, respectively).
	\item \label{obs 2}
	After \algremblocks{} considered $X_i$, and $X_i$ is to the left (right)
	of $a$ then $a$ is not the first (last, respectively) letter of $\eval(X_i)$.
\end{enumerate}
All properties follow by a simple induction on the number $i$ of considered nonterminal.\

We infer from these observations that after \algremblocks{} there are 
no crossing blocks in $G$.
Suppose for the sake of contradiction, that there are;
let $a$ be the letter that has a crossing block.
By symmetry we consider only the case, when there are $X_i$ and $X_j$ such that
$aX_j$ appear in the rule for $X_i$ and $\eval(X_j)$ begins with $a$.
Note that by observation~\ref{obs 0} when \algremblocks{} considered $X_j$ then it replaced it with $b^\ell X_jc^r$
for some letters $b$ and $c$.
By observation~\ref{obs 1} the letter to the left of $X_j$ in the rule for $X_i$ is not changed by \algremblocks{}
afterwards (except that it can be popped when considering $X_i$) hence $b = a$.
Lastly, by observation~\ref{obs 2} the first letter of $\eval(X_j)$ is not $a$, contradiction.

\algremblocks{} is performed in $\Ocomp(|G|)$ time:
assuming that we represent block $a^\ell$ as a pair $(a,\ell)$,
the length of the $a$-prefix ($b$-suffix) is calculated simply by reading the rule
until a different letter is read (note that the lengths of the blocks fit in one machine word).
Since there are at most $4$ symbols introduced by \algremblocks{}
to the rule, this takes at most $\Ocomp(|G|)$ time.
The replacement of $X_i$ by $a^{\ell_i}X_ib^{r_i}$ is done at most twice
inside one rule and so takes in total $\Ocomp(n+m)$ time.

Note that right after \algremblocks{} it might be
that there are neighbouring blocks of the same letter in the rules of $G$.
However, we can easily replace such neighbouring blocks by one block of
appropriate length by in one reading of $G$, in time $\Ocomp(|G|)$.

Concerning the compression of the blocks of letters,
we adapt the block compression from~\algSET, see Lemma~\ref{lem:SET linear},
it is done in a similar way as we adapted the compression of non-crossing pairs from \algSET, 
see Lemma~\ref{lem:noncrossing and inner}.
For the sake of completeness, we present a sketch:
We read the description of $G$. Whenever we spot a maximal block $a^\ell$
for some letter $a$, we add a triple $(a,\ell,p)$ to the list.
The $p$ is the pointer to this appearance of the block in $G$.
Notice, that as there are no crossing blocks,
the nonterminals (and end or rules) count for termination of maximal blocks.

After reading the whole $G$ we sort these pairs lexicographically.
However, we sort separately the blocks that include the $a$-prefixes (or $b$-suffixes)
popped from nonterminals and the other blocks.
As in total we popped at most $4(n+m)$ prefixes and suffixes, there are at most $4(n+m)$
blocks of the former form, so we can sort their tuples in $\Ocomp((n+m) \log (n+m))$ time,
using any usual sorting algorithm of.
The remaining blocks are sorted using \algradix{} in linear time:
note that other blocks cannot have length greater than $|G|$,
and as $\Sigma = \Ocomp((n+m)\log M \log(n+m)) = \Ocomp((n+m)^3)$,
those tuples can be sorted in $\Ocomp(|G|)$ time.
Lastly, we can merge those two lists in $\Ocomp(|G|)$ time.

Now, for a fixed letter $a$, we use the pointers to localise
$a$'s blocks in the rules and we replace each of its maximal block 
of length $\ell>1$ by a fresh letter.
Since the blocks of $a$ are sorted according to their length,
all blocks of the same length are consecutive on the list,
and replacing them by the same letter is easily done.

Since we already know that there are no letters with crossing block,
we can show, as in Lemma~\ref{lem:noncrossing and inner},
that this procedure realises the block compression. The simple proof,
which is essentially the same as the proof in Lemma~\ref{lem:noncrossing and inner},
is omitted.
\qed
\end{proof}

\subsection{Grammar and alphabet sizes}
The subroutines of \algFCPM{} run in time dependant on $|G|$
and $|\letters|$, we bound these sizes.

\begin{lemma}
\label{lem:running time}
During \algFCPM, $|G| = \Ocomp((n+m)\log(n+m))$
and $|\letters| = \Ocomp((n+m) \log (n+m) \log|M|)$.
\end{lemma}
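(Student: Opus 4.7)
The plan is to prove both bounds jointly by induction on the phase number $k$. The invariant I would maintain at the start of each phase is $|G|=\Ocomp((n+m)\log(n+m))$; the alphabet bound then follows by bookkeeping over the $\Ocomp(\log M)$ phases guaranteed by Lemma~\ref{lem:logM iterations SE}. The base case is immediate: the input is in Chomsky normal form, so $|G_1|\leq 2(n+m)$ and $|\letters_1|\leq n+m$.

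For the alphabet, assuming the grammar invariant at the start of phase $k$, new letters are created only by pair compressions (one fresh letter per distinct compressed pair) and by block compressions (one fresh letter per distinct compressed block length). By Lemma~\ref{lem:different crossing}, the number of distinct crossing pairs is $\Ocomp(n+m)$, the number of non-crossing pairs is $\Ocomp(|G|)$, and the number of distinct block lengths of any fixed letter is $\Ocomp(|G|+n+m)$; in total at most $\Ocomp(|G|+n+m)$ new letters are introduced per phase. Plugging in the grammar invariant this is $\Ocomp((n+m)\log(n+m))$ per phase, and summing over $\Ocomp(\log M)$ phases gives $|\letters|=\Ocomp((n+m)\log(n+m)\log M)$.

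For the grammar, the per-phase growth is controlled by the popping subroutines: \algpairs{} adds $\Ocomp(\log(n+m))$ explicit letter occurrences per rule by Lemma~\ref{lem:compression of all pairs}, \algremblocks{} adds $\Ocomp(1)$ letters per rule by Lemma~\ref{lem:removing crossing blocks}, while the non-crossing pair compressions and all subsequent pair and block compressions only shrink $|G|$. Hence the raw growth within a single phase is at most $\Ocomp((n+m)\log(n+m))$.

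The main obstacle is to show that this growth does not accumulate across the $\Ocomp(\log M)$ phases, which would give the weaker bound $\Ocomp((n+m)\log(n+m)\log M)$. The argument I would use exploits the interleaved structure of \algpairs: after each call $\algpop(\Sigma_\ell^{(j)},\Sigma_r^{(j)})$ the pairs in $P_j$ are immediately compressed, so a letter $b$ that was popped out of $X_j$ precisely to uncross a pair $ab$ is merged together with its neighbour $a$ into a single fresh letter. Formalising this amortised cancellation, each of the $\Ocomp(n+m)$ nonterminal occurrences contributes $\Ocomp(\log(n+m))$ explicit letters to the rules at the start of the next phase, which together with \Fraref{1}-style boundary arguments from Section~\ref{sec: first and last} yields $|G|=\Ocomp((n+m)\log(n+m))$ at the start of each phase, closing the induction.
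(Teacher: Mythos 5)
Your alphabet bound and your per-phase accounting of new letters are fine (the paper bounds the number of fresh letters per phase by the per-phase work, but your count via Lemma~\ref{lem:different crossing} serves equally well). The genuine gap is in the bound on $|G|$. You correctly identify the obstacle --- that the $\Ocomp((n+m)\log(n+m))$ letters added per phase must not accumulate over the $\Ocomp(\log M)$ phases --- but the mechanism you propose, an ``amortised cancellation'' in which every letter popped by \algpop{} is immediately merged with its neighbour so that at the start of the next phase each nonterminal occurrence accounts for only $\Ocomp(\log(n+m))$ explicit letters, does not close the induction. First, the cancellation itself fails in general: $\algpop(\letters_\ell,\letters_r)$ pops the first letter $b$ of $\eval(X_i)$ whenever $b\in\letters_r$, irrespective of what precedes $X_i$ in the rules --- the preceding symbol may itself lie in $\letters_r$, may be a nonterminal, or may not exist --- and, as stipulated in the proof of Lemma~\ref{lem:compression of all pairs}, explicit pairs created by popping are compressed only when they coincide with an \emph{original} crossing pair, so many popped letters survive the phase uncompressed. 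Second, and more importantly, your accounting at the start of the next phase counts only letters contributed by pops during the last phase; it says nothing about the explicit letters already present at the start of the phase (those of the input rules and those popped in earlier phases), which are never removed, only compressed. Without controlling that carried-over content, the per-phase additions do accumulate and you obtain only the weaker bound $\Ocomp((n+m)\log(n+m)\log M)$ on $|G|$.

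The missing ingredient, which is exactly what the paper uses, is the constant-factor shrinking of the \emph{old} content: Claim~\ref{clm:two letters are compressed} applies verbatim to the explicit strings in the rule bodies, so, arguing as in Lemma~\ref{lem:SET shortens}, the explicit letters present in a rule at the beginning of a phase shrink by a constant factor by the end of that phase. Combined with your (correct) per-phase addition bound of $\Ocomp((n+m)\log(n+m))$ from Lemmas~\ref{lem:compression of all pairs} and~\ref{lem:removing crossing blocks}, this yields the recursion $|G'| \leq \alpha |G| + \beta (n+m)\log(n+m)$ with $\alpha<1$, whose solution remains $\Ocomp((n+m)\log(n+m))$ throughout; that closes the induction, and your bookkeeping for $|\letters|$ over the $\Ocomp(\log M)$ phases then goes through unchanged.
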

The proof is straightforward: using an argument similar to Lemma~\ref{lem:SET shortens}
we show that the size of the words that were in a rule at the beginning of the phase
shorten by a constant factor (in this phase).
On the other hand, only \algpop{} and \algremblocks{}
introduce new letters to the rules and it can be estimated,
that in total they introduced $\Ocomp(\log(n+m))$ letters to a rule in each phase.
Thus, bound $\Ocomp(\log(n+m))$ on each rules' length holds.
Concerning $|\letters|$, new letters appear as a result of a compression.
Since each compression decreases the size of $|G|$ by at least $1$,
there are no more than $|G|$ of them in a phase, which yields the bound.

\begin{proof}
We begin with showing the bound on $|G|$.
Consider a rule of $G$. On one hand, its size drops, as we compress letters in it. 
On the other, some new letters are introduced to the rule,
by popping them from nonterminals.
We estimate both influences.

Observe that Claim~\ref{clm:two letters are compressed}
applies to the bodies of the rules and so an argument similar
to the one in the proof Lemma~\ref{lem:SET shortens}
can be used to show that the length of the explicit strings that were in the rules at the beginning of the phase
decreases by a constant factor in each phase.
Of course, the newly introduced letters may be unaffected by this compression.
By routine calculations, as each rules' length decrease by a constant factor,
if $\Ocomp((n+m)\log(n+m))$ letters are introduced to $G$,
the $|G|$ is also $\Ocomp((n+m)\log(n+m))$ (with a larger constant, though).
Hence it is left to show that $\Ocomp((n+m)\log(n+m))$ letters are introduced to $G$ in one phase.
We do not count the letters that merely replaced some other letters (as a compression
of maximal block or a pair compression), but only the letters
that were popped into the rules.

In noncrossing pair compression there are no new letters introduced.
Concerning the crossing pairs compression, by Lemma~\ref{lem:compression of all pairs}
this introduces at most $\Ocomp(\log(n+m))$ letters to a rule, which is fine.
When \algremblocks{} is applied,
it introduces at most $4$ new symbols into a rule, see Lemma~\ref{lem:removing crossing blocks}.
In total, this gives $\Ocomp(\log(n+m))$ letters per rule, so $\Ocomp((n+m)\log(n+m))$ letters in total.

Concerning the alphabet, the time used in one phase,
is $\Ocomp((n+m) \log (n+m) + |G|)$, which is $\Ocomp((n+m) \log (n+m))$.
Thus no more than this amount of letters is introduced in one phase.
Lemma~\ref{lem:SET shortens} guarantees that there are $\Ocomp(\log M)$ phases,
and so a bound $\Ocomp((n+m) \log M \log (n+m))$ on $|\letters|$ follows.
\qed
\end{proof}

\subsubsection*{Memory Consumption}
\algET{} uses memory proportional to the size of grammar representation,
so $\Ocomp((n+m) \log (n+m))$ space, assuming that numbers up to $M$ fit in $\Ocomp(1)$ code words.

\subsubsection*{Main proof}
\label{subsec:sketch}
The cost of one phase of \algET{} is
$\Ocomp(|G| + (n+m) + (m+n)\log (n+m))$,
by Lemmas~\ref{lem:noncrossing and inner},
\ref{lem:compression of all pairs} and 
\ref{lem:removing crossing blocks}
while Lemma~\ref{lem:running time} shows that $|G| = \Ocomp((n+m)\log(n+m))$
and Lemma~\ref{lem:SET shortens} shows that there are $\Ocomp(\log M)$ phases.
So the total running time is $\Ocomp((n+m) \log M \log(n+m))$.

\section{Pattern matching}

In Section~\ref{sec: first and last} it was shown how to perform the pattern matching using recompression on explicit strings.
In this section we extend this method to the case in which \pattern{} and \mytext{} are given as SLPs.
Note that most of the tools are already known, as in Section~\ref{sec:outline}
it was shown how to perform the equality testing when both \pattern{} and \mytext{} are given as SLPs.
In particular, the proof of correctness of the pattern matching follows from the one in Section~\ref{sec: first and last},
so we need to focus only on the efficient implementations, mostly of \algfixdiff{} and \algfixsim,
as other operations are used already in \algET.

The outline of the algorithm looks as follows, in the rest of the section we comment on the implementation details and running time.
\begin{algorithm}[H]
	\caption{\algFCPM: outline}
	\label{alg:main}
	\begin{algorithmic}[1]
	\While{$|\pattern|>1$}
		\State $P \gets $ list of pairs, $L \gets$ list of letters
		\State fix the beginning and end \Comment{See Section~\ref{sec: first and last}}
    \For{each $a \in L$} {} compress blocks of $a$
		\EndFor
		\State $P' \gets $ crossing pairs from $P$, $P \gets$ noncrossing pairs from $P$
		\For{each $ab\in P$} {} compress pair $ab$ 
		\EndFor
		\For{$ab \in P'$} {} compress pair $ab$
		\EndFor
		\EndWhile
	\State Output the answer.
	 \end{algorithmic}
\end{algorithm}

The first operation in the \algFCPM{} is the fixing of the beginning and end,
which adapts \algfixsim{} and \algfixdiff{} to the compressed setting.

\begin{lemma}
The fixing of beginning and end for an SLP represented \pattern{} and \mytext{}
can be performed in $\Ocomp(|G| + (n+m)\log (n+m))$ time.

It introduces $\Ocomp(n+m)$ new letters to $G$.
\end{lemma}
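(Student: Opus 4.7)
The plan is to reduce fixing in the SLP setting to the primitives already built for equality testing (\algpop, \algremblocks, non-crossing block and pair compression), preceded by a short preprocessing that extracts the data \algfixdiff{} and \algfixsim{} depend on, namely the first letter $a$ and last letter $a'$ of $\eval(\patternc)$ and, when needed, the length $\ell$ of the $a$-prefix and the length $r$ of the $a'$-suffix of $\eval(\patternc)$. First I would compute in a bottom-up traversal of $G$, for each $X_i$, its first and last letter; this costs $\Ocomp(|G|)$. The lengths $\ell$ and $r$ are then obtained by walking $G$ from \patternc{} along its leftmost (rightmost) spine, traversing nonterminals as long as they expand to a single letter $a$ and summing the contributions of explicit $a$-blocks encountered on the way; since blocks are stored as pairs $(a,k)$ and $\ell,r\leq M$ fit in one machine word, this also runs in $\Ocomp(|G|)$.

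Second, based on the case distinction of Section~\ref{sec: first and last}, I would invoke exactly one of a handful of primitive operations. When the first and last letter differ and the first two letters of $\eval(\patternc)$ are distinct, fixing amounts to compressing a single pair $ab$ (and symmetrically at the end); I realise this by a single call $\algpop(\{a\},\{b\})$ to uncross it, followed by the non-crossing pair compression of Section~\ref{subsec: nc pair compression}. When the first and last letter differ but the leading (trailing) block has length $\geq 2$, I perform \algremblocks{} and then replace maximal $a$-blocks exactly as in \algfixdiff, i.e.\ $a^m\mapsto a_m$ for $m\leq \ell$ and $a^m\mapsto a_m a_\ell$ for $m>\ell$; this is a strict variant of the block compression of Lemma~\ref{lem:removing crossing blocks}, implemented in the same way (sort block occurrences by length with \algradix, then rewrite). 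When the first and last letter coincide, I again call \algremblocks{} for the letter $a$, then perform the $\ell/r$-dependent replacement scheme of \algfixsim{} (introducing $a_L$, $a_R$ and $a_m$); afterwards I compress the families $\{a_Lb:b\in\Sigma\setminus\{a_L\}\}$, $\{ba_R:b\in\Sigma\setminus\{a_R\}\}$ and, in the special subcase $1=r<\ell$, also $\{a_1b:b\in\Sigma\setminus\{a_1\}\}$ via single invocations $\algpop(\{a_L\},\Sigma)$, $\algpop(\Sigma,\{a_R\})$, $\algpop(\{a_1\},\Sigma)$ followed by non-crossing pair compression. Correctness of the individual replacements is inherited from Lemmas~\ref{lem:compression of ends different} and~\ref{lem:compression of ends same}, since those proofs argue only about the induced action on $\eval(\patternc)$ and $\eval(\textc)$, and our primitives preserve the values of all nonterminals.

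For the running time, each of the above primitives is bounded by $\Ocomp(|G|+(n+m)\log(n+m))$: \algpop{} costs $\Ocomp(n+m)$ by Lemma~\ref{lem:uncrossing pairs}, \algremblocks{} together with the ensuing block replacement costs $\Ocomp(|G|+(n+m)\log(n+m))$ by Lemma~\ref{lem:removing crossing blocks}, and a single (group of) non-crossing pair compression costs $\Ocomp(|G|)$ by Lemma~\ref{lem:noncrossing and inner}; the preprocessing adds $\Ocomp(|G|)$. Since fixing uses only $\Ocomp(1)$ such primitives (one or two at each end of the pattern, plus $\Ocomp(1)$ marker-pair compressions), the total time is $\Ocomp(|G|+(n+m)\log(n+m))$. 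Likewise, \algpop{} adds at most $4(n+m)$ letters and \algremblocks{} at most $4$ per rule, i.e.\ $\Ocomp(n+m)$ in total, so fixing introduces $\Ocomp(n+m)$ new letters to $G$.

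I expect the main obstacle to be not the time bound itself but showing cleanly that the modified block-compression schemes from \algfixsim{} can be executed on the SLP after a single run of \algremblocks{} and still yield correct values for $\eval(\patternc)$ and $\eval(\textc)$ with the prescribed marker letters at the ends; in particular one must verify that the ``special'' blocks (the leading $a^\ell$ of \pattern{} and the ending $a^r$, and analogous blocks at the border of \mytext) are handled correctly when they become explicit after \algremblocks. This is a case inspection analogous to, but lengthier than, the one in the proof of Lemma~\ref{lem:compression of ends same}, and it is the only point where genuinely new work beyond assembling existing lemmas is required.
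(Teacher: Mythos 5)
Your proposal is correct and follows essentially the same route as the paper: determine the first/last letter of the pattern, run a version of \algremblocks{} restricted to that letter followed by the modified block replacement of \algfixdiff/\algfixsim{} (with the time and letter bounds inherited from Lemma~\ref{lem:removing crossing blocks}), and realise the marker-pair compressions $a_Lb$, $ba_R$ (and $a_1b$) as alphabet partitions with a singleton side, uncrossed by one call to \algpop{} each and then compressed as non-crossing pairs. The only cosmetic difference is that the paper writes the partitions as $(\{a_L\},\Sigma\setminus\{a_L\})$ etc.\ rather than your slightly loose $\algpop(\{a_L\},\Sigma)$, and it delegates the correctness of the modified block scheme to the arguments of Lemmas~\ref{lem:removing crossing blocks} and~\ref{lem:compression of ends same}, just as you anticipate.
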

\begin{proof}
To see this, we look at the operations performed by \algfixsim{} (the ones for \algfixdiff{} are even simpler)
and comment how to perform them efficiently.
Firstly, in linear time we can find out what is the first and last letter of \pattern,
to see whether \algfixdiff{} or \algfixsim{} should be applied, suppose the latter.
Now \algfixsim{} performs a (modified) block compression, the only difference is that we compress only
blocks of $a$ and replace them not by a single letter, but by up to three letters.
To this end we apply a modified \algremblocks, which removes only $a$-prefixes and $a$-suffixes
and afterwards compress only blocks of $a$.
The running time bounds $\Ocomp(|G| + (n+m)\log(n+m))$, see Lemma~\ref{lem:removing crossing blocks},
are preserved, furthermore, using the same argument as in Lemma~\ref{lem:removing crossing blocks}
it can be shown that after the modified block compression there are no crossing $a$ blocks.
Furthermore, by the same lemma $\Ocomp(n+m)$ new letters are introduced to $G$.

The next operations in \algfixsim{} is the compression of pairs of the form
$\makeset{a_L b}{b \in \Sigma \setminus a_L}$, then $\makeset{b a_R}{b \in \Sigma \setminus a_R}$
(and then perhaps also $\makeset{a_1 b}{b \in \Sigma \setminus a_1}$).
In each case the pairs are obtained by partitioning the alphabet into $\Sigma_\ell$ and $\Sigma_r$,
(where one of the parts is a singleton).
Thus by Lemma~\ref{lem:uncrossing pairs} one such group can be uncrossed in $\Ocomp(n+m)$ time,
the uncrossing introduces $\Ocomp(n+m)$ letters to $G$.
Afterwards we can compress all pairs by naively in $\Ocomp(|G|)$ time.
\qed
\end{proof}

The rest of the operations on $G$ (pair compression, block compression) is implemented as in Section~\ref{sec:outline}
and has the same running time.

We now move to the analysis of \algFCPM. We show that \algFCPM{} preserves the crucial important property of \algET:
that $|\pattern|$ decreases by a constant factor in each phase and that $|G| = \Ocomp((n+m)\log(n+m))$.

\begin{lemma}
\label{lem:FCPM shortening}
In each phase the \algFCPM{} shortens \pattern{} by a constant factor.
The size of $G$ is $\Ocomp((n+m)\log(n+m))$, while the size of $\letters$ is $\Ocomp((n+m)\log(n+m)\log M)$
\end{lemma}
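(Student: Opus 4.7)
The plan is to mirror the analysis of Lemma~\ref{lem:running time} from the equality-testing setting, but now applied to \algFCPM, adding a separate argument for the shortening of \pattern{} that imports the fragment-based analysis of Section~\ref{sec: first and last}. Concretely I would proceed in three steps.

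First, to bound the per-phase shrinkage of \pattern, I would carry the fragment argument of Lemma~\ref{lem:fixing shortens} (Claims \ref{lem:compression works different} and \ref{lem:compression works same}) into the compressed setting. The key observation is that after the block compression and both pair compressions in a phase of \algFCPM{} the encoded string $\pattern = \eval(\patternc)$ is changed in exactly the same way as in \algSPM: \algfixdiff{} or \algfixsim{} performs the same modified block/pair replacements on \emph{all} occurrences in \mytext{} and \pattern, while the subsequent non-crossing and crossing pair compressions take care of the remaining pairs. Hence the grouping of positions of \pattern{} into fragments of size $2$, $3$ and $\geq 4$ satisfying \Frarefall{} carries over verbatim, and the $\frac{5}{6}$-factor estimate from Lemma~\ref{lem:SET shortens}, adjusted as in Lemma~\ref{lem:fixing shortens}, yields a constant-factor shrinkage of $|\pattern|$.

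Second, to bound $|G|$, I would reuse the inductive scheme of Lemma~\ref{lem:running time}. For each rule body, I separate letters that were present at the start of the phase from letters that were popped into the rule during the phase. For the former, the same argument as in Claim~\ref{clm:two letters are compressed} applies (every two consecutive old letters get a compression attempt, either during fixing, block compression, non-crossing or crossing pair compression), so their number shrinks by a constant factor per phase. For the latter, I need to sum the contributions of \algfixdiff/\algfixsim, \algremblocks, and \algpairscr: the previous lemma bounds the letters introduced by the fixing step by $\Ocomp(n+m)$ globally, Lemma~\ref{lem:removing crossing blocks} gives $\Ocomp(1)$ per rule for \algremblocks, and Lemma~\ref{lem:compression of all pairs} gives $\Ocomp(\log(n+m))$ per rule for crossing pairs. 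Summed, this is $\Ocomp(\log(n+m))$ new letters per rule per phase, and the standard geometric-series calculation (as in Lemma~\ref{lem:running time}) then shows $|G| = \Ocomp((n+m)\log(n+m))$.

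Third, for $|\letters|$, I would observe that new letters are created only when a pair or a block is compressed, and each such creation strictly shortens some rule body. Hence the number of letters introduced in a phase is bounded by the total work performed in that phase, which by the running-time bounds of Lemmas~\ref{lem:noncrossing and inner}, \ref{lem:compression of all pairs} and \ref{lem:removing crossing blocks} (plus the analogous bound for the fixing step) is $\Ocomp((n+m)\log(n+m))$. Combining this with the $\Ocomp(\log M)$ bound on the number of phases (Lemma~\ref{lem:logM iterations SE}, now justified by the first step above) gives $|\letters| = \Ocomp((n+m)\log M\log(n+m))$.

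The main obstacle is the first step: the fragment argument in Lemma~\ref{lem:fixing shortens} was phrased for explicit \pattern{} and \mytext{}, while now \pattern{} is only accessible through the SLP. The subtlety is that a fragment of the explicit string $\eval(\patternc)$ may be split across a nonterminal boundary in the grammar, so counting compressed vs.\ uncompressed letters in the rule bodies is not the same as counting them in $\eval(\patternc)$. I would resolve this by noting that the compressions we apply (block compression and both stages of pair compression) operate on the string $\eval(\patternc)$ regardless of the grammar structure — this is precisely what the crossing/non-crossing analysis of Section~\ref{sec:outline} guarantees — so the fragment grouping can still be defined on positions of $\eval(\patternc)$, and the shrinkage bound is about the length of this string, not about $|G|$. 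The bound on $|G|$ is then handled separately in the second step.
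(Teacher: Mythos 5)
Your proposal is correct and follows essentially the same route as the paper's proof: you reduce the shortening of \pattern{} to Lemma~\ref{lem:fixing shortens} by observing that \algFCPM{} performs the same string operations as \algSPM{} on $\eval(\patternc)$, and you bound $|G|$ and $|\letters|$ by the same accounting as in Lemma~\ref{lem:running time} (constant-factor shrinkage of explicit strings in rules plus $\Ocomp((n+m)\log(n+m))$ popped letters per phase, times $\Ocomp(\log M)$ phases). Your closing remark about fragments crossing nonterminal boundaries is exactly the point the paper handles by treating the length of $\eval(\patternc)$ and the size of $G$ as two separate estimates.
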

\begin{proof}
Observe that \algFCPM{} performs the same operations on \pattern{} as \algSPM,
but it just does it on the compressed representation.
Thus it follows from Lemma~\ref{lem:fixing shortens} that both \pattern{} and \mytext{} are shortened by a constant factor in one phase of \algFCPM.

Concerning the size of the grammar, a similar argument as in Lemma~\ref{lem:running time} applies:
note that as \algET{} the \algFCPM{} introduces $\Ocomp((n+m)\log(n+m))$ letters per phase into $G$.
On the other hand, the analysis performed in Lemma~\ref{lem:fixing shortens} (that \algSPM{} shortens \pattern)
applies to each substring of \pattern{} and \mytext, so each explicit string in the rules of $G$
is shortened during the phase by a constant factor, i.e.\ the same as in Lemma~\ref{lem:running time}.
Hence the size of $G$ kept by \algFCPM{} can be also bounded by $\Ocomp((n+m)\log(n+m))$.
Consequently, also $|\letters| = \Ocomp((n+m)\log(n+m)\log M)$.
\qed
\end{proof}

Now, Lemma~\ref{lem:FCPM shortening} implies that \algFCPM{} runs in $\Ocomp((n+m)\log(n+m) \log M)$ time:
each subprocedure runs in time $\Ocomp(|G| + (n+m)\log(n+m)) = \Ocomp((n+m)\log(n+m))$ and so this is also the running time
of one phase.
Since pattern is shortened by a constant factor in each phase, see again Lemma~\ref{lem:FCPM shortening},
there are $\Ocomp(\log M)$ many phases.
The correctness (returning representation of all pattern appearance) follows from the correctness of \algSPM{}
(as the performed operations are the same, just the representation of \pattern{} and \mytext{} is different).

\begin{theorem}
\algFCPM{} runs in $\Ocomp((n+m)\log(n+m) \log M)$ time and returns a representation of all pattern appearances
in text.
\end{theorem}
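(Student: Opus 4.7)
The plan is to combine the lemmas already proved in the excerpt into a bound on running time and a statement of correctness, with essentially all the hard work already done by Lemma~\ref{lem:FCPM shortening}. I will structure the argument as (running time per phase) $\times$ (number of phases), and handle correctness by reducing to \algSPM.

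For the running time, I first analyse the cost of a single iteration of the main loop. The fixing of the beginning and end runs in $\Ocomp(|G| + (n+m)\log(n+m))$ time by the lemma preceding this theorem. The block compression on the resulting grammar takes $\Ocomp(|G| + (n+m)\log(n+m))$ time by Lemma~\ref{lem:removing crossing blocks}, the noncrossing pair compression takes $\Ocomp(|G|)$ time by Lemma~\ref{lem:noncrossing and inner}, and the crossing pair compression takes $\Ocomp(|G| + (n+m)\log(n+m))$ time by Lemma~\ref{lem:compression of all pairs}. By Lemma~\ref{lem:FCPM shortening}, $|G| = \Ocomp((n+m)\log(n+m))$ is maintained throughout, so each phase costs $\Ocomp((n+m)\log(n+m))$ in total.

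Next I bound the number of phases. Lemma~\ref{lem:FCPM shortening} shows that $|\pattern|$ decreases by a constant factor in each phase. Since initially $|\pattern| \leq M$ and the main loop terminates once $|\pattern| \leq 1$, a standard geometric-series argument gives $\Ocomp(\log M)$ phases. Multiplying with the per-phase cost yields the total running time $\Ocomp((n+m)\log(n+m)\log M)$.

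For correctness, I would observe that \algFCPM{} performs exactly the same sequence of operations on the strings $\pattern$ and $\mytext$ as \algSPM{} does in Section~\ref{sec: first and last}, the only difference being that these operations are carried out on the SLP representation via the subroutines of Section~\ref{sec:outline} rather than on explicit strings. Each such subroutine (noncrossing pair compression, crossing pair compression via \algpop{} and \algpairscr, block compression via \algremblocks) has been shown to correctly implement the corresponding string-level operation while preserving $\eval(\textc)$ and $\eval(\patternc)$ up to the intended transformation. Hence Lemmas~\ref{lem:compression of ends different} and~\ref{lem:compression of ends same} apply verbatim to guarantee a one-to-one correspondence of pattern appearances across every phase. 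When $|\pattern| \leq 1$ the text \mytext{} is a compact (SLP-style) listing of all pattern appearances, so outputting it answers the FCPM query. The main obstacle, already handled by Lemma~\ref{lem:FCPM shortening}, is ensuring that the grammar size bound $\Ocomp((n+m)\log(n+m))$ survives all the letter-popping performed by the ends-fixing routine, \algpop, and \algremblocks{} across all phases; once that invariant is in place the running time calculation is immediate.
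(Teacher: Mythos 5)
Your proposal is correct and follows essentially the same route as the paper: bound the cost of each subprocedure by $\Ocomp(|G| + (n+m)\log(n+m)) = \Ocomp((n+m)\log(n+m))$ per phase using Lemma~\ref{lem:FCPM shortening} for the grammar size, count $\Ocomp(\log M)$ phases from the constant-factor shortening of \pattern{} in the same lemma, and reduce correctness to that of \algSPM{} since the operations performed on \pattern{} and \mytext{} are identical, only their representation differs. The extra detail you give (naming the individual subroutine lemmas) is consistent with the paper's argument.
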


\subsubsection*{Positions of appearances}
In order to give the position of the first appearance of the pattern we need to track to how many letters in the input
the current symbol of $\Sigma$ corresponds.
This is formalised using \emph{weight} of letters, which is extended to strings in a natural way:
Every letter $a$ in the input grammar has $\weight(a) = 1$,
while when a new letter $a$ replaces the string $w$ we set $\weight(a) = \weight(w)$.
When $N$ fits in a constant amount of code words,
the weight of each letter can be calculated in constant time, so we can store the weights of the letters on the fly in a table.

Since the compression can be seen as building an SLP for the input,
the weights of the letters are well defined (recall that we can always imagine new letters replace non-empty strings in the instance,
see Section~\ref{sec:toy example} and end of Section~\ref{sec: first and last}).
Thus, to calculate the position of the first pattern appearance it is enough to calculate the weight of the string preceding it.
To this end we keep up-to-date table of weights of $\eval(X_i)$, for each $X_i$.
To return the first position of a pattern appearance we determine the derivation path
for this appearance and sum up the weights of nonterminals and letters that are to the left
of this derivation path; this is easy to perform in time linear in $|G|$.

Note that there is a small technical issue: in one special case we remove the first letter from \mytext, when it is $a_R$.
But $\weight(a_R) = 0$ and so it does not influence anything.
When considering the last appearance of the pattern, note that the $a_L$ that is removed from the end
has a non-negative weight,
still it is enough to add the weights of all letters removed from the end of \mytext.

\section{Improving running time}
\label{sec:improvements}
In order to reduce the running time to $\Ocomp((n+m)\log M)$ we need to make sure that the grammar size
is $\Ocomp(n+m)$ and improve the running time of block compression, see Lemma~\ref{lem:removing crossing blocks},
so that it is $\Ocomp(|G|)$, without the extra $(n+m)\log(n+m)$ summand.
For the former, the argument in Lemma~\ref{lem:running time} (and its adaptation in Lemma~\ref{lem:FCPM shortening})
 guarantee this size as long as there are only $\Ocomp(n+m)$ letters introduced to $G$ in a phase.
The crossing blocks compression already posses this property, see Lemma~\ref{lem:removing crossing blocks},
so it is enough to alter the crossing pairs compression.

We show that it is enough to consider $\Ocomp(1)$ partitions $\letters_\ell,\letters_r$ and pairs that fall into them.
Roughly, we choose a partition such that a constant fraction of crossing pairs appearances in \pattern{} fall into this partition.
In particular, we calculate for each crossing pair $ab$ the number of its appearances in \pattern,
so we need to manipulate numbers up to $M$ in constant time,
i.e.\ this construction requires that $M$ fits in $\Ocomp(1)$ code words.

For the block compression, we improve the sorting time: we group block lengths into groups of similar lengths
and sort them inside one such group in linear time using \algradix.
The groups are also established using \algradix{} performed on representatives of groups.
The latter sorting treats numbers as bit string and therefore may have high
running time, but we show that overall it cannot exceed $\Ocomp((n+m)\log M)$ during the whole \algFCPM.

\subsection{Faster compression of crossing pairs}
Let us formalise the notion that $ab$ falls into a partition of \letters:
for a given partition $\letters_\ell,\letters_r$ we say that it \emph{covers} the appearances of $ab \in \letters_\ell\letters_r$ in \pattern.
The main idea of improving the running time of the crossing pairs is quite simple:
instead of considering $\Ocomp(\log(n+m))$ partitions such that each crossing pair from $P'$
is covered by at least one of them, we consider only one partition $\letters_\ell,\letters_r$
such that at least one fourth of appearances of crossing pairs in \pattern{} are covered by it.
Then estimations about shortening of the pattern in one phase hold as before, though with a larger constant.

Existence of such a partition can be shown by a simple probabilistic argument: if we assign each letter
to $\letters_\ell$ with probability $1/2$ then a fixed appearances of $ab$ in \pattern{} is covered with probability $1/4$.
The standard expected-value derandomisation technique gives a deterministic algorithm finding such a partition,
it can be easily implemented in $\Ocomp(|G|)$ time, see Lemma~\ref{lem:finding partition}.

It is not guaranteed that this partition shortens also $|G|$,
however, we can use exactly the same approach to shorten $G$:
we find another partition $\letters_\ell \letters_r$ such that at least $1/4$ of crossing pairs explicit appearances 
in $G$ are from this partition.

Our to-be-presented algorithm constructing a partition requires a list of all crossing pairs,
together with number of their appearances in \pattern.
This can be supplied using a simple linear-time algorithm:
for each nonterminal $X_i$ we calculate the amount $k_i$ of substrings $\eval(X_i)$ it generates in \pattern.
We associate an appearance of $ab$ with the least nonterminal that generated it.
Then the number of all appearances of $ab$ can be calculated summing appropriate $k_i$s.

\begin{lemma}
\label{lem:number of appearances}
Assuming $M$ fits in $\Ocomp(1)$ code words,
in $\Ocomp(|G| + n + m)$ we can return a sorted list of crossing pairs together with number of their appearances in \pattern.
\end{lemma}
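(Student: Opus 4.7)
The plan follows the author's sketch immediately preceding the lemma: I would compute a multiplier $k_i$ for each nonterminal equal to the number of times $X_i$ is instantiated in the derivation tree of \pattern, and then weight each witnessed adjacency by this multiplier. First, I would set $k_m \gets 1$ and $k_j \gets 0$ for $j \neq m$, and iterate $i$ from $m$ down to $1$: for every nonterminal occurrence $X_j$ in the body $\alpha_i$, add $k_i$ to $k_j$. By condition~\eqref{one production} each rule body contributes $\Ocomp(1)$ updates, so this pass costs $\Ocomp(|G| + n + m)$. All $k_i$ are bounded by $M$, hence fit in $\Ocomp(1)$ code words, and arithmetic on them is constant-time by assumption.

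Next, in a bottom-up traversal, I would record for each nonterminal $X_i$ the first letter $f_i$ and last letter $\ell_i$ of $\eval(X_i)$, again in $\Ocomp(|G|)$ time. Then, for each rule $X_i \to \alpha_i$ with $k_i > 0$, I would walk the body and emit, for every pair of consecutive elements, a tuple $(a, b, k_i, \beta)$, where $a = \ell_j$ if the left element is a nonterminal $X_j$ and equals the letter itself otherwise (symmetrically for $b$), and $\beta = 1$ iff at least one of the two elements is a nonterminal. This enumerates precisely the adjacencies arising when \pattern{} is derived: every adjacency of two letters in \pattern{} is realised in the body of exactly one rule, namely the one labelling the lowest common ancestor of those letters in the derivation tree, and that nonterminal is instantiated $k_i$ times.

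Finally, I would sort the $\Ocomp(|G|)$ emitted tuples lexicographically by the pair $(a, b)$ using \algradix{} in $\Ocomp(|G| + n + m)$ time (valid because $|\letters|$ is polynomial in $n+m$ by Lemma~\ref{lem:running time}), scan the sorted list once to merge tuples with equal pair by summing their $k_i$ values and OR-ing their $\beta$ flags, and discard entries whose final $\beta$ equals $0$. What remains is the sorted list of crossing pairs together with their counts of appearances in \pattern, produced in total time $\Ocomp(|G| + n + m)$.

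The main obstacle is the correctness of the counting: each adjacency in \pattern{} must be attributed to exactly one nonterminal, not to every ancestor, or else the count is badly inflated. Tying the adjacency to the \emph{deepest} nonterminal whose body contains both letters (either as literal symbols or as boundary letters of child nonterminals) is what makes this unambiguous, since it is precisely at that rule that the two letters first become neighbours in the derivation, which is exactly what the emission step scans. Once this invariant is clear, summing emissions equals summing derivation-tree instances of witnesses, which by construction equals the number of times the pair appears in \pattern.
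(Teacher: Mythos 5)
Your counting scheme is the same as the paper's (propagate $k_m=1$ downwards to get the number $k_i$ of times $X_i$ occurs in the derivation of \pattern, attribute each adjacency of \pattern{} to the unique rule in which it is explicit or crossing, emit $k_i$-weighted tuples, radix-sort and merge), and that part is fine. The genuine gap is in how you decide which pairs are \emph{crossing}. You derive crossingness from the $\beta$ flags collected while walking only the rules with $k_i>0$, i.e.\ only the sub-grammar reachable from \patternc. But crossing is a property of the whole grammar $G$: a pair $ab$ is crossing as soon as it has a crossing appearance for \emph{any} $X_i$, including nonterminals used only in the derivation of \mytext{} (which all have $k_i=0$ and are never visited by your scan). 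So a pair whose pattern-side appearances are all explicit, but which straddles a nonterminal boundary somewhere in the text-only rules, gets final $\beta=0$ and is discarded by you, even though it is a crossing pair with a nonzero count and must be on the list (downstream it would then be handed to \algpairsncr{} as non-crossing, which misses its crossing occurrences in \mytext{} and corrupts the text). Similarly, crossing pairs that occur only in \mytext{} disappear from your list altogether, whereas the statement asks for all crossing pairs together with their (possibly zero) number of appearances in \pattern.

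The paper avoids this by decoupling the two tasks: the $k_i$-weighted tuples are used only for the counts (tuples are generated for every rule, the text-side ones simply carrying weight $k_i=0$), while the crossing/non-crossing classification is computed separately over the whole grammar by the first-letter/last-letter procedure of Lemma~\ref{lem:noncrossing and inner}, and the sorted count list is then filtered against that classification. Your proof is repaired by the same change: keep your bottom-up computation of first and last letters, but scan \emph{all} rules of $G$ (not just those with $k_i>0$) when setting the crossing flags, and only restrict to $k_i>0$ (or weight by $k_i$) when accumulating the counts. With that modification the argument and the $\Ocomp(|G|+n+m)$ bound go through as you describe.
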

\begin{proof}
Clearly $k_m = 1$ (as $X_m$ simply generates the whole \pattern) and other numbers satisfy a simple recursive
formula:
\begin{equation}
\label{eq:number of substrings}
k_j = \sum_{i > j} k_i \cdot \#\{\text{number of times } X_j \text{ appears in the rule for } X_i\} \enspace .
\end{equation}
Then~\eqref{eq:number of substrings} can be used in a simple linear-time procedure for calculation of $k$s:
for $i = m \twodots 1$ we add $k_i$ to $k_j$, when $X_j$ appears in the rule for $X_i$
(we add twice if there are two such appearances).
Clearly this can be implemented in linear time.

Concerning the number of appearances of crossing pairs,
observe that each appearance of $ab$ in \pattern{} can be assigned to a unique rule $X_i \to \alpha_i$:
this is the rule that generates this particular appearance of $ab$ and moreover this appearance of $ab$
comes from an explicit appearance of $ab$ in $\alpha_i$ or a crossing appearance of $ab$ in this rule.
To see this imagine we try to retrace the generation of this particular $ab$:
Given $X_i$ generating this appearance of $ab$
(we start with \patternc, as we know that it generates this $ab$)
we check if it is generated by nonterminal $X_j$ in the rule.
If so, we replace $X_i$ with $X_j$ and iterate the process.
If not, then this $ab$ is comes from either an explicit or crossing pair in this $X_i$.

Given a rule for $X_i$ listing all pairs that appear explicitly or have a crossing appearance in the rule
for $X_i$ is easy,
for each such pair $ab$ we create a tuple $(a,b,k_i)$ (where $k_i$ is the number of substrings that $X_i$ generates).
We sort the tuples using \algradix{} (in $\Ocomp(|G| + n + m)$ time).
Now for a given pair $ab$ the tuples with number of its appearances are listed consecutively on the list,
so for each pair we can add those numbers and obtain the desired (sorted) list of pairs with numbers of their
appearances in $G$, this also takes linear time, since list is of this length.

This list includes both crossing and non-crossing pairs.
We use the same procedure as in Lemma~\ref{lem:noncrossing and inner} to establish the crossing and non-crossing pairs.
Note that it generated a sorted list of crossing (and non-crossing) pairs, this takes $\Ocomp(|G| + n + m)$ time.
Without loss of generality, the order on those lists is the same as on our list,
so we can filter from it only crossing pairs in linear time.
\qed
\end{proof}

In the following, for a crossing pair $ab$ we shall denote by $k_{ab}$ the number of its appearances in \pattern,
calculated in Lemma~\ref{lem:number of appearances}.

Now we are ready to find the partition covering at least one fourth of the appearances of crossing pairs
is done by a derandomisation of a probabilistic argument showing its existance:
divide $\Sigma$ into $\Sigma_\ell$ and $\Sigma_r$ randomly, where each letter
goes to each of the parts with probability $1/2$.
Consider an appearance of a crossing pair $ab$ in \pattern.
Then $a \in \Sigma_\ell$ and $b \in \Sigma_r$ with probability $1/4$.
This applies to every appearance of a crossing pair in \pattern,
so the expected number of pairs covered is $1/4$ of their number.

\begin{lemma}[cf.~\cite{grammar}]
\label{lem:finding partition}
For \pattern{} in $\Ocomp(|G| + n + m)$ time we can find
a partition of $\Sigma$ into $\Sigma_\ell$, $\Sigma_r$ such that
number of appearances of crossing pairs in \pattern{} covered by this partition
is at least $1/4$ of all such appearances in \pattern.
In the same running time we can provide for each covered crossing $ab$
a lists of pointers to its explicit appearances in $G$.
\end{lemma}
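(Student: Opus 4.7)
The plan is to derandomize the probabilistic argument already sketched before the lemma by the method of conditional expectations, taking care that the bookkeeping runs in linear time. First I would invoke Lemma~\ref{lem:number of appearances} to obtain in $\Ocomp(|G|+n+m)$ time a sorted list of crossing pairs together with the multiplicities $k_{ab}$ counting their appearances in \pattern; set $T = \sum_{ab} k_{ab}$, so a random partition covers $T/4$ appearances in expectation. In the same pass through $G$ (as in Lemma~\ref{lem:noncrossing and inner}), I would also build for each crossing pair $ab$ a list of pointers to its explicit appearances in $G$; this is needed only to serve the final ``covered pairs'' output and costs $\Ocomp(|G|)$. Finally, I would build, for each letter $c \in \Sigma$, a list of the crossing pairs in which $c$ occurs as left or right component together with $k_{ab}$ and a pointer to the opposite letter's record; since there are only $\Ocomp(n+m)$ crossing pairs by Lemma~\ref{lem:different crossing}, these adjacency lists have total size $\Ocomp(n+m)$ and are obtainable in $\Ocomp(|G|+n+m)$ by one radix sort.

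Next, I would assign letters one by one in an arbitrary order, maintaining the invariant that the conditional expectation
$$
f(\sigma) \;=\; \sum_{ab} k_{ab}\cdot \Pr[\,a\in \Sigma_\ell,\ b\in \Sigma_r \mid \sigma\,]
$$
satisfies $f(\sigma)\ge T/4$. The contribution of pair $ab$ to $f(\sigma)$ equals $k_{ab}$, $k_{ab}/2$, $k_{ab}/4$, or $0$ depending on how many of $a,b$ have been committed and whether the commitment is compatible with $a\in\Sigma_\ell,b\in\Sigma_r$. When processing letter $c$, only those pairs containing $c$ change their contribution, so I would walk through $c$'s adjacency list and compute the two differences $\Delta_L(c)$ and $\Delta_R(c)$ corresponding to putting $c$ into $\Sigma_\ell$ respectively $\Sigma_r$, then commit $c$ to the side maximizing $f$. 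Since $\max(\Delta_L,\Delta_R)\ge \tfrac{1}{2}(\Delta_L+\Delta_R)$ the invariant is preserved, and the total work across all letters is proportional to the sum of adjacency-list sizes, which is $\Ocomp(n+m)$.

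After all letters are committed we obtain the desired partition; a single filtering pass over the pre-computed per-pair pointer lists outputs, for each crossing pair $ab$ with $a\in\Sigma_\ell$ and $b\in\Sigma_r$, the list of its explicit appearances in $G$. The three phases add up to $\Ocomp(|G|+n+m)$.

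The step I expect to be the main technical obstacle is the efficient bookkeeping for the method of conditional expectations: a naive recomputation of $f$ from scratch at each letter would cost $\Ocomp(n+m)$ per letter and blow the budget. The delicate point is the observation that committing $c$ only perturbs pairs incident to $c$, together with the fact that the adjacency lists of crossing pairs have total size $\Ocomp(n+m)$ rather than $\Ocomp(|G|)$, so each pair contributes to the update cost at most twice (once for each endpoint). Once this is set up, the inequality $\max(\Delta_L,\Delta_R)\ge (\Delta_L+\Delta_R)/2$ makes the correctness of the greedy step immediate.
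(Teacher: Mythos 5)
Your proposal is correct and follows essentially the same route as the paper: it derandomizes the random-partition argument by a letter-by-letter greedy, using the counts $k_{ab}$ from Lemma~\ref{lem:number of appearances} and per-letter lists of incident crossing pairs so that each pair is touched $\Ocomp(1)$ times, giving $\Ocomp(|G|+n+m)$ overall and the pointer lists as a by-product of the same pass over $G$. The only difference is in the accounting: the paper first guarantees that half of the crossing-pair appearances in \pattern{} are covered by $\Sigma_\ell\Sigma_r \cup \Sigma_r\Sigma_\ell$ and then keeps the better of the two orientations, whereas you run the method of conditional expectations directly on the ordered coverage, obtaining the $1/4$ bound without the final switch.
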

\begin{proof}
Observe first that the above probabilistic argument can be altered:
if we were to count the number of pairs that are covered \emph{either} by $\Sigma_\ell\Sigma_r$
or by $\Sigma_r\Sigma_\ell$ then the expected number of crossing pairs appearances covered by
$\Sigma_\ell \Sigma_r \cup \Sigma_r \Sigma_\ell$ is one half.

The deterministic construction of such a partition follows by a simple derandomisation,
using an expected value approach.
It is easier to first find a partition such that at least half
of crossing pairs' appearances in \pattern{} are covered
by $\Sigma_\ell\Sigma_r \cup \Sigma_r\Sigma_\ell$, we then choose $\Sigma_\ell\Sigma_r$ or $\Sigma_r\Sigma_\ell$,
depending on which of them covers more appearances.

According to Lemma~\ref{lem:number of appearances} we assume that we are given a sorted list $P'$,
on which we have all crossing pairs together with the number $k_{ab}$ of their appearances in \pattern.

Suppose that we have already assigned some letters to $\Sigma_\ell$ and $\Sigma_r$
and we are to decide, where the next letter $a$ is assigned.
If it is assigned to $\Sigma_\ell$, then all appearances of pairs from $a\Sigma_\ell \cup \Sigma_\ell a$ are not going
to be covered, while appearances of pairs from $a \Sigma_r \cup \Sigma_r a$ are;
similarly observation holds for $a$ being assigned to $\Sigma_r$.
The algorithm makes a greedy choice, maximising the number of covered pairs in each step.
As there are only two options, the choice brings in at least half of appearances considered.
Lastly, as each appearance of a pair $ab$ from \pattern{} is considered exactly once
(i.e.\ when the second of $a$, $b$ is considered in the main loop),
this procedure guarantees that at least half of appearances of crossing pairs in \pattern{} is covered.

In order to make the selection effective,
the algorithm \alggreedypairs{} keeps an up-to-date counters
$\mycount_\ell[a]$ and $\mycount_r[a]$, 
denoting, respectively, the number of appearances of pairs from
$a\Sigma_\ell \cup \Sigma_\ell a$ and $a\Sigma_r \cup \Sigma_r a$ in \pattern.
Those counters are updated as soon as a letter is assigned to $\Sigma_\ell$ or $\Sigma_r$.
Note that as by Claim~\ref{clm:consecutive values} we can assume that letters in \pattern{} are from an interval of consecutive $|G|$ letters,
this can be organised as a table with constant access time to $\mycount_\ell[a]$ and $\mycount_r[a]$.

\begin{algorithm}[H]
  \caption{\alggreedypairs{} \label{greedy pairs}}
  \begin{algorithmic}[1]
  \State $L \gets $ set of letters used in $P'$
	\State $\Sigma_\ell \gets \Sigma_r \gets \emptyset$ \Comment{Organised as a bit vector}
	\For{$a \in L$}
		\State $\mycount_\ell[a] \gets \mycount_r[a] \gets 0$ \Comment{Initialisation}
	\EndFor
	\For{$a \in L$}
		\If{$\mycount_r[a] \geq \mycount_\ell[a]$} \Comment{Choose the one that guarantees larger cover}
			\State $\mychoice \gets \ell$ 
		\Else
			\State $\mychoice \gets r$
		\EndIf
		\State $\Sigma_{\mychoice} \gets \Sigma_{\mychoice} \cup \{a\}$\
		\For{each $b \in L$}
			\State $\mycount_{\mychoice}[b] \gets \mycount_{\mychoice}[b] + k_{ab} + k_{ba}$ \label{counter update}
		\EndFor
	\EndFor
	\If{\# appearances of pairs from $\Sigma_r\Sigma_\ell$ in \pattern > \# appearances of pairs from $\Sigma_\ell\Sigma_r$ in \pattern}
		\State switch $\Sigma_r$ and $\Sigma_\ell$ \label{actual partition}
	\EndIf
	\State \Return $(\Sigma_\ell, \Sigma_r)$
  \end{algorithmic}
\end{algorithm}

By the argument given above, when $\Sigma$ is partitioned into $\Sigma_\ell$ and $\Sigma_r$,
at least half of the appearances of pairs from \pattern{} are covered by
$\Sigma_\ell\Sigma_r \cup \Sigma_r \Sigma_\ell$.
Then one of the choices $\Sigma_\ell\Sigma_r$ or $\Sigma_r\Sigma_\ell$ covers at least one fourth of the appearances.

It is left to give an efficient variant of \alggreedypairs,
the non-obvious operations are the choice of the actual partition in line~\ref{actual partition}
and the updating of $\mycount_\ell[b]$ or $\mycount_r[b]$ in line~\ref{counter update}. 
All other operation clearly take at most $\Ocomp(|G| + n + m)$ time.
The latter is simple: since $\Sigma_\ell$ and $\Sigma_r$ as organised as a bit vector,
we can read $P'$, for each pair in it check if it is covered by $\letters_\ell\letters_r$ or $\letters_r\letters_\ell$
and calculate the total number of pairs appearances covered by each of those two partitions.

To implement the $\mycount$,
for each letter $a$ in \pattern{} we have a table \rightl[] of \emph{right lists}:
$\rightl = \makeset{(b,k_{ab})}{ab \text{ appears in } P'}$, represented as a list.
There is a similar \emph{left list} $\leftl = \makeset{(b, k_{ba})}{ba \text{ appears in } P'}$.
Since at the input we get a sorted list of all pairs $ab$ together with $k_{ab}$,
creation of $\rightl$ can be easily done in in linear time (and similarly $\leftl$ can).

Given \rightl[] and \leftl[], performing the update in line~\ref{counter update} is easy
(suppose that we are to update $\mycount_\ell$):
we go through $\rightl$ ($\leftl$) and increase the $\mycount_\ell[b]$ by $k_{ab}$ ($k_{ba}$, respectively).
As \rightl[], \leftl[] and \mycount{} are organised as tables,
this takes only $\Ocomp(1)$ per read element of  $\rightl$ ($\leftl$).
We can then discard  $\rightl$ ($\leftl$) as they are not going to be used again.
In this way each of the list $\rightl$ ($\leftl$) is read $\Ocomp(1)$ times during \alggreedypairs,
and so this time is at most as much as the time of their creation, i.e.\ $\Ocomp(|G|)$.
\qed
\end{proof}

A similar construction works also when we want to calculate the partition
that covers $1/4$ of appearances of crossing pairs in $G$:
when calculating the number of appearances of pair $ab$ it is enough to drop the coefficient $k_i$ for appearing in the rule $X_i$
and take $1$ for every rule. The rest of the construction and proofs is the same.

\begin{lemma}
\label{lem:finding partition 2}
In $\Ocomp(|G| + n + m)$ time we can find
a partition of $\Sigma$ into $\Sigma_\ell$, $\Sigma_r$ such that
number of appearances of crossing pairs in $G$ covered by this partition
is at least $1/4$ of all such appearances in $G$.
In the same running time we can provide for each covered crossing $ab$
a lists of pointers to its explicit appearances in $G$.
\end{lemma}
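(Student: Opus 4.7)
The plan is to mirror the proof of Lemma~\ref{lem:finding partition} verbatim, with the single change that the weights $k_{ab}$ attached to a crossing pair $ab$ count its explicit and crossing appearances in $G$ rather than its appearances in \pattern. The only real work is adapting Lemma~\ref{lem:number of appearances} to produce these new weights; once that is done, the derandomisation argument and the data-structure implementation of \alggreedypairs{} transfer with no modification.

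First I would recompute the weights. In the proof of Lemma~\ref{lem:number of appearances} each appearance of a pair in \pattern{} was uniquely charged to the rule $X_i$ that generates it, and the contribution of rule $X_i$ was scaled by $k_i$, the number of substrings of \pattern{} generated by $X_i$. Here I simply set every $k_i$ equal to $1$: I scan all rules of $G$, and for each rule $X_i \to \alpha_i$ I emit a tuple $(a,b,1)$ for every explicit occurrence of a pair $ab$ in $\alpha_i$ and for every crossing occurrence in $X_i$ (the latter identified as in Lemma~\ref{lem:noncrossing and inner} using the precomputed first and last letters of each $\eval(X_j)$). Sorting these tuples with \algradix{} in $\Ocomp(|G|+n+m)$ time and aggregating, I obtain a sorted list of all crossing pairs $ab$ together with $k'_{ab}$, the number of appearances of $ab$ in $G$. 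As a byproduct of this sorting, for each crossing pair I also keep the list of pointers to its explicit appearances, exactly as in Lemma~\ref{lem:noncrossing and inner}.

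Next I run \alggreedypairs{} with $k'_{ab}$ in place of $k_{ab}$. The probabilistic argument underlying correctness is unchanged: if every letter of $\Sigma$ is placed independently into $\Sigma_\ell$ or $\Sigma_r$ with probability $1/2$, then a fixed appearance of a crossing pair $ab$ in $G$ is covered by $\Sigma_\ell\Sigma_r$ with probability $1/4$, and by $\Sigma_\ell\Sigma_r \cup \Sigma_r\Sigma_\ell$ with probability $1/2$. By linearity of expectation, the expected number of appearances in $G$ covered by the union is at least half of the total, so some partition achieves this deterministically, and one of the two sides $\Sigma_\ell\Sigma_r$, $\Sigma_r\Sigma_\ell$ covers at least $1/4$ of them. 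The greedy method of conditional expectations used in Lemma~\ref{lem:finding partition} — at each step assigning $a$ to whichever part currently maximises the expected number of covered pair appearances — then finds such a partition deterministically, and in line~\ref{actual partition} we select between $\Sigma_\ell\Sigma_r$ and $\Sigma_r\Sigma_\ell$ by direct comparison.

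The implementation details and the running time analysis are identical to those of Lemma~\ref{lem:finding partition}: the right- and left-lists $\rightl[](a)$, $\leftl[](a)$ are built from the sorted list of crossing pairs in $\Ocomp(|G|+n+m)$ time, each is read $\Ocomp(1)$ times throughout \alggreedypairs, and the counters $\mycount_\ell,\mycount_r$ are stored in tables of size $\Ocomp(|G|+n+m)$ thanks to Claim~\ref{clm:consecutive values}. Consequently the whole procedure finishes in $\Ocomp(|G|+n+m)$ time, and the pointer lists for explicitly-covered pairs are available at no extra cost. There is no genuine obstacle here — everything is an arithmetic-free bookkeeping change from Lemma~\ref{lem:finding partition}, which is why this variant does not require that $M$ fit into $\Ocomp(1)$ code words, unlike the \pattern-based counterpart.
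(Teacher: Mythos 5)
Your proposal is correct and follows exactly the route the paper intends: the paper's own (sketched) proof says to reuse the construction of Lemma~\ref{lem:finding partition}, dropping the multiplicities $k_i$ and counting each explicit or crossing appearance in a rule with weight $1$, which is precisely what you do, including the unchanged derandomisation and the $\Ocomp(|G|+n+m)$ bookkeeping. Your added remark that this variant needs no assumption on $M$ is consistent with the paper.
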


Thus the modification to \algFCPM{} would be as follows:
after establishing the list of all crossing pairs we find two partitions $\letters_\ell,\letters_r$,
$\letters_\ell',\letters_r'$ one of which covers half of appearances of crossing pairs in the pattern the other in $G$.
And then instead of compressing all crossing pairs
we compress only those covered by the first and then by the second of those two partitions.
Each of those compression requires only one call to \algpop, so there are only $\Ocomp(1)$
letters introduced to a rule during the crossing pairs compression.

\begin{lemma}
\label{lem:pair comp improved}
\algFCPM{} using the modified crossing pair subprocedure introduces $\Ocomp(1)$
letters to a rule in one phase.
\end{lemma}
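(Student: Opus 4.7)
The plan is to trace through the modified crossing-pair subprocedure and account, one by one, for all operations that could introduce fresh letters into a rule of $G$. The key point is that the expensive $\log(n+m)$ factor in the original bound from Lemma~\ref{lem:compression of all pairs} came entirely from iterating \algpop{} over $\Ocomp(\log(n+m))$ bit-partitions of $\letters$; in the modified scheme this loop is replaced by exactly two invocations, so the bound should drop to $\Ocomp(1)$.

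First, I would recall the structure of the modified subprocedure: after listing crossing pairs and their appearance counts via Lemma~\ref{lem:number of appearances}, the algorithm computes two partitions $(\letters_\ell,\letters_r)$ of $\letters$---one from Lemma~\ref{lem:finding partition} covering at least a quarter of the crossing-pair appearances in \pattern, the other from Lemma~\ref{lem:finding partition 2} covering at least a quarter of the explicit crossing-pair appearances in $G$. For each of the two partitions we call $\algpop(\letters_\ell,\letters_r)$ and then replace the no-longer-crossing pairs in the chosen family by fresh letters. The actual pair replacement stage only substitutes single letters for two-letter substrings, so it cannot increase the length of any rule; hence all new letters that end up inside rules of $G$ come from the two \algpop{} calls.

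Next, I would invoke Lemma~\ref{lem:uncrossing pairs}, which states that a single run of \algpop{} pops at most two letters out of each nonterminal. These popped letters appear in $G$ at the places where that nonterminal occurs on the right-hand side of some rule: each nonterminal appearance in a rule may receive at most one letter on its left (from left-popping) and at most one on its right (from right-popping), contributing at most two letters per appearance. Since the invariant~\eqref{form} forces every rule to contain at most two nonterminal appearances on its right-hand side, one call to \algpop{} deposits at most $4$ new letters into any fixed rule, and the two calls of the modified subprocedure together deposit at most $8$, which is $\Ocomp(1)$.

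The only subtlety to double-check is the bookkeeping that letters get charged to the correct rule---namely, that when $X_j$ is popped, the freshly introduced letters land in the rules where $X_j$ occurs, not in the rule $X_j\to\alpha_j$---but this is precisely the charging already handled inside Lemma~\ref{lem:uncrossing pairs}, so no new argument is needed. I do not expect any real obstacle here; the entire statement is essentially a restatement of Lemma~\ref{lem:uncrossing pairs} applied twice, with the observation that the modified crossing-pair compression, unlike the original one, runs \algpop{} only a constant number of times per phase.
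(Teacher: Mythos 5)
Your proposal is correct and matches the paper's (very terse) justification: the paper simply observes that the modified subprocedure compresses pairs covered by two fixed partitions, each requiring a single call to \algpop, and hence only $\Ocomp(1)$ letters enter any rule. Your per-rule accounting (at most two popped letters per nonterminal appearance, at most two nonterminal appearances per rule by~\eqref{form}, two \algpop{} calls, and pair replacement only shortening rules) just makes explicit the constant that the paper leaves implicit via Lemma~\ref{lem:uncrossing pairs}.
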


It is left to estimate that indeed this modified compression schema shortens $|\pattern|$ and $|G|$ by a constant factor in a phase.
This will clearly guarantee the $\Ocomp(\log M)$ number of phases.
\begin{lemma}
\algFCPM{} using the modified crossing pair subprocedure keeps the size of the grammar $\Ocomp(n+m)$
and has $\Ocomp(\log M)$ phases
\end{lemma}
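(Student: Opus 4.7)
The plan is to deduce both claims from a constant-factor shortening per phase, combined with a budget on new letters introduced into $G$. First I would bound the letter influx in a phase. Fixing the ends, block compression, and the modified crossing-pair compression each contribute only $\Ocomp(n+m)$ fresh letters to $G$: the block compression by Lemma~\ref{lem:removing crossing blocks}, and the modified crossing-pair compression by Lemma~\ref{lem:pair comp improved}, since only two partitions are now handled and each requires a single call to \algpop. Summing gives an $\Ocomp(n+m)$ additive contribution to $|G|$ per phase.

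Next I would prove the constant-factor shortening for \pattern. The fragment accounting in Lemma~\ref{lem:fixing shortens} relied on Claim~\ref{clm:two letters are compressed}: every two consecutive letters in \pattern{} had one compressed. That claim fails here because a crossing pair not covered by the chosen partition leaves both of its letters uncompressed. However, Lemma~\ref{lem:finding partition} guarantees that the pattern-partition covers at least a quarter of all crossing pair appearances in \pattern, while block compression still compresses every block of length at least $2$ and non-crossing pair compression still compresses every non-crossing pair. Hence at most a $3/4$-fraction of adjacent-position pairs in \pattern{} can be uncovered, and a maximal run of consecutive uncompressed letters is spanned by uncovered crossing pair positions, so its length is controlled by the uncovered-pair count. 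A direct counting argument analogous to Lemma~\ref{lem:fixing shortens} (including the boundary analysis for the ends-fixing blocks that expand to up to three letters) then yields $|\pattern'| \leq \alpha|\pattern|$ for an absolute constant $\alpha<1$. The identical argument, now using Lemma~\ref{lem:finding partition 2} for each rule of $G$, shows that the explicit content of $G$'s rules shortens by a constant factor per phase.

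Given constant-factor shrinkage of rule content plus $\Ocomp(n+m)$ new letters per phase, the same geometric series reasoning as in Lemma~\ref{lem:running time} gives $|G|=\Ocomp(n+m)$ throughout: the per-phase recursion $|G_{t+1}| \leq \alpha|G_t| + \Ocomp(n+m)$ has a bounded fixed point. Since $|\pattern| \leq M$ initially and each phase shrinks it by a constant factor, there are $\Ocomp(\log M)$ phases.

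The main obstacle is the adaptation of the fragment accounting of Lemma~\ref{lem:fixing shortens} to a setting where two consecutive letters may both survive the phase. The crucial point is to verify that a run of $k$ consecutive non-fragment letters is associated with $k-1$ uncovered crossing pair positions inside it and is bordered by compressed letters (or the string's ends), so the total non-fragment mass is bounded by the uncovered mass; since at most a $3/4$-fraction of pair positions are uncovered (the partition covering a $1/4$-fraction of crossing pair appearances), the compressed mass is at least a constant fraction of the length, from which the $\alpha<1$ shortening factor follows after bookkeeping for fragments of length $3$ and $4$ exactly as in \Frarefall{}.
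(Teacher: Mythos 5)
Your proposal follows essentially the same route as the paper: blocks and non-crossing pairs are always attempted, the chosen partitions cover at least a quarter of the crossing-pair appearances in \pattern{} (respectively in $G$), hence a constant fraction of adjacent positions is touched and the fragment accounting of Lemma~\ref{lem:fixing shortens} gives constant-factor shortening of \pattern{} and of the explicit rule content, after which the recursion $|G'| \le \alpha|G| + \beta(n+m)$ with the $\Ocomp(n+m)$ per-phase letter influx (Lemmas~\ref{lem:removing crossing blocks} and~\ref{lem:pair comp improved}) yields $|G|=\Ocomp(n+m)$ and $\Ocomp(\log M)$ phases. One cosmetic remark: the statement that the non-fragment mass is bounded by the uncovered-pair mass alone is not literally true for runs of length one — you must also charge the number of runs to the neighbouring fragment letters, or simply count as the paper does (each attempted pair forces a compressed letter, and each compressed letter borders at most two adjacent pairs) — but this bookkeeping slip does not affect the correctness of the approach.
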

\begin{proof}
Let us first consider the simpler case of \algET.
Consider first the length of \pattern, we show that it is reduced by a constant factor in a phase.
Consider two consecutive letters $ab$ in \pattern.
Observe that if \algET{} tried to compress $ab$ (when $a = b$ this means that $a$ blocks were compressed)
then at least one of those letters is compressed in the phase:
we tried to compress this $ab$ and the only reason why we could fail is because one of those letters
was already compressed by some earlier compression.
We want to show that for at least $1/4$ of all such pairs $ab$ we tried to compress them.
Thus at least $1/8$ of all letters was compressed and so the length of \pattern{} dropped by at least $1/16$ in a phase.

If $a = b$ then we compressed them, during the blocks compression.
If $a \neq b$ and $ab$ is non-crossing then we tried to compress them.
Lastly, when $a \neq b$ and $ab$ is a crossing pair, then we chose a partition $\Sigma_\ell \Sigma_r$ such that
at least $1/4$ of all appearances of crossing pairs is covered by this partition.
So for one in four of such pairs we tried to compress them.

In total, for at least one fourth of $ab$s we tried to compress them, as claimed.

A similar analysis yields that we reduced the length of $|G|$ (excluding the new introduced letters) by $15/16$.
Since we introduce only $\Ocomp(n+m)$ new letters to $G$ per phase, the size of $G$ remains $\Ocomp(n+m)$.

Now, in the general case of \algFCPM{} we combine this analysis with the one in Lemma~\ref{lem:fixing shortens}.
We define the fragments of more than one letter as in Lemma~\ref{lem:fixing shortens},
i.e.\ the letters that were replaced during the compression
are grouped so that one group (fragment) is replaced with a shorter string.
The letters that were not altered are not assigned to any fragments.

Similarly as in the earlier argument for \algET{} above,
we want to show that for at least $1/4$ of all pairs of consecutive letters
one of those letters was assigned to a fragment.
Since fragments are replaced with strings of at most $3/4$ of their length,
as above this shows that \pattern{} is shortened by a constant factor.
To show that at least one of $ab$ is in a compressed fragment it is again enough to show
that we tried to compress $ab$ (either as a pair of different letters or as a part of a block of letters):
if we succeed then $a$, $b$ are in the same fragment, if we fail then this means that at least one of them
is in some other fragment.

So consider any two consecutive letters $a$ and $b$.
If any of them was compressed during the fixing of beginning or end then we are done,
as it was assigned to a fragment.
Otherwise, if $a = b$ than they are compressed during the blocks compression,
so both of them are assigned to the same fragment.
If $a \neq b$ and $ab$ is a non-crossing pair, then we tried to compress it during the non-crossing pairs compression.
Lastly, if $a \neq b$ and $ab$ is a crossing pair then due to our construction of $\Sigma_\ell$ and $\Sigma_r$
from Lemma~\ref{lem:finding partition 2} at least one fourth of appearances of crossing pairs is chosen for the compression.

The rest of the argument follows as in the case of the one for \algET, with a slightly larger constant.
Hence, in each round \pattern{} is shortened by a constant factor and so there are at most $\log M$ phases.

Observe that a similar argument holds for $G$: there is a second round of compression of crossing pairs that
tries to compresses at least $1/4$ of crossing pairs appearances in $G$.
Hence also the explicit strings in $G$ can be grouped into fragments as above.
On the other hand, by Lemma~\ref{lem:removing crossing blocks} and~\ref{lem:pair comp improved}
there are $\Ocomp(n + m)$ letters introduced to $G$ in one phase
(and those are not necessarily compressed).
So the size of the new grammar (at the end of the phase) $G'$ can be given using an recursive equation
$$
|G'| \leq \alpha |G| + \beta (n+m)
$$
for some $\alpha < 1$ and $\beta$.
Since in the first phase $|G| = 2(n+m)$ by simple calculations
it follows that $|G'| \leq \frac{2\beta}{1-\alpha} (n+m)$.
\qed
\end{proof}

\subsection{Block compression}
As already noted, we should improve the $\Ocomp(|G| + (n+m) \log(n+m))$ running time,
see Lemma~\ref{lem:removing crossing blocks} used for sorting of blocks' lengths to $\Ocomp(|G|)$.
We deal with this by introducing a special representation of the lengths of $a$ blocks.
In particular, we shall sort the lengths of blocks using \algradix, treating the lengths as bitvectors.
For this reason considering very long blocks that exceed the length of \pattern{} needs to be avoided.

\subsubsection*{Too long blocks}
Consider the blocks of letter $a$ that does not appear in \pattern.
Then there is no difference, whether we replace two appearances of $a^\ell$
with the same letter, or with different letters, as they cannot be part of a pattern appearance.
Thus, for $a$ that does not appear in \pattern{} we perform a `sloppy' blocks compression:
we treat each maximal block as if it had a unique length.
To be precise: we perform \algremblocks, but represent $a^\ell$ blocks as $(a,?)$ for $\ell>1$.
Then, when replacing blocks of $a$ (we exclude the blocks of length $1$),
we replace each of them with a fresh letter.
In this way, the whole blocks compression does not include any cost of
sorting the lengths of blocks of $a$.
Still, the appearances of the pattern are preserved.

Similar situation appears for $a$ that appears in \pattern,
but \mytext{} has $a$ blocks of length greater than $M$.
We treat them similarly: as soon as we realise that $a^\ell$ has $\ell > M$,
we represent such blocks as $(a,>M)$ and do not calculate the exact length and do not insist that two
such blocks of the same length are replaced with the same symbol.
In this way we avoid the cost associated with sorting this length.
Of course, when $a$ is the first or last letter of the pattern we need to replace them with $a_Ra_?a_L$ (or similar),
to allow the pattern beginning/ending at this block.

\subsubsection*{Length representations}
The intuition is as follows: while the $a$ blocks can have exponential length,
most of them do not differ much, as they are obtained by concatenating letters $a$ that appear explicitly in the grammar.
Such concatenations can in total increase the lengths of $a$ blocks by $|G|$.
Still, there are blocks of exponential length:
these `long' blocks are created only when two blocks coming from two different
non-terminals are concatenated. However, there are only $n+m$ concatenations of nonterminals,
and so the total number of `long' blocks `should be' at most $n+m$.
Of course, the two mentioned ways of obtaining blocks can mix, and our representation
takes this into the account:
we represent each block as a concatenation of two blocks: `long' one and `short' one:
\begin{itemize}
	\item the `long' corresponds to a block obtained as a concatenation
	of two nonterminals, such a long block is common for many blocks of letters,
	\item the `short' one corresponds to concatenations of letters
	appearing explicitly in $G$, this length is associated with the given block alone.	
\end{itemize}

More formally:
we store a list of \emph{common lengths},
i.e.\ the lengths of common long blocks of letters.
Each block-length $\ell$ is represented as a sum $c + o$,
where $c$ is one of the common lengths and $o$ (\emph{offset}) is a number associated with $\ell$.
Furthermore, some blocks are represented only by offsets;
we sometimes think of them as if they were represented by a common length $0$ and an offset.
The construction will guarantee that each offset is at most $|G|$.
Internally, $a^{\ell}$ is represented as a number $o$ and a pointer to $c$.

Initially a common length $c$ is created for each nonterminal $X_i$, such that $\eval(X_i) = a^c$.
Next, other common lengths are created, when we add two common lengths (perhaps with offsets),
i.e.\ when during the calculation of length $\ell$ (inside a rule) we add lengths that are both
represented using non-zero common lengths. This new length $\ell$ is then a new common length and is represented as itself plus a $0$ offset.
If we concatenate explicit letter $a$ (i.e.\ represented by a $0$-common length with an offset) to a block,
we simply increase the offset.
The blocks that are created solely by explicit letters $a$ are represented by offsets alone, without a common length.
Observe that this covers all possible way of creation of block.
Furthermore, there are at most $2(n+m)$ common lengths in one phase:
at most $n+m$ created when $X_i$ defines a block of letters and at most one per rule created as a concatenation of two
block whose lengths are represented as common lengths.

Before proceeding, let us note on how large the offsets may be and how many of them are.
\begin{lemma}
\label{lem:offsets}
There are at most $|G| + n + m$ offsets in total and largest offset is at most $|G|$.
\end{lemma}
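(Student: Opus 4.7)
The plan is to prove the two bounds separately by tracing how offsets come into being as \algremblocks{} constructs the length representations.

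For the bound on the largest offset, I would argue inductively on the order in which block representations are formed. The key invariant is that the offset of a block equals the number of \emph{explicit} letters $a$ from the rules of $G$ that have been concatenated into that block \emph{without} being absorbed into a common length (a common length is created only when two non-zero common lengths are added, at which point all the accumulated explicit contribution is flattened into the new common length). Since each explicit letter of $G$ can contribute to the offset of at most one block at a time, the sum of all offsets is bounded by the number of explicit letters, which is at most $|G|$; in particular every single offset is at most $|G|$.

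For the bound on the total number of offsets, I would classify offset-bearing block representations by their origin. First, every maximal block that sits at an explicit position in the rules of $G$ has one offset; since there are at most $|G|$ explicit positions this source contributes at most $|G|$ offsets. Second, \algremblocks{} processes each of the $n+m$ nonterminals at most once and, per nonterminal $X_i$, computes a single representation for the $a$-prefix length $\ell_i$ and a single representation for the $b$-suffix length $r_i$; these representations are shared across all occurrences of $X_i$ in the rules, so we get at most $2$ new offsets per nonterminal, i.e.\ at most $2(n+m)$ in total. Adjacent-block merges triggered by popping can only turn two existing representations into one, so they do not increase the count. Summing gives at most $|G|+n+m$ (up to constants).

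The main obstacle is the accounting around the merging step: when a popped $a^{\ell_i}$ meets an adjacent explicit block of $a$s in the same rule, the resulting representation must be assigned a single (common length, offset) pair. I would handle this by showing that merges only recombine existing offsets (adding them, and possibly promoting a sum of two non-zero common lengths to a fresh common length with offset $0$), so the invariant above is preserved and no new offset is introduced beyond those already counted. This, together with the disjointness of explicit-letter contributions, delivers both bounds stated in the lemma.
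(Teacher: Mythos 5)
Your argument is essentially the paper's: offsets are created and incremented only when explicit letters of $G$ are concatenated into a block, each explicit letter being used at most once, which bounds every offset (indeed the sum of all offsets) by $|G|$. The only deviation is in the counting of offsets, where your classification by origin (explicit blocks plus popped prefixes/suffixes) gives $|G|+2(n+m)$ rather than the stated $|G|+n+m$; this constant-factor slack is harmless for every later use of the lemma, and the paper's own proof simply charges each offset creation to an explicit letter of $G$, obtaining the even stronger count bound $|G|$.
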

\begin{proof}
Creation of an offset corresponds to an explicit letter in $G$, so there are at most $|G|$ offsets created.

An offset is created or increased, when an explicit letter $a$ (not in a compressed form) is concatenated to the block of $a$s.
One letter is used once for this purpose and there is no other way to increase an offset,
so the maximal of them is at most $|G|$.
\qed
\end{proof}

Since we intend to sort the lengths, we need to compare the lengths
of two numbers represented as common lengths with offsets, say $o + c$ and $o' + c'$.
Since the common lengths are so large, we expect that we can compare them lexicographically, i.e.\
\begin{equation}
\label{eq:comparing lex}
o + c \geq o' + c' \iff
\begin{cases}
c > c' , & \text{or }\\
c = c' \land o \geq o'
\end{cases}
\end{equation}
Furthermore~\eqref{eq:comparing lex} allows a simple way of sorting the lengths of maximal blocks:
\begin{itemize}
	\item we first sort the common lengths (by their values)
	\item then for each common length we (separately) sort the offsets assigned to this common length.
\end{itemize}
While~\eqref{eq:comparing lex} need not to be initially true,
we can apply a couple of patches which make it true.
Before that however, we need the common lengths to be sorted.
We sort them using \algradix{} and treating each common length as a series of bits.
Although this looks more expensive, it allows a nice amortised analysis as demonstrated later, see Lemma~\ref{clm:cost from power to rule}.
Recall that we do not sort lengths of blocks longer than $M$.

\begin{lemma}
\label{clm:cost of long sorting}
Let $c_1 \leq c_2 \leq \dots \leq c_k \leq M$ be the common lengths.
The time needed to sort them is $\Ocomp(\sum_{i=1}^k \log(c_i) + k)$.
\end{lemma}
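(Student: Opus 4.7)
The plan is to exploit the fact that two numbers $c_i < c_j$ when $c_i$ and $c_j$ have different bit-lengths $\ell_i < \ell_j$, and otherwise (equal bit-length) the numerical comparison coincides with the lexicographic comparison of their bit representations. So sorting the $c_i$ reduces to sorting pairs $(\ell_i, c_i)$ lexicographically, which we implement as: (i) bucket by bit-length, (ii) sort each bucket of same-length numbers by value.

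First I would compute $\ell_i = \lfloor \log_2 c_i \rfloor + 1$ for every $i$. Since $c_i \leq M$ and $M$ fits in $\Ocomp(1)$ machine words (the setting of this section, cf.\ Lemma~\ref{lem:offsets} and the surrounding discussion), each $\ell_i$ is computable in $\Ocomp(1)$ time, for a total of $\Ocomp(k)$. Next, I would partition the $c_i$'s into groups according to $\ell_i$: the values $\ell_i$ are themselves integers bounded by $\log M$, which is polynomial in $n+m$, so \algradix{} sorts them (and produces the groups) in $\Ocomp(k)$ time.

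For each resulting group, consisting of $k_\ell$ numbers all of bit-length exactly $\ell$, I would apply \algradix{} to sort the numbers in the group. Treating each such number as a sequence of $\lceil \ell / \log k_\ell \rceil$ digits in base $k_\ell$, \algradix{} runs in $\Ocomp(k_\ell \lceil \ell/\log k_\ell \rceil) = \Ocomp(k_\ell \ell + k_\ell)$ time on this group (using the trivial bound $\log k_\ell \geq 1$ when $k_\ell \geq 2$, and noting that groups with $k_\ell = 1$ require no work beyond $\Ocomp(1)$). Summing over all groups gives
\begin{equation*}
\sum_\ell \Ocomp(k_\ell \ell + k_\ell) \;=\; \Ocomp\!\left(\sum_i \ell_i + k\right) \;=\; \Ocomp\!\left(\sum_{i=1}^k \log c_i + k\right),
\end{equation*}
since $\ell_i = \log c_i + \Ocomp(1)$. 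Concatenating the sorted groups in increasing order of $\ell$ yields the required sorted list in the claimed time.

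The only potentially delicate point is ensuring that the per-group call to \algradix{} is valid even for small $k_\ell$ (so that the $\ell/\log k_\ell$ bound behaves) and degenerate lengths such as $\ell = 1$; these are handled by the trivial observation that a group of size one (or of one-bit numbers) needs $\Ocomp(k_\ell)$ work at most, which is already absorbed by the additive $k$ term. Otherwise the argument is a direct bookkeeping step on top of the radix-sort primitive already used throughout the paper.
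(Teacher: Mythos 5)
Your proposal is correct and is essentially the argument the paper intends: its one-line proof (``a standard implementation of \algradix{} that sorts the numbers of different length'') is precisely your scheme of grouping by bit-length and radix-sorting within each group, and your accounting $\sum_\ell (k_\ell \ell + k_\ell) = \Ocomp(\sum_{i} \log c_i + k)$ is sound. One small imprecision: sorting the bit-lengths $\ell_i$ is not $\Ocomp(k)$ merely because they are bounded by a polynomial in $n+m$ (for small $k$ a radix/counting sort over that range costs more); instead bucket them over $[1,\max_i \ell_i]$, which costs $\Ocomp(k+\max_i \ell_i) \subseteq \Ocomp(k+\sum_i \log c_i)$ and is therefore absorbed by the claimed bound.
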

This is done by a standard implementation of \algradix{}
that sorts the numbers of different length.

The problem with~\eqref{eq:comparing lex} is that even though $c_i$ and $c_j$ are so large,
it can still happen that $|c - c'|$ is small.
We fix this naively: first we remove common lengths so that $c_{i + 1} - c_i > |G|$.
A simple greedy algorithm does the job in linear time.
Since common lengths are removed, we need to change the representations of lengths:
when $o$ was assigned to removed $c$ consider the $c_i$ and $c_{i+1}$ that remained in the sequence and $c_i < c < c_{i+1}$.
We reassign $\ell = c + o$ to either $c_i$ or $c_{i+1}$:
if $o+c \geq c_{i+1}$ then we reassign it to $c_{i+1}$ and otherwise to $c_i$.
It can be shown that in this way all offsets are at most $2 |G|$ and that~\eqref{eq:comparing lex} holds afterwards.

\begin{lemma}
Given a sorted list of common lengths we can in $\Ocomp(|G|)$ time choose its sublist and reassign offsets
(preserving the represented lengths) such that all offsets are at most $2|G|$ and~\eqref{eq:comparing lex} holds.
\end{lemma}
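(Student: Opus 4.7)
The plan is a single left-to-right greedy pass to choose the sublist of kept common lengths, followed by a single pass through all offsets to reassign them by the rule prescribed just before the claim. First, scan $c_1 \leq c_2 \leq \cdots \leq c_k$ in order: keep $c_1$, maintain a pointer $c_{\text{last}}$ to the most recently kept length, and for each subsequent $c_j$ mark it kept (and update $c_{\text{last}} \gets c_j$) if $c_j - c_{\text{last}} > |G|$, otherwise mark it removed. This yields both the required gap condition $c_{i+1}' - c_i' > |G|$ between consecutive kept values and the invariant that every removed $c$ satisfies $c - c' \leq |G|$ for the kept value $c'$ immediately preceding it (otherwise $c$ would itself have been kept). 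There are $\Ocomp(n+m)$ common lengths, so this pass costs $\Ocomp(n+m) \leq \Ocomp(|G|)$; during the pass I would also store, for each removed $c$, pointers to the bracketing kept values $c' < c < c''$.

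Next, traverse the offset list. For each $(c,o)$ with $c$ removed, retrieve $c'$ and $c''$ in constant time, and replace $(c,o)$ by $(c'', o+c-c'')$ if $o+c \geq c''$ and by $(c', o+c-c')$ otherwise. The represented length is preserved by construction. By Lemma~\ref{lem:offsets} there are $\Ocomp(|G|)$ offsets, each originally at most $|G|$, so the reassignment costs $\Ocomp(|G|)$ in total.

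The two verifications are direct case analyses. For the offset bound: in the second branch the new offset is $o + c - c' \leq |G| + (c - c') \leq 2|G|$ using the greedy invariant and $o \leq |G|$; in the first branch it is $o + c - c''$, which is nonnegative by the branch condition and at most $|G| + c' + |G| - c'' < |G|$ since $c'' - c' > |G|$. For \eqref{eq:comparing lex}, consider two reassigned pairs $(c_A, o_A)$ and $(c_B, o_B)$ with $c_A < c_B$; it is enough to show $c_A + o_A < c_B$, because then $c_A + o_A < c_B \leq c_B + o_B$. If $(c_A,o_A)$ was untouched, then $o_A \leq |G| < c_B - c_A$ by the gap condition. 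If it arose from the second branch, the pre-reassignment sum satisfied $o + c < c''$ where $c''$ is the next kept value after $c_A$, and $c_B \geq c''$ since $c_B > c_A$ is kept, so $c_A + o_A = o + c < c'' \leq c_B$. If it arose from the first branch, then $o_A = o + c - c_A$ with the original $c < c_A$, so $o_A < o \leq |G| < c_B - c_A$. The equal-$c$ case of \eqref{eq:comparing lex} is immediate.

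The main obstacle is precisely this last check: the second branch may produce an offset as large as $2|G|$, and one has to see that the resulting represented length nevertheless stays strictly below the next kept common length, which is exactly what makes lexicographic comparison of (kept length, offset) faithful to numerical comparison of the represented values. Everything else is bookkeeping (pointers set during the greedy pass, constant-time lookups) that fits comfortably within the $\Ocomp(|G|)$ budget.
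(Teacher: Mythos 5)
Your construction is essentially the paper's: the same greedy thinning of the sorted list so that consecutive kept common lengths differ by more than $|G|$, the same threshold rule sending each offset of a removed length to the bracketing kept length, and the same two checks (the $2|G|$ offset bound, and the key fact that every represented length assigned to a kept $c$ stays strictly below the next kept common length, which is what makes \eqref{eq:comparing lex} hold). Your case analysis of that last fact is, if anything, more explicit than the paper's, and the running-time accounting via Lemma~\ref{lem:offsets} is fine.

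There is, however, one genuine gap: you anchor the greedy at $c_1$, the smallest common length, and always keep it, whereas the paper starts by choosing $c_0=0$ and only then enforces gaps exceeding $|G|$. The zero anchor is not cosmetic. Some block lengths are represented by an offset alone, i.e.\ conceptually by the common length $0$ with an offset of up to $|G|$, and genuine common lengths can be small (a nonterminal with $\eval(X_i)=a^2$ creates the common length $2$). If such a small $c_1\leq|G|$ is kept, then \eqref{eq:comparing lex} fails between the offset-only group and the $c_1$-group: lexicographically $(c_1,0)$ exceeds $(0,o)$ even when numerically $o>c_1$; worse, a block of length $5$ built from explicit letters (represented as offset $5$) and a block of the same length represented as $(c_1,o')=(4,1)$ keep different representations, so the subsequent radix sort on (group index, offset) can misorder lengths and give different fresh letters to equal-length blocks, breaking the block compression. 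The fix is exactly the paper's: always include $0$ as the first kept element (so every kept positive common length exceeds $|G|$) and allow offsets of removed small common lengths to be reassigned down to $0$, i.e.\ to become offset-only representations; with that change the case $c_A=0$ is covered by your ``untouched'' and ``second branch'' cases and the rest of your argument goes through verbatim.
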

\begin{proof}
Given a sorted list of of common lengths we choose a subsequence of it such that the distance between any two consecutive common lengths in it is at least $|G|$.
This is done naively: we choose $c_0 = 0$ and then go through the list.
Having last chosen $c$ we look for the smallest common length $c'$ such that $c' - c > |G|$ and choose this $c'$.
Since there are $2(n+m)$ common lengths in the beginning, this can be done in $\Ocomp(n+m)$ time.
We refer to the obtained sequence as $0 = c_0 < c_1 < \ldots $.

For any removed $c$ such that $c_i < c < c_{i+1}$
we reassign offsets assigned to $c$ as described above:
for $o$ assigned to $c$, if $c + o \geq c_{i+1}$ then we reassign $o$ to $c_{i+1}$, otherwise to $c_i$.
In this way $o$ is changed to $o'$ and it takes $\Ocomp(1)$ per offset to change this and as there are $\Ocomp(|G|)$
offsets, see Lemma~\ref{lem:offsets}, this takes $\Ocomp(|G|)$ time in total.
Let $o'$ be the offset after the reassignment. Then
\begin{itemize}
	\item $o' \leq 2|G|$, since $o \leq |G|$ and the only way to increase it is to reassign it to $c_i$.
	Since $c$ is removed, it holds that $c - c_i \leq |G|$. Hence $o' = o + (c - c_i) \leq |G| + |G|$.
	\item When $o_i$ is assigned to $c_i$ then $o_i + c_i < c_{i+1}$:
	indeed, if $o_i$ was reassigned from $c > c_i$ then by definition $c_i + o_i = c + o < c_{i+1}$;
	if $o$ was originally assigned to $c_i$ or it was reassigned from $c_{i-1}$ then $o_i < |G|$ and so $c_i + o_i \leq c_i + |G| < c_{i+1}$.
\end{itemize}
Hence the claim of the Lemma holds.
\qed
\end{proof}

Now, since~\eqref{eq:comparing lex} holds, in order to sort all lengths it is enough to sort the offsets within groups.
To save time, we do it simultaneously for all groups:
offset $o_j$ assigned to common length $c_i$ is represented as $(i,o_j)$,
we sort these pairs lexicographically, using \algradix.
Since the offsets are at most $2|G|$ and there are at most $|G|$ of them and there are at most $\Ocomp(n+m)$ common lengths,
\algradix{} sorts them in $\Ocomp(|G|)$ time.

\begin{lemma}
When all common lengths (not larger than $M$) are sorted and satisfy~\eqref{eq:comparing lex},
sorting all lengths takes $\Ocomp(|G|)$ time.
\end{lemma}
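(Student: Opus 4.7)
The plan is to reduce sorting the full block lengths to a lexicographic sort on pairs \emph{(index of common length, offset)}, and then apply \algradix{} twice. By~\eqref{eq:comparing lex}, if $\ell = o + c_i$ and $\ell' = o' + c_{i'}$ are two block lengths (with $c_i$, $c_{i'}$ from the sorted common-length sequence and $o$, $o'$ their offsets), then $\ell \le \ell'$ iff $(i,o) \le (i',o')$ lexicographically. So sorting the lengths is the same as sorting these pairs.

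First I would walk through all block-length representations in the instance and emit, for each, a pair $(i,o)$; this takes $\Ocomp(|G|)$ time since there are at most $|G|+n+m = \Ocomp(|G|)$ offsets by Lemma~\ref{lem:offsets}. Next, I would sort the pairs with \algradix, using $o$ as the low-order key and $i$ as the high-order key. The offsets satisfy $o \le 2|G|$ (by the reassignment discussed above), and the first coordinate ranges over at most $\Ocomp(n+m)$ values since there are at most $2(n+m)$ common lengths in a phase. Thus both coordinates lie in an interval of size $\Ocomp(|G|+n+m)$, which is polynomial in $n+m$, so \algradix{} on $\Ocomp(|G|)$ such pairs runs in $\Ocomp(|G|)$ time. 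Concatenating the result per common-length group yields the sorted list of all block lengths.

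The only point that needs care is ensuring that \algradix{} is actually applicable here, i.e.\ that the keys fit in the machine-word / polynomial-range regime required by the sort: the offset key is bounded by $2|G|$ by the previous lemma, and the index key by $2(n+m)$ by the bound on the number of common lengths, so both are $\Ocomp(n+m)$ and well within the range sortable in linear time. No further argument is needed: correctness is immediate from~\eqref{eq:comparing lex}, and the running time is a direct application of \algradix.
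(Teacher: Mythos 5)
Your proposal is correct and matches the paper's own argument: the paper likewise represents each length as a pair $(i,o_j)$ (index of its common length, offset), sorts these pairs lexicographically with \algradix, and uses the bounds $o \leq 2|G|$, at most $\Ocomp(|G|)$ offsets, and $\Ocomp(n+m)$ common lengths to conclude the $\Ocomp(|G|)$ running time. No substantive difference to report.
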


It is left to bound the sorting time of all common lengths.
Due to Lemma~\ref{clm:cost of long sorting} this cost is $\Ocomp(\log p)$ for a common length $c$.
We redirect the $\log(p)$ cost towards the rule, in which $c$ was created.
We estimate the total such cost over the whole run of \algFCPM.

\begin{lemma}
\label{clm:cost from power to rule}
For a single rule, the cost redirected from common lengths towards this rule
during the whole run of \algFCPM{} is $\Ocomp(\log M)$.
\end{lemma}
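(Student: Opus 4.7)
The plan is to fix a rule $X_i$ and bound $\sum_p \log c^{(p)}$, where $c^{(1)}, c^{(2)}, \ldots$ enumerate, in order of creation, the common lengths produced inside the body of $X_i$ across all phases of \algFCPM.

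First I would characterise where common lengths can arise. By construction, a common length is created inside $X_i$ only during a block-compression subroutine, and only when the length of a maximal block of some letter $a$ inside $X_i$'s body is computed as the sum of two already non-zero common lengths. In the body $X_i \to u X_j v X_k w$ this forces the relevant block to straddle one of the at most four junctions between an explicit substring and a nonterminal image ($u{\mid}X_j$, $X_j{\mid}v$, $v{\mid}X_k$, $X_k{\mid}w$). Hence at most a constant number of common lengths are created in $X_i$ per phase, and I can split $c^{(1)}, c^{(2)}, \ldots$ into at most four subsequences, one per junction, and argue about each independently.

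For a fixed junction, consider the common lengths $c_1 < c_2 < \cdots < c_r$ created there across phases. I would show the geometric growth $c_{j+1} \geq 2 c_j$: after $c_j$ is produced the block it represents is immediately compressed into a single fresh letter whose original-input weight equals $c_j$ and which remains at this junction; any later common length $c_{j+1}$ produced at the same junction must contain this single letter and additionally a non-zero common-length contribution from the opposite side (otherwise $c_{j+1}$ would be represented by an offset alone, not as a common length), giving $c_{j+1} \geq c_j + c_j = 2 c_j$. Because blocks longer than $M$ are handled by the ``too long blocks'' treatment and are never sorted, every $c_j$ satisfies $c_j \leq M$, which already bounds $r = \Ocomp(\log M)$ per junction.

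The main obstacle is to strengthen this to the required $\Ocomp(\log M)$ bound on $\sum_j \log c_j$ per junction, since the naive estimate from doubling alone yields only $\Ocomp(\log^2 M)$. The extra ingredient is that each $c_{j+1}$ does not merely double $c_j$ but \emph{subsumes} the entire previous block in weight, so the increments $\log c_{j+1} - \log c_j$ telescope to $\log c_r \leq \log M$, while the ``starting'' value $\log c_1$ is itself $\Ocomp(1)$ since $c_1$ is the sum of two non-zero common lengths, each of size $\geq 1$. Establishing this absorption rigorously --- tracking, through the many intervening phases in which the rule is modified by popping letters into and out of $X_j$, $X_k$ and by pair and block compressions for other letters, that the fresh letter produced at a junction in phase $p$ really does survive and lie inside the block compressed at the same junction in any later phase --- is the technical crux of the argument. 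Summing over the four junctions then gives the claimed $\Ocomp(\log M)$ total redirected cost per rule.
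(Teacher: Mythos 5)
Your reduction of the problem to the common lengths created between the two nonterminals of a fixed rule is essentially the right first step (although you should also account for the common lengths created when a nonterminal itself defines a block, $\eval(X_j)=a^c$, and when a creation removes a nonterminal from the rule; both are easy, being chargeable once per nonterminal or rule, and the paper treats them separately). The gap is in the growth argument. First, the claimed inequality $c_{j+1}\geq 2c_j$ is not valid: successive common lengths at the same place in the rule count repetitions of \emph{different} letters. After the $j$-th such compression the whole block $\bigl(a^{(j)}\bigr)^{c_j}$ is replaced by a \emph{single} letter, and in any later phase that letter contributes exactly one symbol to the new block, not $c_j$ symbols; hence $c_{j+1}$ can be far smaller than $c_j$ (a block of length $100$ may be followed, phases later, by a block of a much heavier letter of length $5$). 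Second, even granting some growth, the step you call the crux is mis-stated: telescoping the increments $\log c_{j+1}-\log c_j$ bounds only $\log c_r-\log c_1\leq \log M$, which says nothing about the quantity you actually need, $\sum_j \log c_j$; combining a bound on the number of terms with $c_j\leq M$ cannot do better than $\Ocomp(\log^2 M)$, as you yourself observe.

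The correct mechanism, which your ``subsumes the previous block in weight'' remark gestures at but never formalises, is multiplicative and must be phrased in terms of $\weight$, not of the common lengths themselves: letting $a^{(j)}$ and $c_j$ be the letter and block length at the $j$-th such creation in the rule, one occurrence of $a^{(j+1)}$ derives the entire previous block $\bigl(a^{(j)}\bigr)^{c_j}$ (letters strictly between the two nonterminals are never deleted by \algFCPM, so the derivation survives the intermediate phases), giving $\weight(a^{(j+1)})\geq c_j\cdot \weight(a^{(j)})$ and hence $\weight(a^{(r)})\geq \prod_{j<r}c_j$. Since costs are only redirected for blocks of letters occurring in \pattern{} and of length at most $M$, we have $\weight(a^{(r)})\leq M$ and $c_r\leq M$, so $\sum_{j\leq r}\log c_j\leq 2\log M$ directly; no doubling argument and no bound on the number $r$ of creations is needed. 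This weight inequality is precisely the content of the paper's proof and is the piece your proposal leaves open (and, as stated via the $c_j$'s alone, would not deliver the bound).
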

\begin{proof}
If a common length is created because some $X_j$ defined a block of $a$,
this costs at most $\log M$ and happens once per nonterminal, so takes $\Ocomp(\log M)$ time.

The creation of the common length can remove a nonterminal from the rule,
which happens when $X_j$ in the rule defines a string in $a^*$.
Then the cost is at most $\log M$ and such cost can be charged twice to a rule,
as initially there are two nonterminals in the rule.
Hence, all such situations cost $\Ocomp(\log M)$ per rule.

Lastly, it can happen that no nonterminal is removed from the rule,
even though a new common length is created:
this happens when in the $X_i$'s rule $X_i \to u X_i v X_j w$
both the $a$-suffix of $\eval(X_j)$ and the $a$-prefix of $\eval(X_k)$ 
are represented using the common lengths of $a$, moreover, $v \in a^*$.

Consider all such creations of powers in a fixed rule.
Let the consecutive letters, whose blocks were compressed, be
$a^{(1)}$, $a^{(2)}$, \ldots, $a^{(\ell)}$ and the corresponding blocks' lengths
$c_1$, $c_2$, \ldots, $c_\ell$.
Lastly, the $c_\ell$ repetitions of $a^{(\ell)}$ are replaced by $a^{(\ell+1)}$.
Observe, that $a^{(i+1)}$ does not need to be $a^{(i)}_{c_i}$,
as there might have been some other compression in between.

Recall the definition of \emph{weight}: for a letter it is the length of the represented
string in the original instance.
Consider the weight of the strings between $X_j$ and $X_k$.
Clearly, after the $i$-th blocks compression it is exactly $c_i \cdot \weight(a^{(i)})$.
We claim that $\weight(a^{(i+1)}) \geq c_i \weight(a^{(i)})$.

\begin{clm}
It holds that $\weight(a^{(i+1)}) \geq c_i \weight(a^{(i)})$.
\end{clm}
\begin{proof}
Right after the $i$-th blocks compression the string between $X_j$ and $X_k$
is simply $a^{(i)}_{c_i}$.
After some operations, this string consists of $c_{i+1}$ letters $a^{(i+1)}$.
All operations in \algFCPM{} do not remove the symbols from the string between two nonterminals in a rule
(removing of leading $a_R$ or ending $a_L$ from \mytext{} cannot affect letters between nonterminals).
Recall that we can think of the recompression as building of an SLP for the \pattern{} and \mytext.
In particular, one of the letters $a^{(i+1)}$ derives $a^{(i)}_{c_i}$,
Since in the derivation the weight is preserved, it holds that
$$
\weight(a^{(i+1)}) \geq \weight(a^{(i)}_{c_i}) = c_i \cdot \weight(a^{(i)}) \enspace .
$$
Which shows the claim.
\qed
\end{proof}
Thus $\weight(a^{(\ell)}) \geq \prod_{i=1}^{\ell-1} c_i$.
Still, by our assumption we consider only the cost of letters that appear in the pattern.
Hence, $a^{(\ell)}$ (or some heavier letter) appears in the pattern,
and so $M \geq \weight(a^{(\ell)})$ (note that this argument does not apply to $a^{(\ell+1)}$,
as it does not necessarily appear in \pattern).
Hence,
$$
\log(M) \geq \log \left(\prod_{i=1}^{\ell-1} c_i \right) =  \sum_{i=1}^{\ell-1} \log c_i.
$$
Taking into the account that $c_{\ell} \leq M$
(by the assumption we do not sort blocks of length greater than $M$ so they do not redirect any costs towards a rule),
the whole charge of $\sum_{i=1}^{\ell} \log c_i$ to the single rule is
in fact at most $2 \log M$.
\qed
\end{proof}
Summing over the rules gives the total cost of $\Ocomp((n+m)\log M)$, as claimed.

\subsubsection*{Large numbers}
When we estimated the running time of the \algremblocks, then we assumed that numbers up to $M$ can be manipulated in constant time.
We show that in fact this bound holds even if this assumption is lifted.
The only difference is that we cannot compare numbers in constant time.
However, if they are written as bit-vectors, the cost of each operation on a number $\ell$ is $\Theta(\log \ell)$.
For common lengths of letters appearing in \pattern{} and that are at most $M$ we estimated in Lemma~\ref{clm:cost from power to rule}
that such cost summed over all phases sums up to $\Ocomp((n+m)\log M)$.
So it is left to consider the cost for letters that do not appear in \pattern{}
and the cost for common lengths larger than $M$ of letters appearing in \pattern.

Concerning the letters not appearing in the pattern, we do not calculate their lengths at all,
so there is no additional cost.
For a common length $c > M$ of a letter from \pattern{} we spend $\Ocomp(\log M)$ time to find out that $c>M$.
Observe that if this common length is created because some $X_i$ generates it or during its creation a nonterminal
is removed from the rule, then this is fine as this happens only once per nonterminal/rule.

In the other case this common length appears between nonterminals in a rule for $X_i$.
Afterwards between those nonterminals there is a letter not appearing in \pattern.
Furthermore, compression cannot change it: in each consecutive phase there will be such a letter between
those nonterminals.
So there can be no more creation of common lengths of letters appearing in strings between those two nonterminals.
So the $\Ocomp(\log M)$ cost is charged to this rule only once.

\subsubsection*{Acknowledgements}
I would like to thank Pawe\l{} Gawrychowski for introducing me to the topic,
for pointing out the relevant literature~\cite{AlstrupBrodalRauhe,LifshitsMatching,LohreySLP,MehlhornSU97}
and discussions~\cite{GawrySLPEquality}.

\end{document}